\newtheorem{theorem}{Theorem}
\newtheorem{definition}{Definition}
\newtheorem{conjecture}{Conjecture}
\newtheorem{lemma}{Lemma}
\newtheorem{corollary}{Corollary}
\begin{document}

\title{Nonadaptive fault-tolerant verification of quantum supremacy with noise}
\date{\today}
\author{Theodoros Kapourniotis}
\email{T.Kapourniotis@warwick.ac.uk}
\orcid{0000-0002-6885-5916}
\affiliation{Department of Physics, University of Warwick, Coventry CV4 7AL, United Kingdom}
\author{Animesh Datta}
\email{animesh.datta@warwick.ac.uk}
\orcid{0000-0003-4021-4655}
\affiliation{Department of Physics, University of Warwick, Coventry CV4 7AL, United Kingdom}

\maketitle

\begin{abstract}
  Quantum samplers are believed capable of sampling efficiently from distributions that are classically hard to sample from.
We consider a sampler inspired by the classical Ising model. It is nonadaptive and therefore experimentally amenable. Under a plausible conjecture, classical sampling upto additive errors from this model is known to be hard. We present a trap-based verification scheme for quantum supremacy that only requires the verifier to prepare single-qubit states. The verification is done on the same model as the original sampler, a square lattice, with only a constant overhead. We next revamp our verification scheme in two distinct ways using fault tolerance that preserves the nonadaptivity.  The first has a lower overhead based on error correction with the same threshold as universal quantum computation. The second has a higher overhead but an improved threshold (1.97\%) based on error detection. We show that classically sampling upto additive errors is likely hard in both these schemes. Our results are applicable to other sampling problems such as the Instantaneous Quantum Polynomial-time (IQP) computation model. They should also assist near-term attempts at experimentally demonstrating quantum supremacy and guide long-term ones.
\end{abstract}

\section{Introduction}
\label{sec:intro}

Considerable experimental efforts are being directed towards the realisation of quantum information processing technologies, with the eventual aim of constructing  universal quantum computers and simulators~\cite{RevModPhys.86.153, fukuhara2013quantum, lanting2014entanglement, kelly2015state}. One of the motivations for this exercise is their expected ability to simulate physical systems that are believed to be intractable classically~\cite{feynman1982simulating, buluta2009quantum}.  An example of such a system is a lattice of interacting spins in the presence of a magnetic field, represented by the classical Ising model \cite{ising1925beitrag}, which is a workhorse in condensed matter physics and statistical physics \cite{binder1986spin,savary2016quantum, lee1952statistical}.

\emph{Computing} the partition function of the Ising model in an external magnetic field is, however, \#P-hard even in two spatial dimensions with multiplicative approximations~\cite{goldberg2014complexity,fujii2013quantum}. \#P-hard problems are not expected to be solvable efficiently on a universal quantum computer. 
\textit{Sampling} up to multiplicative~\cite{8467641,bremner2010classical,farhi2017quantum, boixo2018characterizing} and additive~\cite{d71a5fed54df414db082ed64ed5c9ef7, aaronson10boson_sampling,gao_quantum_2016, miller2017quantum,morimae2017hardness,bermejo2017architectures} errors from certain distributions, such as from the partition function of the Ising model at imaginary temperatures, is possible using devices that do not require the full complement of DiVincenzo's criteria~\cite{divincenzo2000physical} for their implementation. Their scalable implementation is thus anticipated to be more achievable than a universal quantum computer's, providing tangible demonstrations of quantum supremacy sooner~\cite{spring2013boson, latmiral2016towards,boixo2018characterizing, bremner2016achieving, fujii2016noise}. This expectation is purchased at the price of sacrificing the full power of universal quantum computers, promising only to efficiently sample from certain distributions instead.
	This is the remit of `quantum supremacy'~\cite{preskill2012quantum} but the relative experimental ease introduces new theoretical challenges~\cite{Harrow2017}.

Given the significance of quantum supremacy in wider quantum information science, demonstrating it experimentally is vital. In the real world however, this faces two crucial challenges~\cite{AaronsonChen2016,Harrow2017} that arise from experimental imperfections and noise respectively.
The first is verifying that the output distribution is correct, or at least close to correct.
Since any real world experiment will be imperfect, establishing correctness is important, particularly so since sampling, unlike for instance integer factoring, is not in NP whose correctness can be checked efficiently.
This calls for a verification scheme with minimal trust assumptions, that indicates whether or not the output distribution is sufficiently close in total variation distance to the ideal one, and consequently whether or not an experiment has successfully demonstrated quantum supremacy. 
The hurdle is to ensure that the verifiable supremacy experiment no more demanding than the original non-verifiable experiment. 
This requires a redesign of verification schemes for universal quantum computing~\cite{fitzsimons12vubqc,
aharonov10abe_published,
broadbent15verification}.

The second challenge arises because all experiments are noisy.
Arguing for experimental quantum supremacy is incumbent on the sampling task being classically hard to simulate even in the presence of noise. 
As excessive noise can render a hard probability distribution easy to simulate, it is an important challenge to determine to what extent a sampling task remain hard to simulate classically, even in the presence of noise~\cite{Harrow2017}. The hurdle therefore is to retain hardness in the presence of noise. This is what fault-tolerance provides. Of course, this is only worthwhile for experiments if the fault-tolerance threshold for verifiable quantum supremacy is strictly easier to achieve than that of universal quantum computation. Furthermore, the overheads needed for fault-tolerant verifiable quantum supremacy must be less demanding than that for universal quantum computation. Fault-tolerant quantum supremacy is thus a compelling milestone on the way to a fault-tolerant universal quantum computer.

Among the numerous quantum supremacy candidates~\cite{Harrow2017} are IQP~\cite{bremner2010classical} and those allowing sampling from the distribution of partition functions of the classical Ising model at imaginary temperature - the Ising sampler~\cite{gao_quantum_2016}.
IQP and the Ising sampler exhibit in line with other models~\cite{ aaronson10boson_sampling, fujii2013quantum, farhi2017quantum, boixo2018characterizing}, under plausible conjectures, a highly unexpected collapse of the polynomial hierarchy to the third level occurs if a classical sampler can sample the partition function distribution upto additive errors. 
The Ising Sampler is a constant-depth version of IQP, with the additional favourable properties of translational-invariance and single-instance.  Single-instance means that the hardness results hold for a single fixed instance of the problem, relieving the burden of creating random instances such as Boson Sampling~\cite{aaronson10boson_sampling} and Random Sampling~\cite{boixo2018characterizing}. This simplifies both theoretical analysis and experimental implementation.
Although the partition function at imaginary temperatures may appear unphysical, it has deep connections to quantum complexity theory~\cite{goldberg2014complexity} as well as quantum statistical and condensed matter physics via analytic continuations~\cite{lee1952statistical}.
It is the combination of relative ease in theoretical analysis and experimental implementations allied with its strong connections to physics that motivates our choice of the Ising Sampler as the subject of this paper.

In this paper, we provide a fault-tolerant scheme for the verification of quantum supremacy in the Ising sampler. We achieve this by  amalgamating trap-based quantum verification  techniques~\cite{fitzsimons12vubqc} with recent results on demonstrating quantum supremacy by emulating fault tolerance via post-selection~\cite{fujii2016noise}. 
In response to the first challenge above, we present a nonadaptive verification scheme with exponentially low probability of failure and only linear complexity for the Ising sampler (prover). 
Our scheme applies to any untrusted prover with entangling and measuring capabilities, limited only by the laws of quantum mechanics, and requires the verifier to prepare random, single-qubit states with bounded local noise.
We first present a verification scheme that should aid demonstrating quantum supremacy with few qubits (Theorem~\ref{thm1}).
In response to the second challenge above, we prove fault-tolerant versions (Theorem~\ref{thm2}), one of which uses the idea of emulated fault tolerance by post-selection~\cite{fujii2016noise}. This `free' post-selection enables us to provide a fault-tolerant verification scheme with improved thresholds over universal quantum computing thresholds. An important property of our verification is that it itself is within the instantaneous model of quatum computing and therefore can be implemented in the same device as the Ising sampler with small modifications. Moreover, we prove quantum supremacy of this modified model (Theorem~\ref{thm3}).

\subsection{Comparison to prior work and structure}

We go beyond Ref.~\cite{fitzsimons12vubqc} in three ways, namely 
	(i) providing a new definition of verifiability over many i.i.d. repetitions of the protocol, based on the total variation distance between the output distribution and the correct one; 
	(ii) using the Raussendorf-Harrington-Goyal (RHG) strategy for fault tolerance instead of using it for probability amplification; and 
	(iii) developing a simpler construction for verification and computation on separate planar graphs.

 We go beyond Ref.~\cite{fujii2016noise} by (i) solving one of its open problems -- proving quantum supremacy of improved threshold fault-tolerant model up to \emph{additive} errors (Theorem~\ref{thm3}) and (ii) verifying quantum supremacy while maintaining its improved thresholds.

Our work is a significant improvement over the verification method of Ref.~\cite{gao_quantum_2016} on two counts. Firstly, our schemes require a linear overhead in the number of qubits as opposed to a quadratic one in Ref~\cite{gao_quantum_2016}. Secondly, our schemes (fault-tolerant ones) scale in the face of constant local noise while that of Ref~\cite{gao_quantum_2016} requires local noise polynomially small in the total number of qubits.  Our verification schemes apply to any nonadaptive sampler based on cluster states, however we will use the Ising sampler~\cite{gao_quantum_2016} as a particular example, thus keeping the benefit of the single instance property and experimental feasibility of this model.

Our work is structured as follows. Section~\ref{section2} defines verifiability based on the total variation distance of the output distribution. Section~\ref{sec:sampver} provides an overview of the Ising sampler placed in an a cryptographic setting of a prover and a verifier. Section~\ref{main:nft}   contains the first of our main results on the verification of the Ising sampler's output. We present a non-fault-tolerant verification scheme (Theorem~\ref{thm1}). Since it requires decreasing noise in preparation, entanglement and measurement with increasing system size, this is only viable in small-sized experimental demonstrations of quantum supremacy. In Section~\ref{sec:FTver} we present two fault-tolerant versions of the verification scheme (Theorem~\ref{thm2}) which are scalable when noise is below certain thresholds. Section~\ref{sec:supr} provides a result (Theorem~\ref{thm3}) on the quantum supremacy of the output distribution  of the noisy Ising sampler conditioned on syndrome measurements accepting, which is a generalisation of Ref.~\cite{fujii2016noise} for additive errors.

\section{Verifiability}
\label{section2}

We begin with our definition of a verification protocol.

\begin{definition}[Verification protocol]
\label{def_prot}
A verification protocol involves two parties - a trusted verifier and an untrusted prover who share a quantum and classical channel. 
The protocol takes as input a description of a computation and outputs a string and a bit. 
The bit determines if the string is accepted or rejected. 
\end{definition}

Establishing verifiability of a protocol consists of proving completeness and soundness.
A protocol is complete if, for an honest prover, the verifier outputs the correct result and accepts. 
A protocol is sound if, for any deviation of the prover, the probability that the verifier outputs an incorrect result and accepts is low. This deviation captures both a malevolent prover who tries to cheat and uncontrollable errors in the prover's device. 

Note that the above notion of verifiability relies on an output string being correct while sampling relies on distributions being close. We are therefore interested in the total variation distance between the experimental output distribution and the exact one~\cite{fitzsimons12vubqc}. We are furthermore interested in arguing for quantum supremacy based on the total variation distance between distributions. This requires us to go from a joint distribution of a string and a bit to a probability distribution over strings conditioned on a bit. To meet these demands we introduce the idea of a verification scheme, that uses a protocol as a black box and can call it repeatedly. We also assume that the repetitions of the protocols are independent and identically distributed (i.i.d.). However, there is no assumption on the behaviour of the system within the protocol, which means that an adversarial prover can cheat by correlating systems within the protocol.

\begin{definition}[Verification scheme]
\label{def_sch}
A verification scheme takes as input a verification protocol, $M \in \mathbb{N}, l \in [0,1]$  and outputs a string and a bit. The bit determines if the string is accepted or rejected.

A verification scheme works as follows. After running $M$ i.i.d repetitions of a verification protocol it outputs one of the $M$ output strings at random and accepts if at least a fraction $l$ of the protocols accept and rejects otherwise.
\end{definition}

Let  $q^{\text{nsy}}(\bm{x})$ be the experimental and $q^{\text{exc}}(\bm{x})$ be the exact distribution   of the output $\bm{x}$ of a sampler. We are interested in the quantity
\begin{equation}
\label{eq:var_def}
\text{var}\equiv \frac{1}{2} \sum_{\bm{x}} | q^{\text{exc}}(\bm{x}) - q^{\text{nsy}}(\bm{x}) |,
\end{equation} 
  where the sum is over all binary strings $\bm{x}$ of size $N$.

The following definition captures the notions of completeness and soundness at the level of a scheme for sampling problems.

\begin{definition}[Verifiability of a scheme]
\label{def_ver}
A scheme is verifiable if its output is

\begin{itemize}
\item $(\delta',\delta)-$complete: For an honest prover having only bounded noise, the scheme accepts at least with probability $\delta',$ and 
\begin{equation}
\mathrm{var} \leq 1 - \delta
\end{equation}
for the output string.

\item $(\varepsilon',\varepsilon)-$sound: For any, including adversarial, prover if the scheme accepts then
\begin{equation}
\mathrm{var} \leq \varepsilon
\end{equation}
with confidence $\varepsilon'.$

\end{itemize}
\end{definition}

We then consider the verifiability of a scheme for a sampler which has a designated output register we call the post-selection register. We consider probabilities $q^{\text{nsy}}(\bm{x}|y=0)$, where $y$ is the value of the post-selection register, for the experimental and $q^{\text{exc}}(\bm{x}|y=0)$ for the exact distribution of a sampler, conditioned on $y$ being zero. We are interested in the quantity
\begin{equation}
\text{var}^{\text{Post}}\equiv \frac{1}{2} \sum_{\bm{x}} | q^{\text{exc}}(\bm{x}|y=0) - q^{\text{nsy}}(\bm{x}|y=0) |.
\end{equation}
Again, the sums are over all binary strings $\bm{x}$ of size $N$. We adapt our definition to conditional probabilities as follows.

\begin{definition}[Verifiability of a scheme for post-selected distribution]
\label{def_ver_post}
A scheme is verifiable conditioned on the post-selection register being zero, if its output is

\begin{itemize}
\item $(\delta',\delta)-$complete: For an honest prover having only bounded noise, the scheme accepts at least with probability $\delta',$ and 
\begin{equation}
\mathrm{var^{Post}} \leq 1 - \delta
\end{equation}
for the the output string.

\item $(\varepsilon',\varepsilon)-$sound: For any, including adversarial, prover if the scheme accepts, then
\begin{equation}
\mathrm{var^{Post}} \leq \varepsilon
\end{equation}
with confidence $\varepsilon'.$

\end{itemize}
\end{definition}

\section{Quantum sampling in the verifier-prover setting}
\label{sec:sampver}

In the verifier-prover setting, the verifier can prepare bounded-error, single-qubit states and the prover implements the rest of the computation including the measurements, and returns the output samples to the verifier. The role of the verifier is to ascertain if the prover is acting honestly and executing the correct operation. The prover is, in general, malicious, trying to pass any tests designed by the verifier while deviating from the correct implementation at the same time. This malice may be intentional if the prover is trying to convince the verifier of its quantum power when it has none, or incidental if the prover possesses an imperfect quantum device prone to noise and errors. We assume that the prover's deviations are governed by quantum mechanics.

We begin by adapting the Ising spin model to a blind verifier-prover cryptographic setting~\cite{broadbent09ubqc}. Blindness, which ensures that the prover remains ignorant of the actual computation, is a necessary ingredient in our verification scheme.
Our Ising spin model consists of qubits in state $\ket{+}$ subject to nearest neighbour controlled $\pi$-phase rotations, denoted by $cZ$. All the qubits are measured simultaneously in a basis in the $xy$-plane of the Bloch sphere. The measurement outcome of classical bits is the output sample. This model corresponds to the well-studied measurement-based quantum computing (MBQC) model~\cite{raussendorf00mbqc} without the adaptive measurements. This last restriction makes the depth of the computation constant on the size of the input: one round of preparation, three rounds of entangling because the maximum degree of the graph is three and one round of non-adaptive measurements. This relaxes DiVincenzo's criteria of long decoherence times and makes this model non-universal for quantum computing.

In the particular Ising sampler presented in \cite{gao_quantum_2016} the structure of the graph state is fixed (Fig.~(\ref{brick1})), but its size scales with the width $m$ and depth $n.$ The measurement angles are also fixed to specific values from the set $\{-\pi/4,  -\pi/8, 0,\pi/8, \pi/4\}$. This choice of graph, which we call the `extended' brickwork state, and a fixed angle for each physical qubit has the following benefit: Each possible combination of measurement outcomes `chooses' a different angle for each qubit of the original brickwork state from the set $\{k \pi/4\}$, $k = \{0, \ldots, 7\}$. This effectively makes a single instance of the model a random quantum circuit generator, a property exploited to prove its hardness.

The correspondence to an Ising model comes from the locality of spin interactions and decomposing each MBQC measurement into a unitary rotation around the $z$-axis corresponding to an external magnetic field, followed by a Pauli $X$ measurement. The quantum state just before the Pauli $X$ measurement is given by the unitary evolution due to the Hamiltonian
\begin{equation}
\mathcal{H} = - \sum_{\langle i,j \rangle} J Z_i Z_j + \sum_i B_i Z_i 
\end{equation}
where $J$ is the interaction term, $B_i$ the local field strength and $Z_i$ the Pauli $Z$ operator on qubit $i$. 

The probability $q^{\text{exc}}(\boldsymbol{x})$ of measuring a bit string $\boldsymbol{x}$  corresponds to the partition function $\mathcal{Z}_{\boldsymbol{x}}$ of the Ising model with Hamiltonian $\mathcal{H}' \equiv \mathcal{H} + \frac{\pi}{2} \sum_i x_i Z_i$ and is given by 
\begin{equation}
q^{\text{exc}}(\boldsymbol{x}) =\frac{|\text{Tr} (e^{-i (\mathcal{H} + \frac{\pi}{2}\sum_i x_i Z_i)} )|^2}{2^{2N}} \equiv \frac{|\mathcal{Z}_{\boldsymbol{x}}|^2}{2^{2N}},
\end{equation}
where $N=mn$. The second term in $\mathcal{H}'$ comes from the measurement outcomes of the Pauli $X$ measurements, and the partition function is evaluated at an imaginary temperature $\beta = 1 / k_B T = i$.  

Testing the honesty of the prover, in our case the Ising sampler, requires the `blind' injection of certain `trap' qubits. To keep the identity of these trap qubits from the prover, the verifier applies some encoding on the original translationally-invariant Ising spin model, making the model translationally variant. Now both the participating qubits and the measurement angles on the graph state have a randomly chosen extra rotation according to the scheme described next.

Specifically, each qubit $i$ is individually prepared by the verifier in the state  $\ket{+_{\theta_i}},$ where $\theta_i$ is chosen uniformly at random from the set $A=\{0, \frac{\pi}{8},\frac{2\pi}{8}, \ldots,  \frac{15\pi}{8}\}.$ Instead of the prover measuring in fixed predetermined angles, as in the original Ising sampler, the verifier sends  encrypted angles to the prover: $\delta_i = \theta_i + (-1)^{r'_i}\phi_i + r_i \pi$ for $r_i, r'_i \in_R \{0,1\}$, where $\in_R$ stands for a uniform random selection. Rotations by $\theta_i$ on the qubit and on the angles mutually cancel and the classical information that the prover receives (containing the actual measurement angles $\phi_i$) is classically one time padded by $\theta_i$. The bits $r_i,r'_i$ provide some extra randomness to restrict the information the prover gets from the quantum state and can be corrected by classical post-processing of the sample. Our difference from Ref.~\cite{broadbent09ubqc} lies in the number of angles used in the set $A$, and comes from the fact that we use a different decomposition of the computation. We conjecture that this can be improved upon (See Sec. \ref{sec:Disc}).

\section{Non-fault-tolerant Verification of Ising Sampler} \label{main:nft}

The output of a quantum sampler must be classical for it to be comparable to that of a classical sampler, a prerequisite for demonstrating quantum supremacy. This allows us to simplify trap-based verification strategies for universal quantum computation~\cite{fitzsimons12vubqc,kapourniotis15linear_verification,kashefi2015optimised} to having disjointed computational resource and trap states - an idea also used in Ref.~\cite{kashefi2015optimised} and in circuit-based verification~\cite{broadbent15verification}.
This permits an exponentially small error in our estimation of the fidelity of the output using a square lattice. 
Finally, a trap-based technique instead of fidelity-witness based certification ones~\cite{hangleiter2016direct,gao_quantum_2016}, similar also to \cite{aolita2015reliable,cramer2010efficient}, enables us to reduce the resource complexity of the verification protocol from quadratic to linear. Other certification methods that require linear resources exist, by trusting the measuring devices~\cite{hayashi15,markham2018simple,
pallister2018optimal,ferracin18} instead of preparation. Linear resource complexity is minimal in this scenario because the verifier needs to receive at least one copy of the resource state to perform the computation.

\begin{figure}[t]
\centering
\resizebox{\linewidth}{!}{
\begin{tikzpicture}

\pgftransformcm{1}{0}{0.4}{0.5}{\pgfpoint{0cm}{0cm}};
\foreach \x in {10,12,14,16,18,20,22,24,26} {
	\foreach \y in {1,3,5,7} {            

	\node at (\x,\y) [circle,draw=black,thick,fill=white] {};

	}
}

\draw [very thick] (10,5) -- (26,5);
\draw [very thick] (14,5) -- (14,7);
\draw [very thick] (10,7) -- (26,7);
\draw [very thick] (18,5) -- (18,7);

\draw [very thick] (10,1) -- (26,1);
\draw [very thick] (14,1) -- (14,3);
\draw [very thick] (10,3) -- (26,3);
\draw [very thick] (18,1) -- (18,3);

\draw [very thick] (22,3) -- (22,5);
\draw [very thick] (26,3) -- (26,5);

\end{tikzpicture}
}

\vspace{2ex}

$(i)$

\vspace{5ex}

\resizebox{\linewidth}{!}{
\begin{tikzpicture}

\pgftransformcm{1}{0}{0.4}{0.5}{\pgfpoint{0cm}{0cm}};

\node at (12,3) {\Huge =};

\node at (12,1) {\Huge =};

\foreach \x in {14,16,18,20,22,24,26} {
	\foreach \y in {1,3} {            

	\node at (\x,\y) [circle,draw=black,thick,fill=black] {};

	}
}

\node at (10,3) [circle,draw=black,thick,fill=white] {};

\node at (10,1) [circle,draw=black,thick,fill=white] {};

\draw [very thick] (13,3) -- (27,3);

\draw [very thick] (13,1) -- (27,1);

\draw [very thick] (10,1) -- (10,2);
\draw [very thick] (14,1) -- (14,2);

\draw [very thick] (9,1) -- (11,1);
\draw [very thick] (9,3) -- (11,3);

\node at (14,4) { \huge $\pi/8$};
\node at (16,4) { \huge $0$};
\node at (18,4) { \huge $-\pi/4$};
\node at (20,4) { \huge $0$};
\node at (22,4) { \huge $\pi/4$};
\node at (24,4) { \huge $0$};
\node at (26,4) { \huge $-\pi/8$};

\end{tikzpicture}

}

\vspace{2ex}

$(ii)$

\caption{The original brickwork state $(i)$ is a universal resource for MBQC under $xy$-plane measurements, where white vertices represent qubits and the edges represent $cZ$ operations. The `extended' brickwork state $(ii)$ is used in the original Ising sampler~\cite{gao_quantum_2016}, where each white vertex is replaced by $7$ physical qubits (black vertices). The measurement angle for each qubit is fixed to the value written above each vertex. There is no adaptation of the angles based on previous measurement outcomes as in universal MBQC.}
\label{brick1}
\end{figure}
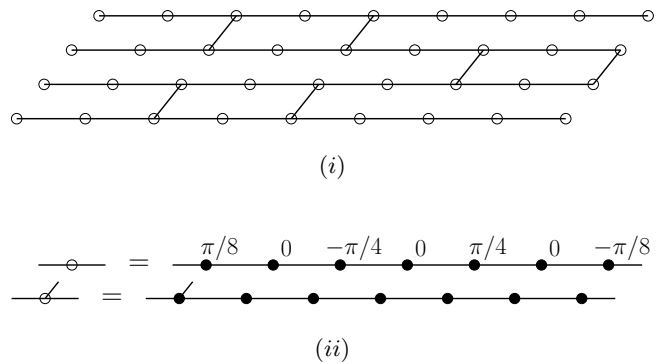

Our verification protocol relies on judiciously selecting the measurement angles and placing \emph{dummy} qubits prepared in the state $\ket{0}$, which, together with $xy$-plane measurements, allows us to \emph{carve} different types of graphs from a square lattice graph, as shown in Fig.~(\ref{protocol_pic}). Placing one dummy qubit between any two other qubits prevents the prover's entangling operators to have any entangling effect between the participating qubits, so the prover can apply exactly the same operations that produce the square lattice but create a different graph state. The graphs carved out are  the `extended' brickwork state (Fig.~(\ref{brick1})) and two other graphs containing special `trap' qubits in the state $\ket{+}.$ The extended brickwork state is used to run the Ising sampler. The traps in the trap graphs are measured in the same basis as prepared, yielding a deterministic check on the prover. Two different types of the trap graphs are needed to enable placing a trap at any position in the graph with equal probability.
The order of the graphs is chosen at random and the whole protocol implemented blindly to thwart the prover from distinguishing trap and target computation qubits.

A sketch of our Protocol 1 appears in Fig.~(\ref{protocolss}), and the details in Sec.~\ref{NFTverif}.
The protocol has constant time complexity of the quantum operations and needs $O(N)$ qubits, where $N$ is the number of the qubits of the Ising sampler.

Noise considered in all our protocols for an honest prover is local, unital and bounded. It applies after every elementary operation (preparation, entangling and measurement) $j$ and is expressed as a CPTP superoperator:

\begin{equation}
\mathcal{N}_j = (1-\epsilon_{V,P}
) \mathcal{I} + \mathcal{E}_j
\label{noise_eq}
\end{equation}
where $\epsilon_{V,P}= \epsilon_{V}$ and $||\mathcal{E}_j||_{\diamond}=\epsilon_V$ for the noise of the verifier (preparation noise) and  $\epsilon_{V,P}= \epsilon_{P}$ and $||\mathcal{E}_j||_{\diamond}=\epsilon_P$ for the noise of the honest prover (entangling and measurement noise).

\begin{figure}[t]
\centering
\begin{tikzpicture}
	\node [draw,thick,align=left,text width = 7.5cm] at (0,0) {
		1. Verifier selects a random ordering of \( 2 \kappa + 1 \) graphs, one for target computation and \( \kappa \) from each type of trap graphs.\\
		\strut\\
		2. Verifier prepares, one by one, the qubits needed for the blind implementation of the \( 2 \kappa + 1 \) cluster states and sends them to the prover.\\
		\strut\\
		3. Verifier sends the encrypted measurement angles to the prover.\\
		\strut\\
		4. Prover entangles all received qubits in the \( 2 \kappa + 1 \) cluster states.\\
		\strut\\
		5. Prover measures all qubits simultaneously in the instructed angles and returns the results.\\
		\strut\\
		6. Verifier decrypts the outputs and accepts if all trap results are correct, otherwise rejects.\\
	};
	\node at ($(current bounding box.south west)+(-0.1,-0.1)$) {};
	\node at ($(current bounding box.north east)+(+0.1,+0.1)$) {};
\end{tikzpicture}

\caption{Nonadaptive verification protocol}
\label{protocolss}
\end{figure}

\vspace{0.7cm}
\begin{theorem}[Non-fault tolerance verification scheme]
\label{thm1}
There exists a verification scheme with Protocol 1, $M = \log(1/\beta)/(2 \kappa^2 N^2 (\epsilon_V + \epsilon_P)^2)$, $l=(1-\kappa N (2 \epsilon_V + 4  \epsilon_P))$ that according to Def.~(\ref{def_ver}) is
 $$\left( 1-\beta,1-\sqrt{N (\epsilon_V + 3 \epsilon_P)}\right)-complete$$ and 
 $$\left(1-\beta,\sqrt{\kappa N (3 \epsilon_V + 5 \epsilon_P)  + \Delta_{\kappa}}\right)-sound,$$
 where $\Delta_{\kappa}= \kappa! (\kappa+1)!/(2\kappa+1)!$.
\end{theorem}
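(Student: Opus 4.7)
The plan is to decompose the theorem into three parts: completeness of the acceptance rate, completeness of the output variation, and soundness, combining standard noise-propagation accounting, Hoeffding concentration over the $M$ i.i.d.\ protocol runs, and the blindness of the $\theta_i$-encoding.

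For the completeness accept probability, I would bound the probability $p_a^{(1)}$ that a single run of Protocol 1 accepts when the prover is honest. A run accepts iff all $2\kappa$ trap graphs return their deterministic $\ket{+}$-measurement outcomes. Each trap graph carries $O(N)$ noisy elementary operations whose CPTP errors, of diamond norm bounded by $\epsilon_V$ or $\epsilon_P$ via Eq.~(\ref{noise_eq}), combine by the triangle inequality on the trace distance to give a per-graph failure probability of at most $N(\epsilon_V + O(\epsilon_P))$. A union bound over the $2\kappa$ trap graphs yields $p_a^{(1)} \geq 1 - \kappa N(\epsilon_V + 3\epsilon_P)$, strictly above the threshold $l = 1 - \kappa N(2\epsilon_V + 4\epsilon_P)$ with a gap of order $\kappa N(\epsilon_V+\epsilon_P)$. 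Hoeffding's inequality on the $M$ i.i.d.\ Bernoulli acceptance indicators then bounds the probability that fewer than fraction $l$ accept by $\exp\!\bigl(-2M(\kappa N(\epsilon_V+\epsilon_P))^2\bigr) \leq \beta$ for the chosen $M$.

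For the completeness variation bound, the same noise-propagation counting applied to the target graph directly yields $\mathrm{var} \leq N(\epsilon_V+3\epsilon_P)$, from which the claimed $\sqrt{N(\epsilon_V+3\epsilon_P)}$ follows either by passing through a fidelity bound (Fuchs--van de Graaf) or by the trivial step $x \leq \sqrt{x}$ for $x \in [0,1]$. For soundness, I would invoke blindness: the encoded qubits together with the classical angles $\delta_i$ are one-time-padded by the uniform $\theta_i \in A$ and the fair bits $r_i,r_i'$, so the $2\kappa+1$ graphs are information-theoretically indistinguishable to the prover and the worst-case deviation collapses to a single CPTP map $\mathcal{E}$ on the full register. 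Expanding $\mathcal{E}$ in a Pauli basis, any Pauli string touching a trap graph flips the relevant deterministic trap outcome with probability exactly $1/2$ (after averaging over $r_i,r_i'$) and is therefore detected; a Pauli string entirely within the target graph escapes detection, but by the uniform random ordering of the $2\kappa+1$ graphs that event has probability exactly $\Delta_\kappa = \kappa!(\kappa+1)!/(2\kappa+1)! = 1/\binom{2\kappa+1}{\kappa+1}$, with both trap types treated symmetrically. Bounding the conditional variation of accepted runs by $\kappa N(3\epsilon_V+5\epsilon_P) + \Delta_\kappa$ and Hoeffding-concentrating the accept indicator over the $M$ runs yields the stated $\sqrt{\kappa N(3\epsilon_V+5\epsilon_P)+\Delta_\kappa}$ with confidence $1-\beta$.

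The main obstacle is soundness, specifically the combinatorial lemma pinning the target-only Pauli-error probability to exactly $\Delta_\kappa$ and the quantitative argument that all remaining error weight is detected with probability at least $1/2$ under the $r_i,r_i'$ randomisation. One must rule out adversarial strategies that spread correlated errors across multiple graphs so as to simultaneously avoid the $2\kappa$ traps yet corrupt the target, which requires the two trap types and the random ordering to interact in exactly the right way; the naive bound $1/(2\kappa+1)$ is too loose for $\kappa \geq 2$. The remaining ingredients --- triangle-inequality noise propagation, Hoeffding on Bernoulli indicators, and the Fuchs--van de Graaf conversion --- are routine.
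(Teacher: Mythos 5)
Your completeness arguments (both the acceptance probability and the variation bound for the honest prover) and the overall architecture of your soundness argument --- blindness plus averaging over $r_i, r_i'$ to reduce the deviation to a convex combination of Pauli strings, followed by a combinatorial bound and Hoeffding --- match the paper's proof. The gap is that the combinatorial heart of soundness is not actually supplied, and the two quantitative claims you do make in its place are incorrect. First, a Pauli string touching a trap graph is not ``detected with probability exactly $1/2$'': after the twirl the relevant quantity is $|\bra{+}P_{k|i}\ket{+}|^2$, so a $Z$ (or $Y$) component landing on a trap qubit is detected with certainty, an $X$ component is never detected, and --- crucially --- a component landing on a \emph{dummy} position or on the non-trap half of a trap graph's positions is never detected at all. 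This last point is why the construction needs two trap types and why the combinatorics is delicate. Second, the event that a fixed-position Pauli string lies entirely within the target graph has probability $1/(2\kappa+1)$ under the random ordering, not $\Delta_\kappa$; your identification of $\Delta_\kappa$ with that event is wrong.

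What the paper actually proves is the uniform bound $F_t^2 - F_c^2 \leq \Delta_\kappa$ over all Pauli attacks, where $F_t^2$ is the trap acceptance probability and $F_c^2$ a lower bound on the target fidelity, both averaged over the hidden randomness. The extremal attack is not confined to the target: it places one (Pauli-$Z$-type) hit in \emph{every} one of the $2\kappa+1$ graphs, with $\kappa+1$ hits on even-parity positions and $\kappa$ on odd-parity positions. Such an attack corrupts the target with certainty (so $F_c^2$ is only lower-bounded by $0$) yet evades all $2\kappa$ traps exactly when the random ordering and the even/odd trap-type assignment happen to avoid every hit, which occurs with probability $(\kappa+1)/\bigl(\binom{2\kappa+1}{\kappa}(\kappa+1)\bigr) = \Delta_\kappa$. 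For attacks touching only $\lambda \leq 2\kappa$ rounds, with $\xi$ even-type and $\lambda-\xi$ odd-type hits, the paper shows by an explicit binomial-coefficient estimate that $F_{t,k}^2 - F_{c,k}^2 \leq 0$, so these are never the worst case. You correctly flag this lemma as the main obstacle, but flagging it is not proving it; without it the stated soundness constant $\Delta_\kappa$ is unsupported, and the intermediate claims you substitute for it would, if taken literally, lead to a different (and incorrect) bound.
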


In the above, $\epsilon_V$ and $\epsilon_P$ are fixed by the experimental capability, while completeness and soundness parameters are set by the conjectures invoked to argue for quantum supremacy, as obtained in Eqn.~(\ref{eq:epsi}).

 A proof sketch appears in Section \ref{sec:sound} and a full proof in Appendix \ref{app:Thm1}.

Using our verifiable quantum sampler to demonstrate quantum supremacy is underwritten by results which show that approximating the Ising sampler upto constant total variation distance is hard classically, subject to an average case hardness and an anti-concentration conjecture, presented in Section~\ref{sec:supr}, similarly to the original model~\cite{gao_quantum_2016}.

Both $\kappa N \epsilon_V$ and $\kappa N \epsilon_P$ must be constant for the total variation distances to be constant, plus exponentially decaying in $\kappa$ term $\Delta_{\kappa}$ in soundness,  in Theorem~\ref{thm1}. To achieve this we require local errors $\epsilon_V$ and $\epsilon_P$ to decrease linearly with the number of qubits and $\kappa$. This is only realistic in quantum supremacy experiments involving a few qubits. 

To overcome this restriction, we consider  fault-tolerant versions of our verification protocol in Section~\ref{sec:FTver}.

\subsection{Protocol}
\label{NFTverif}

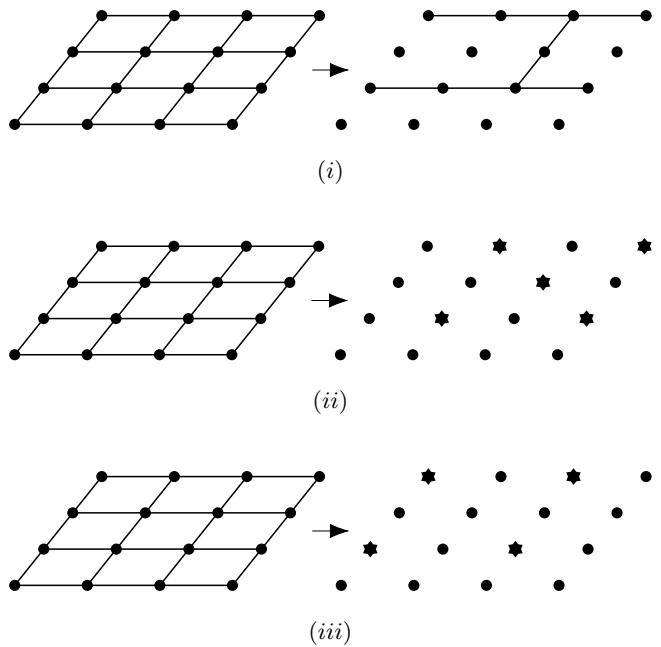
\begin{figure}[t]
\centering

\resizebox{\linewidth}{!}{
\begin{tikzpicture}
\pgftransformcm{1}{0}{0.4}{0.5}{\pgfpoint{0cm}{0cm}};
\foreach \x in {1,3,5,7} {
	\draw[black,very thick] (1,\x) -- (7,\x);
		\draw[black,very thick] (\x,1) -- (\x,7);
\foreach \y in {1,3,5,7} {             \node at (\x,\y) [circle,fill=black] {};
            }}
\draw[-{Latex[scale=3.0]}] (8,4) -- (9,4);

\foreach \x in {10,12,14,16} {
	\foreach \y in {1,3,5,7} {            

	\node at (\x,\y) [circle,fill=black] {};

	}
}

\draw [very thick] (10,3) -- (16,3);
\draw [very thick] (14,3) -- (14,7);
\draw [very thick] (10,7) -- (16,7);

\end{tikzpicture}
}

\vspace{2ex}

$(i)$

\vspace{5ex}

\resizebox{\linewidth}{!}{
\begin{tikzpicture}
\pgftransformcm{1}{0}{0.4}{0.5}{\pgfpoint{0cm}{0cm}};
\foreach \x in {1,3,5,7} {
	\draw[black,very thick] (1,\x) -- (7,\x);
		\draw[black,very thick] (\x,1) -- (\x,7);
            \foreach \y in {1,3,5,7} {             \node at (\x,\y) [circle,fill=black] {};
            }}
   \draw[-{Latex[scale=3.0]}] (8,4) -- (9,4);

\foreach \x in {10,12,14,16} {
            \foreach \y in {3,7} {            

\pgfmathparse{Mod( (\y - 1)/2 + (\x - 10)/2 ,2)==1?1:0}

\ifnum \pgfmathresult   = 1 

\node at (\x,\y) [circle,fill=black] {};

\else \node at (\x,\y) [star,star points=6,fill=black]{};

\fi

	}
}

\foreach \x in {10,12,14,16} {

\node at (\x,1) [circle,fill=black] {};

}

\foreach \x in {10,12,16} {
\node at (\x,5) [circle,fill=black] {};
}

\node at (14,5) [star,star points=6,fill=black] {};

\end{tikzpicture}
}

\vspace{2ex}

$(ii)$

\vspace{5ex}

\resizebox{\linewidth}{!}{
\begin{tikzpicture}
\pgftransformcm{1}{0}{0.4}{0.5}{\pgfpoint{0cm}{0cm}};
\foreach \x in {1,3,5,7} {
	\draw[black,very thick] (1,\x) -- (7,\x);
		\draw[black,very thick] (\x,1) -- (\x,7);
            \foreach \y in {1,3,5,7} {             \node at (\x,\y) [circle,fill=black] {};
            }}
   \draw[-{Latex[scale=3.0]}] (8,4) -- (9,4);
\foreach \x in {10,12,14,16} {
            \foreach \y in {3,7} {            

\pgfmathparse{Mod( (\y - 1)/2 + (\x - 10)/2 ,2)==1?1:0}

\ifnum \pgfmathresult   = 0 

\node at (\x,\y) [circle,fill=black] {};

\else \node at (\x,\y) [star,star points=6,fill=black]{};

\fi

	}
}

\foreach \x in {10,12,14,16} {

\node at (\x,1) [circle,fill=black] {};
\node at (\x,5) [circle,fill=black] {};

}

\end{tikzpicture}
}

\vspace{2ex}

$(iii)$

\caption{Verifier chooses a random ordering of $2\kappa+1$ graph states - the computational graph on the right of figure $(i)$; $\kappa$ identical trap graphs on the right of figure $(ii)$ which have traps (starred nodes) on the even parity positions of the sub-graph that corresponds to the computational graph; and $\kappa$ identical graphs on the right of figure $(iii)$ which have traps on the odd parity positions of the sub-graph that corresponds to the computational graph. All of these graphs can be generated from a square lattice (on the left) by replacing $\ket{+}$ qubits with $\ket{0}$ at the positions (isolated dot nodes) we do not want {entangled} with their neighbours when $cZ$ is applied. Further detail on the carving procedure, which can be made blind (Section \ref{sec:sampver}), is provided in Appendix \ref{app:bridge}. }
\label{protocol_pic}
\end{figure}

The following is a full description of the non-fault tolerant verification protocol:

\vspace{0.3cm}

\textbf{Protocol 1:}

\begin{enumerate}
\item Verifier selects a random ordering of $2 \kappa+1$ graphs, one for computation and $2\kappa$ for testing, as in Fig.~(\ref{protocol_pic}).
This fixes the position of computational basis qubits called the \emph{dummy} qubits (see Appendix \ref{app:bridge}) and the measurement angles $\{\phi_i\}_{i=1}^N$, where $N=m  \times  n$ is the total number of qubits, so that

(a) in the target computation graph we carve from the square lattice a universal resource state, the `extended' brickwork state of Figure~(\ref{brick1}) and fix the rest of the measurement angles according to the Ising sampler model;

(b) in the trap computation graphs the dummy qubits are used to isolate the traps, which are placed at fixed positions. For half of the graphs in positions with odd parity that correspond to non-dummy qubits in the computational graph and in the other half in positions with even parity that correspond to non-dummies in the computational graph. The traps are measured with angles $\phi=0$ so that the measurement is deterministic. Crucially, the trap graphs do not contain any `bridge' operations so there is no need for adaptive corrections.
 
\item Verifier prepares the qubits that compose the cluster state one by one and sends them to the prover.

(a) The dummy qubits are prepared in $\{\ket{d_i}:d_i \in_R \{0,1\}\}.$ 

(b) The rest of the qubits are prepared in $Z^{d_{k \sim j}}\ket{+_{\theta_j}}= \frac{1}{\sqrt{2}} ( \ket{0} + e^{i (\theta_j + d_{k \sim j} \pi)} \ket{1})$, where $\theta_j$ is chosen uniformly at random from the set $A=\{0, \frac{\pi}{8},\frac{2\pi}{8}, \ldots,  \frac{15\pi}{8}\}\}$ and $d_{k \sim j}$ is the parity of the $d$'s of all neighbours of $j$. Notice that the set $A$ is different from the original trap-based protocol of Ref. \cite{fitzsimons12vubqc}.
\item Verifier sends the encrypted computational measurement angles to the prover: $\delta_i = \theta_i + (-1)^{r'_i}\phi_i + r_i \pi$ for $r_i, r'_i \in_R \{0,1\}$. Parameters $r_i, r'_i$ create a classical one-time pad on measurement outputs.
\item Prover entangles all received qubits according to the $2\kappa+1$ cluster states, each of dimension $n \times m$, by applying $cZ$ gates for every edge of each cluster.
\item Prover measures all qubits simultaneously in angles $\delta_i$ and returns the measurement results to the verifier.
\item Verifier applies a bit flip to the output bit $i$ when $r_i =1$ and to its (non-dummy) neighbours when $r'_i=1$ to undo the classical one time pad. The output string $\boldsymbol{x}$ of the measurements  of the target computation is the output of the protocol. The verifier sets an extra bit to accept if all the traps give the correct result (decoded measurement result $0$).
\end{enumerate}

\vspace{0.3cm}

A variation of the protocol can have the prover to entangle all the graphs directly in the `extended brickwork state' instead of the square lattice state. This leaks no extra more information to the prover from what is publicly known. However, we seek  a more generic prover and keep the protocol as presented.

The resource count of the protocol is as follows. The number of qubits prepared by the verifier and sent to the prover one at a time is $(2\kappa+1)N$ where $N$ is the original size of the computation. The classical information exchanged is linear in $N$ and can be sent in one round. Similarly the classical outcomes of the  measurements can be sent in one go. The prover is  required to entangle all neighbouring qubits in a square lattice and apply single qubit measurements in the $xy$-plane.

\subsection{Proof of Completeness}
\label{sec:correct}

\begin{figure}[t]
\centering
\begin{adjustbox}{max width=\columnwidth}
\begin{tikzpicture}[control/.style={circle,fill,inner sep=0pt,minimum size=0.2cm},target/.style={draw,circle,cross}]
	\foreach \i/\ilbl in {0.8/1,1.6/2,2.4/3,3.2/4,4/5,4.8/6,7/7,8/8} {
		\coordinate (x\ilbl) at (\i,0);
	}

	\foreach \y/\ylbl in {0/1,-3/2,-4/3,-5/4,-8/5,-9/6,-11/7,-12/8}{
		\coordinate (mode-\ylbl) at (0,\y);
	}
	\node (bigU) [draw,minimum height=13cm,minimum width=1.0cm] at ($($(mode-1)!0.5!(mode-8)$)+(6.2,0)$) {$U_B$};

	\node [left] at (mode-1) {$\ket{ +_{\theta_1} }$};
	\node [left] at (mode-2) {$\ket{d_{i-1} } $};
	\node [left] at (mode-3) {$Z^{d_{i+1}} Z^{d_{i-1}} \ket{ +_{\theta_{i}} }$};
	\node [left] at (mode-4) {$\ket{d_{i+1} } $};
	\node [left] at (mode-5) {$\ket{+_{\theta_{(2\kappa+1)N}}}$};
	\node [left] at (mode-6) {$\ket{ \delta_1 }$};
	\node [left] at (mode-7) {$\ket{ \delta_{(2\kappa+1)N} }$};
	\node [left] at (mode-8) {$\ket{0}^{\otimes|B|}$};

	\draw (x1 |- mode-2) -- (x1 |- mode-4);
	\foreach \start/\sign in {1/-,2/+,4/-,5/+} {
		\draw (x1 |- mode-\start) --++(0,\sign 0.4);
		\draw [dashed] ($(x1 |- mode-\start)+(0,\sign 0.4)$) --++ (0,\sign 0.8);
	}

	\foreach \m in {1,2,3,4,5} {
		\node [control] at (x1 |- mode-\m) {};
	}
	\foreach \m in {6,7,8} {
		\node at (x1 |- mode-\m) {\large /};
	}

	\foreach \r [evaluate=\r as \s  using {int(\r+1)}] in {1,2,3,4,5} {
		\node (R-\r) [draw,fill=white,rectangle] at (x\s |- mode-\r) {$R_z$};
		\draw (mode-\r) -- (R-\r.west);
		\draw (R-\r.east) -- (R-\r -| bigU.west);
	}

	\draw (R-1.south) -- (R-1.south |- mode-6) node [control] {};
	\draw (R-5.south) -- (R-5.south |- mode-7) node [control] {};
	\foreach \r in {2,3,4} {
		\draw (R-\r.south) --++(0,-0.4);
		\draw [dashed] ($(R-\r.south)+(0,-0.4)$) --++ (0,-0.8);
	}

	\foreach \y in {1,2,3,4,5} {
		\node [draw,rounded rectangle,rounded rectangle left arc=none,minimum width=1cm] (det-\y) at ( 7.8,0 |- mode-\y) {X};
		\draw ( mode-\y -| bigU.east) -- (det-\y.west);
		\draw (det-\y.7) --++ (0.3,0);
		\draw (det-\y.-7) --++ (0.3,0);
	}
	\foreach \y in {6,7,8} {
		\draw ( mode-\y) -- (mode-\y -| bigU.west);
		\draw ( mode-\y -| bigU.east) -- ( 7.8,0 |- mode-\y);
	}
	
\end{tikzpicture}
\end{adjustbox}

\caption{The inputs, other than prover's private system $\ket{0}^{\otimes|B|},$ are the qubits prepared by the verifier in steps 1-3 of Protocol 1, for both target and trap rounds. We represent the prover's operation (steps 4-6 in Protocol 1) upon their receipt. Qubit at position $i$ is a trap qubit surrounded by dummy qubits at positions $i-1$ and $i+1$.   
$U_B$ is an arbitrary unitary deviation on the prover's system. When prover is honest $U_B=I$. }
\label{completeness1}
\end{figure}
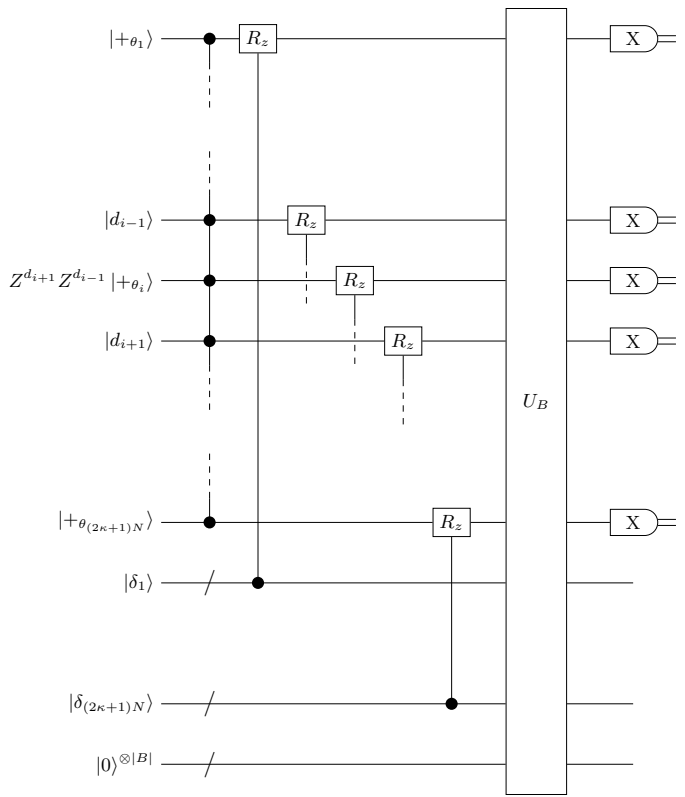

To prove completeness we assume that the prover honestly follows the prescribed steps (up to bounded noise). Before considering the noisy case, we show that for the noiseless prover, the fidelity of the target computation and the trap computation to the correct ones are both unity.

 We begin with a circuit diagram of the operations on the prover's side in Fig.~(\ref{completeness1}).
Any measurement by angle $\{\delta_i\}$ for the prover is  mathematically decomposed into a $z$-rotation ($R_z$) controlled by $\delta_i$ and a Pauli $X$ measurement. Without loss of generality, since everything before the measurements is unitary we can assume that even a dishonest prover will apply the correct unitary operators and then chose his deviation $U_B$ on all systems, including his private qubits $\ket{0}^{\otimes |B|}$. Since we are proving completeness in this section, we assume $U_B=I$. The measurement angles $\delta_i$ received by the prover are represented as computational basis multi-qubit states $\ket{\delta_i}$.

The circuit in Fig.~(\ref{completeness1}) can be simplified in a number of ways, resulting in the circuit of Fig.~(\ref{completeness2}). The $cZ$ gates between the dummy qubits and their neighbours cancel the Pauli $Z$ pre-rotation on the neighbours. Also, we can write explicitly the rotation angles on each of the controlled $R_z$ gates and remove the control lines.

\begin{figure}[t]
\centering
\begin{adjustbox}{max width=\columnwidth}
\begin{tikzpicture}[control/.style={circle,fill,inner sep=0pt,minimum size=0.2cm},target/.style={draw,circle,cross}]
	\foreach \y/\ylbl in {0/1,-3/2,-4/3,-5/4,-8/5,-9/6,-11/7,-12/8}{
		\coordinate (mode-\ylbl) at (0,\y);
	}
	\node (bigU) [draw,minimum height=13cm,minimum width=1.0cm] at ($($(mode-1)!0.5!(mode-8)$)+(7.5,0)$) {$U_B$};

	\node [left] at (mode-1) {$\ket{ +_{\theta_1} }$};
	\node [left] at (mode-2) {$\ket{d_{i-1} }$};
	\node [left] at (mode-3) {$\ket{+_{\theta_{i}}}$};
	\node [left] at (mode-4) {$\ket{d_{i+1} }$};
	\node [left] at (mode-5) {$\ket{+_{\theta_{N'}}}$};
	\node [left] at (mode-6) {$\ket{ \delta_1 }$};
	\node [left] at (mode-7) {$\ket{ \delta_{N'} }$};
	\node [left] at (mode-8) {$\ket{0}^{\otimes|B|}$};

	\foreach \m in {1,5} {
		\node [control] at (x1 |- mode-\m) {};
	}
	\foreach \m in {6,7,8} {
		\node at (x1 |- mode-\m) {\large /};
	}

	\foreach \start/\sign in {1/-,5/+} {
		\draw (x1 |- mode-\start) --++(0,\sign 0.4);
		\draw [dashed] ($(x1 |- mode-\start)+(0,\sign 0.4)$) --++ (0,\sign 0.8);
	}

	\coordinate (xop) at (4.0,0);
	\node (op-1) [draw,rectangle] at ( xop |- mode-1) {$R_z (\theta_1 + (-1)^{r_1'} \phi_1 + r_1\pi )$};
	\node (op-2) [draw,rectangle] at ( xop |- mode-2) {$R_z (\theta_{i-1}  + r_{i-1} \pi )$};
	\node (op-3) [draw,rectangle] at ( xop |- mode-3) {$R_z( \theta_{i} + r_i \pi )$};
	\node (op-4) [draw,rectangle] at ( xop |- mode-4) {$R_z (\theta_{i+1}  + r_{i+1} \pi )$};
	\node (op-5) [draw,rectangle] at ( xop |- mode-5) {$R_z (\theta_{N'} + (-1)^{r_{N'}'} \phi_{N'} + r_{N'}\pi )$};

	\foreach \y in {1,2,3,4,5} {
		\draw (mode-\y) -- (op-\y.west);
		\draw (op-\y.east) -- (op-\y -| bigU.west);
		\node [draw,rounded rectangle,rounded rectangle left arc=none,minimum width=1cm] (det-\y) at ( 9.1,0 |- mode-\y) {X};
		\draw ( mode-\y -| bigU.east) -- (det-\y.west);
		\draw (det-\y.7) --++ (0.3,0);
		\draw (det-\y.-7) --++ (0.3,0);
	}
	\foreach \y in {6,7,8} {
		\draw ( mode-\y) -- (mode-\y -| bigU.west);
		\draw ( mode-\y -| bigU.east) -- ( 9.1,0 |- mode-\y);
	}
\end{tikzpicture}
\end{adjustbox}

\caption{When applying the corresponding entangling operations in Fig.~(\ref{completeness1}), dummy qubits at positions $i-1$ and $i+1$ have the effect of isolating their neighbours and cancelling the neighbours' pre-rotations that depend on parameters $d_{i-1},d_{i+1}$ (here the only neighbour depicted is the trap qubit at position $i$). Also, unitary rotations of Fig.~(\ref{completeness1}) are written explicitly. Remember that for dummy and trap qubits angles $\phi$ take value $0$. For clarity of the figure we have used $N'\equiv(2\kappa+1)N$.}
\label{completeness2}
\end{figure}
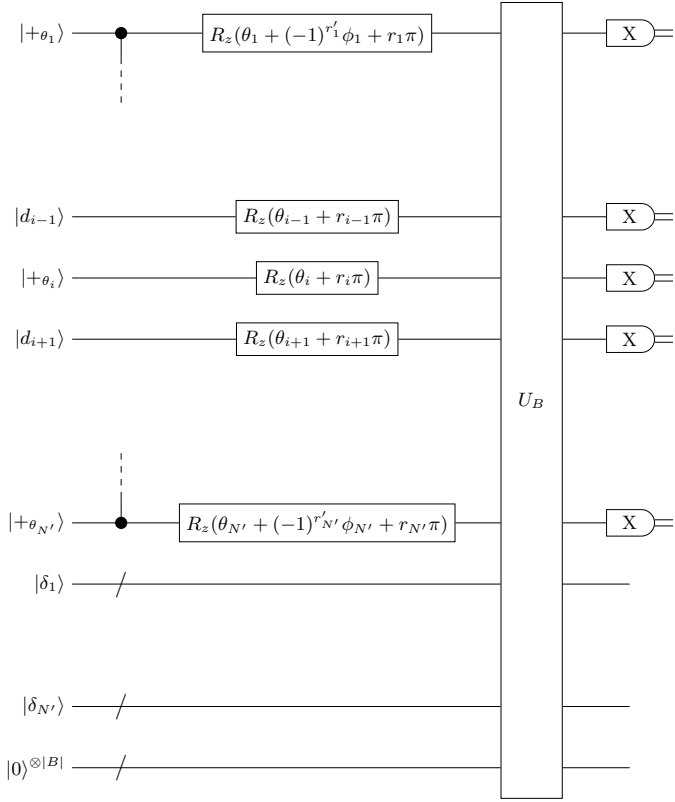

\begin{figure}[h]
\centering
\begin{adjustbox}{max width=\columnwidth}
\begin{tikzpicture}[control/.style={circle,fill,inner sep=0pt,minimum size=0.2cm},target/.style={draw,circle,cross}]
	\foreach \y/\ylbl in {0/1,-3/2,-4/3,-5/4,-8/5,-9/6,-11/7,-12/8}{
		\coordinate (mode-\ylbl) at (0.5,\y);
	}
	\node (bigU) [draw,minimum height=13cm,minimum width=1.0cm] at ($($(mode-1)!0.5!(mode-8)$)+(7.2,0)$) {$U_B$};

	\node [left] at (mode-1) {$\ket{ + }$};
	\node [left] at (mode-2) {$\ket{ 0 }$};
	\node [left] at (mode-3) {$\ket{ + }$};
	\node [left] at (mode-4) {$\ket{ 0 }$};
	\node [left] at (mode-5) {$\ket{ + }$};
	\node [left] at (mode-6) {$\ket{ \delta_1 }$};
	\node [left] at (mode-7) {$\ket{ \delta_{N'} }$};
	\node [left] at (mode-8) {$\ket{0}^{\otimes |B|}$};

	\foreach \m in {1,5} {
		\node [control] at (x1 |- mode-\m) {};
	}
	\foreach \m in {6,7,8} {
		\node at (x1 |- mode-\m) {\large /};
	}

	\foreach \start/\sign in {1/-,5/+} {
		\draw (x1 |- mode-\start) --++(0,\sign 0.4);
		\draw [dashed] ($(x1 |- mode-\start)+(0,\sign 0.4)$) --++ (0,\sign 0.8);
	}

	\coordinate (x2) at (2.0,0);
	\coordinate (x3) at (3.5,0);
	\coordinate (x4) at (5.5,0);
	\node (opR-1) [draw,rectangle] at ( x2 |- mode-1) {$ R_z (\phi_1) $};
	\node (opR-2) [draw,rectangle] at ( x2 |- mode-2) {$ X^{d_{i-1}} $};
	\coordinate (opR-3) at (x2 |- mode-3);
\node (opR-4) [draw,rectangle] at ( x2 |- mode-4) {$ X^{d_{i+1}} $};	
	\node (opR-5) [draw,rectangle] at ( x2 |- mode-5) {$ R_z (\phi_{N'}) $};

	\node (opX-1) [draw,rectangle] at ( x3 |- mode-1) {$ X^{r'_1} $};
	\node (opX-2) [draw,rectangle] at ( x3 |- mode-2) {$ R_z(\theta_{i-1}) $};
	\node (opX-3) [draw,rectangle] at ( x3 |- mode-3) {$ X^{r'_i} $};
	\node (opX-4) [draw,rectangle] at ( x3 |- mode-4) {$ R_z(\theta_{i+1}) $};
	\node (opX-5) [draw,rectangle] at ( x3 |- mode-5) {$ X^{r'_{N'}} $};

	\node (opZ-1) [draw,rectangle] at ( x4 |- mode-1) {$ Z^{ r_1 + \sum_{j \sim 1} r'_j } $};
	\node (opZ-2) [draw,rectangle] at ( x4 |- mode-2) {$ Z^{ r_{i-1}  } $};
	\node (opZ-3) [draw,rectangle] at ( x4 |- mode-3) {$ Z^{r_i} $};	
	\node (opZ-4) [draw,rectangle] at ( x4 |- mode-4) {$ Z^{ r_{i+1} } $};
	\node (opZ-5) [draw,rectangle] at ( x4 |- mode-5) {$ Z^{ r_{ N' } + \sum_{j \sim N'} r'_j } $};

	\foreach \y in {1,2,3,4,5} {
		\draw (mode-\y) -- (opR-\y.west);
		\draw (opR-\y.east) -- (opX-\y.west);
		\draw (opX-\y.east) -- (opZ-\y.west);
		\draw (opZ-\y.east) -- (opZ-\y -| bigU.west);
		\node [draw,rounded rectangle,rounded rectangle left arc=none,minimum width=1cm] (det-\y) at ( 9.1,0 |- mode-\y) {X};
		\draw ( mode-\y -| bigU.east) -- (det-\y.west);
		\draw (det-\y.7) --++ (0.30,0);
		\draw (det-\y.-7) --++ (0.30,0);
	}
	\foreach \y in {6,7,8} {
		\draw ( mode-\y) -- (mode-\y -| bigU.west);
		\draw ( mode-\y -| bigU.east) -- ( 9.1,0 |- mode-\y);
	}
\end{tikzpicture}
\end{adjustbox}

\caption{Each $z$-rotation  by $\theta$ in Fig.~(\ref{completeness2}) undoes the corresponding pre-rotations of the qubits (except for the dummies that have no pre-rotation by $\theta$). For any qubit $k$, operations in the form $R_z((-1)^{r'_k}\phi_k)$ in Fig.~(\ref{completeness2}) can be written as $X^{r'_k} R_z(\phi_k) X^{r'_k}$ and the $X^{r'_k}$ before (in temporal order) when commuting with the entanglement operators can be written as $Z^{r'_k}$ on the neighbours (this has an effect on qubits $1$ and $N'$ in this figure). All Pauli operators here are written separately from $z-$rotations. Notice that we can write an extra $X^{r'_i}$, with $r'_i \in_{R} \{0,1\}$, applying on the trap qubit $i$ since $X\ket{+}=\ket{+}$. }
\label{completeness3}
\end{figure}
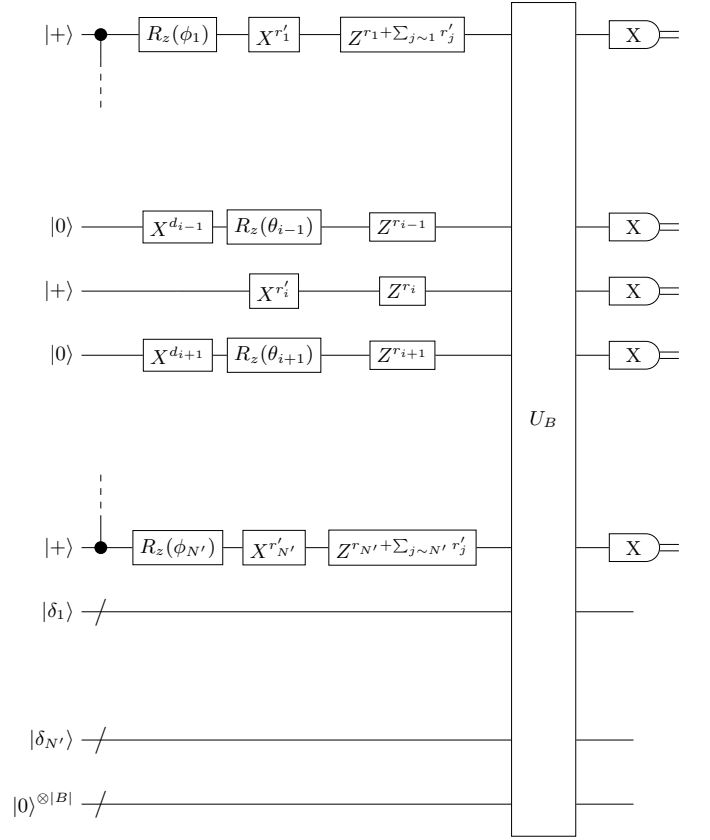

Further simplification follows when the $z$-rotations by $\theta_i$ which are part of the $R_z$ gates and the $z$-rotations by $\theta_i$ applied by the verifier to the qubits before sending them to the prover mutually cancel after commuting with the $cZ$ gates. Notice that the dummy qubits are an exception since $\theta_i$ rotations remain but have no effect other than a global phase. Moreover, we can extract the Pauli operators from the $R_z$ by applying identities: $R_z(-\chi)=X R_z(\chi) X$ and $R_z(\chi+\pi)= Z R_z(\chi)$. The Pauli $X$ operators from the left hand side of $R_z$ can be rewritten as $Z$ rotations on their entangled neighbours. This results in the circuit diagram depicted in Fig.~(\ref{completeness3}). Notice that the remaining Pauli $X$ operators do not have any effect on the Pauli $X$ measurements (we recall that in this proof $U_B=I$) and the Pauli $Z$ operators flip the measurement results.

Let us denote all the measurement outcomes of the protocol except the dummy qubit measurements by the binary vector $\boldsymbol{x}$ and $p(\boldsymbol{x})$ the probability of obtaining it. Let $q^{\text{exc}}(\boldsymbol{x})$ denote the exact probability of obtaining $\boldsymbol{x}$ in an non-encrypted MBQC implementation using the same measurement pattern $\{\phi\}_i^N$ as input. The only difference between the actual and the non-encrypted case are the Pauli $Z$ operators before the measurements, which flip the outcomes. Therefore, by relabelling the probabilities $p(\boldsymbol{x})$ to $p(\boldsymbol{x}')$, where $x'_i = x_i \oplus r_i \oplus \sum_j r'_{j \sim i}$, we get $q^{\text{exc}}(\boldsymbol{x})=p(\boldsymbol{x}')$. In other words, in the noiseless case, we can sample from the exact distribution by simply correcting the bit flips caused by the random Pauli $Z$, which are known to the verifier.

In MBQC, we can also write the distribution in terms of unitaries, labelled by the measurement outcomes (the so-called branches of the MBQC computation) of all the layers except the last. For dimension $m\times n,$ we have (up to global phases)
\begin{flalign}
 &q^{\text{exc}}(\boldsymbol{x})= \nonumber \\& \frac{|\bra{x_{(n-1)m+1}, \ldots, x_{nm}} U_{x_1, \ldots, x_{(n-1)m}} \ket{+_1, \ldots, +_m}|^2}{2^{(n-1)m}} \label{eqnequi}
\end{flalign}
since all the computational branches $(x_1, \ldots, x_{(n-1)m})$ are equiprobable and they define a unitary operation on the input~\cite{childs2005unified}. For the trap qubits this distribution is deterministic since each qubit is prepared in the $\ket{+}$ state, remains isolated throughout the computation and is measured in the $\ket{\pm}$ basis.

Now, consider local bounded noise of the form of Eqn.~(\ref{noise_eq}) after every elementary operation  $j$, including preparation, entangling and measurement. The operations that can introduce noise in a single round of the protocol include $N$ preparations at the verifier's end and at most $2N$ entanglements and $N$ measurements at the prover's end. This is an upper bound of $4N$ operations. The fidelity $F_c^2$ of the noisy output of the target computation to the noiseless one (which is the correct one as we proved above) cannot be smaller than $1 - (N \epsilon_V + 3N \epsilon_P)$. Since for any two states $\rho,\sigma$, $D(\rho,\sigma)\leq \sqrt{1-F^2(\rho,\sigma)}$, this is an upper bound in total variation distance for the target computation  $1-\delta = \sqrt{N (\epsilon_V + 3 \epsilon_P)}$.

Completeness means that our scheme should  accept with high probability in the case of bounded noise. The acceptance of the scheme, according to Def.~(\ref{def_sch}), depends on our estimate $\widehat{F}^2_t$ of the acceptance probability of the protocol $F^2_t$. Given the above bounded noise,  $F^2_t$  cannot be smaller than $1 - \kappa N (\epsilon_V + 3 \epsilon_P)$.

Our estimate for $F^2_t$ comes from $M$ i.i.d. repetitions of the protocol.  By Hoeffding's inequality, repeating $M = \log(1/\beta)/(2 \kappa^2 N^2 (\epsilon_V+ \epsilon_P)^2)$ times gets us $\kappa N (\epsilon_V + \epsilon_P)$-close in our estimation with confidence $1 - \beta$.
In order to have high probability of acceptance we need to set the limit for accepting the estimate to $(1-  \kappa (2 N\epsilon_V + 4 N \epsilon_P))$. Then our probability of accepting is as high as our confidence. Setting this limit is necessary to get high completeness but will have an effect in the soundness.

\subsection{Proof of Soundness}
\label{sec:sound}

The proof of soundness of Theorem \ref{thm1} is based on the fact that the fraction of accepting protocols, $\widehat{F}^2_t$,  is a good estimator of a lower bound in the fidelity $F^2_c$ of the target computation. Thus, looking at $\widehat{F}^2_t$ gives us with high confidence a lower bound on the fidelity, or similarly an upper bound on total variation distance var, as defined in Eq.~(\ref{eq:var_def}).

We outline the main arguments employed to prove this theorem in stages here, and provide the explicit algebraic derivations in Appendix~\ref{app:Thm1}.

Firstly, a unitary deviation $U_B$ applied before the measurements, depicted in Fig.~(\ref{completeness3}), captures in all generality the prover's dishonesty.  To see this, consider the case when the prover performs measurements different from the honest ones. This corresponds to applying a unitary basis rotation followed by Pauli $X$ measurements. Then, $U_B$ applies also on the prover's private subsystem so he can use this power to replace the qubits he receives with any other qubits he chooses to prepare privately. In any case, he has to report some classical measurement results so we always keep the final Pauli $X$ measurements in the picture. Our proof should therefore apply to any choice of $U_B$.

Secondly, we bound the total variation distance of the output distribution via the trace distance $D(\rho_{c},\rho'_c)$, where $\rho_c$ represents the state of the computational system just \emph{after} the Pauli $X$ measurements if the prover is honest and $\rho'_c$ the same state if the prover is dishonest. Thus,
\begin{eqnarray}
\text{var} \leq D(\rho_c,\rho'_c) \leq \sqrt{1-F^2(\rho_c,\rho'_c)} \nonumber \\
= \sqrt{1 - \text{Tr}^2(\sqrt{\rho_c \rho'_c}) }
\end{eqnarray}

The main idea leading to the statement of the theorem is that the acceptance probability $F^2_t$ minus a lower bound on the fidelity of the computational system $F^2(\rho_c, \rho'_c)$ is small, when averaged over the random parameters. 
Therefore, by estimating $F^2_t$ (by counting the fraction of acceptances over many repetitions of the protocol), we get a good estimate of a lower bound on the fidelity of the computational system and therefore an upper bound on var. 
We begin our analysis for the case of perfect preparations and subsequently  incorporate the effect of noise.

Averaged over the random parameters, the probability of getting all trap outcomes $1$, summing over the random variables $r_i, r'_i, d_i$ and $\theta_i$, is calculated in Appendix~\ref{app:Thm1} as

\begin{equation}
F_t^2 = \sum_{\boldsymbol{t}} p(\boldsymbol{t})   \sum_{k} |\alpha_k|^2  \prod_{i \in \boldsymbol{t}}  |\bra{+}_i    P_{k |i }  \ket{+}_i |^2,
\label{eq1}
\end{equation}
where $\boldsymbol{t}$ is the vector of the indices of the positions of the traps in all $2\kappa$ trap systems and $\sum_{\boldsymbol{t}}$ takes all possible values allowed by the construction with equal probability $p(\boldsymbol{t})$. The summation over the random parameters results in the attack on the trap system to be transformed into a convex combination of Pauli operators $P_{k}$, each with probability $|\alpha_k|^2$. By $P_{k|i}$ we represent the Pauli operator  that applies on qubit $i$.

The average fidelity $F_c$ of the computational system is

\begin{eqnarray}
F_c & \equiv & \sum_{\boldsymbol{r},\boldsymbol{\theta},\boldsymbol{t}} p(\boldsymbol{r},\boldsymbol{\theta},\boldsymbol{t}) F(\rho_c, \rho'_c)  \\ & = & \sum_{\boldsymbol{r},\boldsymbol{\theta},\boldsymbol{t}} p(\boldsymbol{r},\boldsymbol{\theta},\boldsymbol{t})  \text{Tr}( \sqrt{\rho_c \rho'_c} ), 
\end{eqnarray}
where $\rho_c$ and $\rho'_c$ represent the honest and dishonest state of the target computation just after the Pauli $X$ measurements. Calculation, presented in detail in Appendix~\ref{app:Thm1}, leads to
\begin{equation}
F_c^2 \geq \sum_{\boldsymbol{t}} p(\boldsymbol{t}) \sum_{k} |\alpha_{k}|^2  \prod_{i \in \boldsymbol{c(t)}}  | \bra{+}_i     P_{k|i}    \ket{+}_i |^2 
\label{eq2} 
\end{equation}
where $\boldsymbol{c(t)}$ denotes the positions of the qubits that participate in the computation and depends on the random ordering of the $2\kappa+1$ rounds and therefore is a function of the position of the traps.

In general we prove that

\begin{equation}
 F_t^2 -  F_c^2  \leq \frac{\kappa! (\kappa+1)!}{(2\kappa+1)!} \equiv \Delta_{\kappa}. 
\end{equation}

The verification scheme output bit is set to accept or reject by averaging over $M$ repetitions of the protocol and comparing our estimate of  $F_t^2$ with  $(1-  \kappa N(2 \epsilon_V + 4 \epsilon_P))$ (set by completeness). By Hoeffiding's inequality repeating $M = \log(1/\beta)/(2 \kappa^2 N^2 (\epsilon_V + \epsilon_P)^2)$ times gets us $\kappa N (\epsilon_V + \epsilon_P)$-close in our estimation with confidence $1 - \beta$. Therefore, $F_t^2 -   F_c^2  \leq  \kappa N(3 \epsilon_V + 5 \epsilon_P) + \Delta_{\kappa}.$  This means that for the total variation distance we have an upper bound, which gives the soundness parameter $\varepsilon$ of Def.~(\ref{def_ver})
 $$\text{var} \leq \varepsilon = \sqrt{\kappa N (3 \epsilon_V + 5 \epsilon_P) + \Delta_{\kappa}}.$$

\section{Fault-tolerant Verification of Ising Sampler}
\label{sec:FTver}

Ensuring $N \epsilon_V$ and $N \epsilon_P$ in Theorem~\ref{thm1} to be constant  will get harder experimentally for increasing $N$. 
Therefore, we present two new fault-tolerant verification schemes where the total noise scales linearly with the size, and prove that it provides a distribution that is hard to sample from classically upto constant additive error. We then prove that noise scaling with system size does not prevent us from verifying the prover’s distribution with completeness and soundness parameters independent of the problem size.

Quantum fault tolerance strategies such as due to RHG~\cite{raussendorf07rhg_topological} can overcome the challenge of noise scaling with system size. This involves gate distillation requiring adaptive operations which are beyond the Ising sampler. On the target computation, our fault tolerant verification schemes overcome this adaptivity by using arguments for free postselection due to Fujii~\cite{fujii2016noise} as applied to the verification of quantum supremacy.
On the trap computation, we do not require any adaptivity since we chose it to be Clifford. This keeps our fault tolerant verification schemes within the Ising sampler, allowing verification of quantum supremacy in the presence of total noise scaling linearly with the size. Note that a non-Clifford trap computation would suffer due to nonadaptivity. Time complexity of the quantum operations in the protocol is constant and the number of qubits needed is $O(N \mathrm{PolyLog}(N))$~\cite{raussendorf07rhg_topological}, the polylogarithmic overhead coming from the properties of the topological code and the use of concatenation in the distillation procedure.

The next issue of fault-tolerant thresholds leads to two fault-tolerant versions of the protocol in Fig.~(\ref{protocolss}) and described in detail in Sections~\ref{sub_prot_ft_a} and \ref{sub_prot_ft_b}. The first is called Protocol 2a. It employs the full RHG encoding in the traps leading to the threshold of  $\epsilon_{\text{thres}}=0.75\%$~\cite{raussendorf07rhg_topological}, the same threshold as for universal quantum computation.
	This is worse than the suggested improvements in the noise thresholds for unverified quantum supremacy~\cite{fujii2016noise}. 
	However, our next protocol, Protocol 2b, provides $\epsilon_{\text{thres}} = 1.97\%$ for verified quantum supremacy, which is an underestimate because of the analytical treatment and could potentially be improved by numerical simulation as in Ref.~\cite{raussendorf07rhg_topological}. 
	
To achieve this threshold, Protocol 2b, replaces error correction with error detection when performing the RHG encoding on the trap qubits. This is possible because the trap qubits are isolated and can be retransmitted individually without affecting the rest of the trap computation.
The numerical value is obtained by performing a threshold calculation of applying the RHG encoding in MBQC  (Appendix~\ref{threshh}). A similar procedure was performed in the circuit model by Fujii~\cite{fujii2016noise}. 
The cost of maintaining the same completeness and soundness as in Protocols 1 and 2a is to replace $\kappa$ in Fig.~(\ref{protocolss}) by $M\kappa,$ where $M$ is an extra overhead in the number of qubits depending on the code minimal distance $d$ between and around the defects and the noise parameters $\epsilon_{V}$ and $\epsilon_{P}$.  For example, with $d=2$ and $\epsilon_{V} = \epsilon_{P} = \epsilon$ as the following fractions of  the noise threshold, we have

\vspace{0.3cm}

\begin{center}
\begin{tabular}{ | m{2em} | m{5em} | m{5em}| m{4.5em} | } 
\hline
$\epsilon$ &$\epsilon_{\text{thres}}/20$ & $\epsilon_{\text{thres}}/50$ & $\epsilon_{\text{thres}}/100$ \\ 
\hline
$M$ &$3 \times 10^8$ & $2863$ & $54$ \\  
\hline
\end{tabular}
\end{center}

\vspace{0.3cm}

Improvement in $M$ may also be possible with judicious braiding or using an alternative topological code.

An additional intricacy needs resolving for both fault-tolerant protocols. Since blindness is an ingredient in our verification scheme, its straightforward application (on the logical level) risks leaking the logical measurement angles in the distillation procedure, where many copies of the same magic state need to be sent. Also, for the distillation procedure to be effective, we need to reveal information about the state distilled. Our stratagem for circumventing this is to apply blindness on the lowest level of MBQC, on which the fault-tolerant construction is based. The traps are applied at the logical MBQC level, since those are the qubits needing protection from noise, as outlined in Fig.~(\ref{layers1}).

\begin{figure}[h]

 \begin{adjustbox}{max width=1\columnwidth}
\begin{tikzpicture}[every node/.style={%
	rectangle,
	rounded corners=0.4cm,
	align=center,
	draw,
	thick,
	minimum height=2cm,
	shade,
	shading=radial,
	inner color=lightgray!20,
	outer color=lightgray!50,font=\large
}]
	\node [minimum width=13.7cm] at (0,2.10) {Ising Sampler and Trap Computations MBQC \\ (Logical layer)};
	\node [minimum width=13.7cm] at (0,0) {Protected topology using defects};
	\node [minimum width=13.7cm] at (0,-2.10) {Blind 3D cluster-state MBQC \\ (Physical layer)};
\end{tikzpicture}
\end{adjustbox}

\caption{Layered structure of verifiable FT computation.}
\label{layers1}
\end{figure}
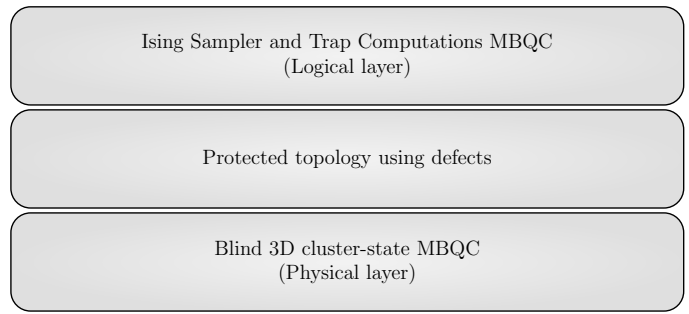

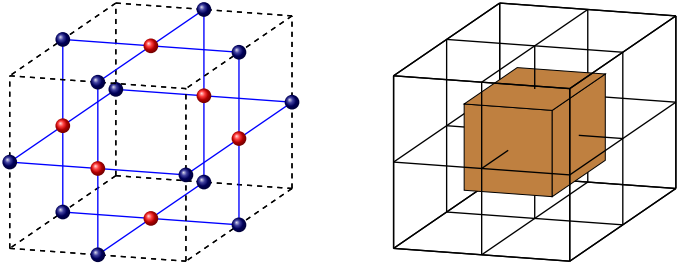
\begin{figure}[b!]

\centering
 \begin{adjustbox}{max width=\columnwidth,center}
\begin{tikzpicture}[every edge/.append style={thick},y={(0,0.9)},rotate around y=-10]
	\begin{scope}[rotate around x=-00]
	\foreach \a in {-2,0,2} {
		\draw (\a,-2,-2) edge (\a,2,-2);
		\draw (-2,\a,-2) edge (2,\a,-2);
		\draw (-2,-2,\a) edge (-2,2,\a);
		\draw (-2,\a,-2) edge (-2,\a,2);
		\draw (\a,-2,-2) edge (\a,-2,2);
		\draw (-2,-2,\a) edge (2,-2,\a);
	}
	\draw (-2,0,0) edge (0,0,0);
	\draw (0,-2,0) edge (0,0,0);
	\draw (0,0,-2) edge (0,0,0);
	\draw [fill=brown] (-1,-1,1) -- (-1,1,1) -- (1,1,1) -- (1,-1,1) -- cycle;
	\draw [fill=brown] (1,-1,-1) -- (1,1,-1) -- (1,1,1) -- (1,-1,1) -- cycle;
	\draw [fill=brown] (-1,1,-1) -- (-1,1,1) -- (1,1,1) -- (1,1,-1) -- cycle;
	\foreach \a in {-2,0,2} {
		\draw (\a,-2,2) edge (\a,2,2);
		\draw (-2,\a,2) edge (2,\a,2);
		\draw (2,-2,\a) edge (2,2,\a);
		\draw (2,\a,-2) edge (2,\a,2);
		\draw (\a,2,-2) edge (\a,2,2);
		\draw (-2,2,\a) edge (2,2,\a);
	}
	\draw (2,0,0) edge (1,0,0);
	\draw (0,2,0) edge (0,1,0);
	\draw (0,0,2) edge (0,0,1);
\begin{scope}[xshift=-8cm,
	blueball/.style={circle,shading=ball,ball color=Navy},
	redball/.style={circle,shading=ball,ball color=red}
]
	\foreach \a/\astyle in {-2/dashed,0/blue,2/dashed} {
		\draw [\astyle] (\a,-2,-2) edge (\a,2,-2);
		\draw [\astyle] (-2,\a,-2) edge (2,\a,-2);
		\draw [\astyle] (-2,-2,\a) edge (-2,2,\a);
		\draw [\astyle] (-2,\a,-2) edge (-2,\a,2);
		\draw [\astyle] (\a,-2,-2) edge (\a,-2,2);
		\draw [\astyle] (-2,-2,\a) edge (2,-2,\a);
	}
	\node [redball] at (0,0,-2) {};
	\node [redball] at (0,-2,0) {};
	\node [redball] at (-2,0,0) {};
	\node [blueball] at (-2,-2,0) {};
	\node [blueball] at (0,-2,-2) {};
	\node [blueball] at (-2,0,-2) {};
	\foreach \a/\astyle in {-2/dashed,0/blue,2/dashed} {
		\draw [\astyle] (\a,-2,2) edge (\a,2,2);
		\draw [\astyle] (-2,\a,2) edge (2,\a,2);
		\draw [\astyle] (2,-2,\a) edge (2,2,\a);
		\draw [\astyle] (2,\a,-2) edge (2,\a,2);
		\draw [\astyle] (\a,2,-2) edge (\a,2,2);
		\draw [\astyle] (-2,2,\a) edge (2,2,\a);
	}
	\node [redball] at (0,0,2) {};
	\node [redball] at (0,2,0) {};
	\node [redball] at (2,0,0) {};
	\node [blueball] at (-2,2,0) {};
	\node [blueball] at (2,-2,0) {};
	\node [blueball] at (2,2,0) {};
	\node [blueball] at (0,-2,2) {};
	\node [blueball] at (0,2,-2) {};
	\node [blueball] at (0,2,2) {};
	\node [blueball] at (-2,0,2) {};
	\node [blueball] at (2,0,-2) {};
	\node [blueball] at (2,0,2) {};
\end{scope}
\end{scope}
\end{tikzpicture}
 \end{adjustbox}

\caption{3D cluster state used in the RHG code using MBQC. Blue dots are the qubits that represent the primal cubic lattice edges (or equivalently the dual cubic lattice faces) and red dots are the qubits that represent the primal cubic lattice faces (or equivalently the dual cubic lattice edges). Entangling operations ($cZ$) are represented by blue lines. On the right hand side you can see the primal and dual cubes, as are adapted from Refs.~\cite{raussendorf07rhg_topological,
fowler09simplified_topological_ft}.}
\label{3dgraph}
\end{figure}

\begin{figure}[t]
\centering
\begin{minipage}[c]{\linewidth}
\Qcircuit  @C=0.8em @R=0.8em {
	&	&	& \lstick{\ket{+}}& \ctrl{14}	& \qw		& \qw		& \qw		& \qw		& \gate{T}	& \qw & \meter & X \\
		&	&	& \lstick{\ket{+}}& \qw		& \ctrl{13}	& \qw		& \qw		& \qw		& \gate{T}	& \qw & \meter & X \\
		&	&	& \lstick{\ket{0}}& \qw		& \qw		& \targ		& \targ		& \targ		& \gate{T}	& \qw & \meter  & X \\
		&	&	& \lstick{\ket{+}}& \qw		& \qw		& \ctrl{11}\qwx	& \qw\qwx	& \qw		& \gate{T}	& \qw 	& \meter  & X \\
		&	&	& \lstick{\ket{0}}& \qw		& \targ		& \qw		& \targ\qwx	& \targ		& \gate{T}	& \qw 	& \meter  & X \\
		&	&	& \lstick{\ket{0}}& \targ	& \qw		& \qw		& \targ\qwx	& \targ		& \gate{T}	& \qw 	& \meter  & X \\
		&	&	& \lstick{\ket{0}}& \targ	& \targ		& \targ		& \qw\qwx	& \qw		& \gate{T}	& \qw 	& \meter  & X \\
		&	&	& \lstick{\ket{+}}& \qw		& \qw		& \qw		& \ctrl{7}\qwx	& \qw		& \gate{T}	& \qw 	& \meter  & X \\
		&	&	& \lstick{\ket{0}}& \qw		& \targ		& \targ		& \qw		& \targ		& \gate{T}	& \qw 	& \meter  & X \\
		&	&	& \lstick{\ket{0}}& \targ	& \qw		& \targ		& \qw		& \targ		& \gate{T}	& \qw 	& \meter  & X \\
		&	&	& \lstick{\ket{0}}& \targ	& \targ		& \qw		& \targ		& \qw		& \gate{T}	& \qw 	& \meter  & X \\
		&	&	& \lstick{\ket{0}}& \targ	& \targ		& \qw		& \qw		& \targ\qw	& \gate{T}	& \qw 	& \meter  & X \\
		&	&	& \lstick{\ket{0}}& \targ	& \qw		& \targ		& \targ		& \qw		& \gate{T}	& \qw 	& \meter  & X \\
		&	&	& \lstick{\ket{0}}& \qw		& \targ		& \targ		& \targ		& \qw		& \gate{T}	& \qw 	& \meter  & X \\
\lstick{\ket{+}}& \ctrl{1}	& \qw	& \qw	& \targ		& \targ		& \targ		& \targ		& \ctrl{-12}	& \gate{T}	& \qw 	& \meter  & X \\
\lstick{\ket{0}}& \targ		& \qw	& \qw	& \qw		& \qw		& \qw		& \qw		& \qw		& \qw		& \qw	& \qw
}
\end{minipage}

\caption{Distillation step \cite{bravyi05gate_teleportation}. A logical $\ket{+}$ is produced using the $(15,1,3)$ quantum  Reed-Muller code. Then a transversal $T$ gate applied using the technique of Figure (\ref{teleport1}), stabilizer measurements and teleportation to an auxiliary qubit gives a `cleaner' magic state (up to Pauli $Z$ correction on the teleported qubit) when the error syndromes are correct. Everything is topologically protected. Picture adapted from Ref.~\cite{fujii2013quantum}.}
\label{distil1}
\end{figure}
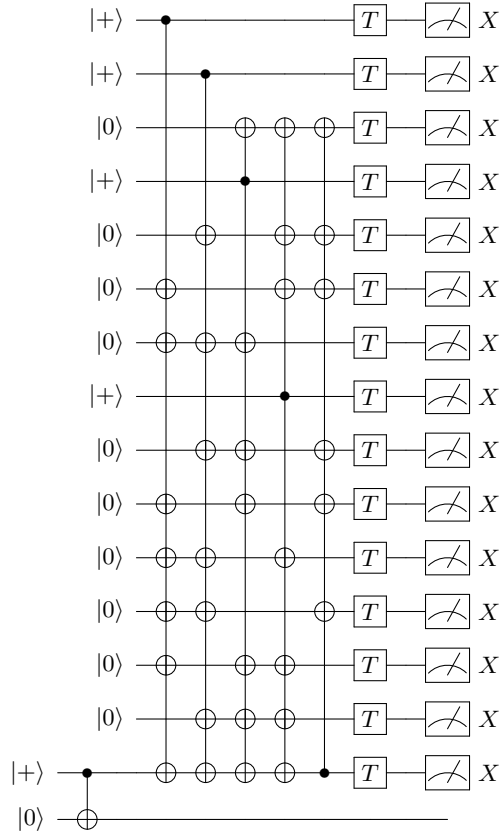

Our proof of the classical hardness of Ising sampling in this case (Theorem \ref{thm3}) relies on proving the completeness and soundness of verifying conditional probabilities (Theorem \ref{thm2}). They are proved in Section~\ref{sub_prot_ft_a}.

Noise is again of the form of Eqn.~(\ref{noise_eq}). Suppose the verifier's noise in each qubit preparation is local, bounded by $\epsilon_V < \epsilon_{\mathrm{thres}},$ the threshold and does not depend on the secret parameters. Assume the honest prover's noise in each elementary operation is bounded by $\epsilon_P< \epsilon_{\mathrm{thres}}$.  In order to prove Theorem \ref{thm2} we make the extra assumption that verifier's noise  is independent of the secret parameters.

In the following theorem, $\epsilon''$ is the error rate of the code and scales down exponentially with distance parameter $d$. Let $q^{\text{nsy}}(\bm{x}|y=0)$ be the  experimental  and $q^{\text{exc}}(\bm{x}|y=0)$ the exact distribution  of the Ising sampler, when they are conditioned on the syndrome measurement outcome $y$ giving the null result. The theorem holds for both Protocol 2a and 2b.

\vspace{0.7cm}
\begin{theorem}[Fault-tolerant verification scheme]
\label{thm2}
There exists a verification scheme with Protocol 2a/2b, $M=\log(1/\beta)/(2 \epsilon''^2)$ and $l=(1- 2 \epsilon'')$, that according to Def.~(\ref{def_ver_post}), which is based on the variation distance between conditional probabilities $q^{\text{nsy}}(\bm{x}|y=0)$ and $q^{\text{exc}}(\bm{x}|y=0)$, is
$$(1-\beta,1-\sqrt{\epsilon''})-\text{complete}$$ and $$(1-\beta,\sqrt{3\epsilon'' + \Delta_{\kappa}})-\text{sound}$$
 where $\Delta_{\kappa}=  \kappa! (\kappa+1)!/(2\kappa+1)!$.
\end{theorem}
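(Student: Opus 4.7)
The plan is to follow the scaffolding of Theorem~\ref{thm1}'s proof but lift the whole argument to a two-layer structure: a physical layer of 3D cluster states (Fig.~\ref{3dgraph}) on which blindness, dummies and encryption angles $\theta_i$ are applied exactly as in Protocol 1, and a logical layer on which the Ising sampler and the Clifford trap computations live after the RHG encoding. A key observation is that because every trap is a single $\ket{+}$ qubit measured in the $X$ basis, the trap computation is Clifford, so the logical non-adaptive carving of Protocol 1 lifts unchanged to the logical layer without needing magic-state distillation. The only non-Clifford ingredient, the $T$ gates in the distilled ancillae (Fig.~\ref{distil1}), sits inside the computational graph and is handled by free post-selection on the defect syndromes, which is precisely what the register $y$ records.

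For completeness I would invoke the RHG threshold theorem in its MBQC form (Appendix~\ref{threshh}): when $\epsilon_V,\epsilon_P<\epsilon_{\mathrm{thres}}$ the logical error per logical operation decays to $\epsilon''$, exponentially small in the code distance $d$. Hence the honest logical output conditioned on $y=0$ is $\epsilon''$-close in fidelity to the ideal Ising-sampler output, which by the Fuchs--van de Graaf inequality bounds $\mathrm{var^{Post}}$ by $\sqrt{\epsilon''}$. The acceptance probability of a single trap round is likewise at least $1-\epsilon''$, so Hoeffding's inequality with $M=\log(1/\beta)/(2\epsilon''^2)$ i.i.d.\ repetitions keeps the empirical acceptance fraction within $\epsilon''$ of its mean with confidence $1-\beta$, and setting $l=1-2\epsilon''$ ensures that the scheme accepts with probability at least $1-\beta$.

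For soundness the structure mirrors Sec.~\ref{sec:sound}: any deviation by the prover can be absorbed into a unitary $U_B$ acting just before the physical Pauli $X$ measurements, and averaging over the physical-level secrets $\theta_i,r_i,r_i',d_i$ converts $U_B$ into a Pauli-diagonal mixture on the logical qubits. On the traps the logical Pauli measurement deterministically detects any non-identity logical error, so the derivation leading to Eqs.~(\ref{eq1})--(\ref{eq2}) carries over at the logical level with the effective per-logical-qubit error rate replaced by $\epsilon''$ rather than $N(\epsilon_V+\epsilon_P)$. The same combinatorial bound $\Delta_{\kappa}=\kappa!(\kappa+1)!/(2\kappa+1)!$ on the overlap between trap and computation positions gives $F_t^2-F_c^2\le\epsilon''+\Delta_{\kappa}$ on average, and Hoeffding estimation introduces at most an additional $2\epsilon''$ slack with confidence $1-\beta$, yielding $\mathrm{var^{Post}}\le\sqrt{3\epsilon''+\Delta_{\kappa}}$.

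The main obstacle I anticipate is the tension between blindness, which must hide which logical qubits are traps, and distillation, which requires many identical copies of a magic state to be accessible for stabiliser verification. My stratagem is the one sketched in Fig.~\ref{layers1}: apply the encryption angles $\theta_i$ only at the physical MBQC layer, so the distillation circuits act on already-encrypted physical qubits without the prover ever learning the logical identity, while the traps live at the logical layer where the combinatorics of Theorem~\ref{thm1} apply. A secondary subtlety is the post-selection in Protocol 2b, where the error-detection outcomes feed into $y$ and the retransmission of an isolated logical trap qubit must not correlate the prover's errors with trap positions; this is ensured because the decision to retransmit depends only on the physical-level syndrome, which by the blindness argument is independent of logical identity.
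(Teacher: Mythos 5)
Your proposal follows essentially the same route as the paper's proof: completeness via the RHG threshold argument giving $\mathrm{var^{Post}}\le\sqrt{\epsilon''}$ plus Hoeffding estimation with acceptance limit $l=1-2\epsilon''$, and soundness by lifting the Pauli-twirl argument of Theorem~\ref{thm1} from the physical to the logical layer, with detectable errors removed by the null-syndrome projector and the identical combinatorial bound $\Delta_{\kappa}$. The only point worth making explicit is the paper's standing assumption that the verifier's preparation noise is independent of the secret parameters $\theta_i$, which is what licenses applying the twirl to the composition of noise and attack.
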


\subsection{Protocol 2a}
\label{sub_prot_ft_a}

The fault-tolerant computation scheme used is the one proposed by RHG  \cite{raussendorf07rhg_topological}. Single qubit preparation/distillation, entangling gates ($cX$) and Pauli $X$ measurements are topologically protected using the three dimensional lattice shown in Fig.~(\ref{3dgraph}) and measurement-based implementation of the topological operations (more details on the MBQC implementation of the RHG code see \cite{fowler09simplified_topological_ft}, \cite{fujii2015quantum}). Universality comes from topologically protected concatenated distillation (Fig.~(\ref{distil1})) of magic states:
 $$\ket{Y} = \frac{1}{\sqrt{2}} (\ket{0} + i \ket{1}), \ket{A} = \frac{1}{\sqrt{2}} (\ket{0} + e^{i\frac{\pi}{4}} \ket{1})$$
which are generated by single physical qubit measurements. Using the logical distilled magic states and the gate teleportation model one can implement a universal set of gates (Fig.~\ref{teleport1}). One can simulate an MBQC computation by using these gates and Pauli measurements and consequently add a forth dimension to the system, which comes from the flow of the logical MBQC operations (notice that this layer is distinct from the physical MBQC layer on which the topological code is implemented). The exact usage of RHG encoding in our FT verification scheme depends on whether we use error correction or error detection in our trap computation, giving two separate protocols.

Our first fault-tolerant protocol follows the protocol for the non-fault-tolerant verification of quantum supremacy, introduced in Sec. \ref{main:nft} but using fault tolerance.

\vspace{0.4cm}

\textbf{Protocol 2a:}

\begin{enumerate}
\item \textbf{Trapification:} Verifier selects a random ordering of $2\kappa+1$ sufficiently large 3D graphs of Fig.~(\ref{3dgraph}), one for the target computation and $2\kappa$ for the trap computations. In the target computation round the logical computation is the same as the one in the non-fault-tolerant protocol (see Fig.~(\ref{brick1})). In the trap computation rounds, the logical graph contains isolated traps in the same configuration as in the non-fault-tolerant version (see Fig.(~\ref{protocol_pic})). We call this the logical layer of our protocol.

\item \textbf{Generation of the `topological code-compatible' circuit:} The above MBQC patterns contain $\ket{\pm_{\phi}}$ measurements that are not compatible with the topological code. We directly translate the MBQC patterns into a circuit with the same operations, with the difference that the measurements are replaced by teleportation of distilled $\ket{+_{\phi}}$ states  followed by Pauli $X$ measurements. Since our rotations are multiples of $\pi/8$, $\ket{A}$ magic states need be replaced by $\frac{1}{\sqrt{2}}(\ket{0}+e^{i \frac{\pi}{8}}\ket{1})$ states, gate teleportation is as described before (Fig.~(\ref{teleport1})) with a $T$ gate  instead of a $S$ gate correction and distillation based on $(31,1,3)$  instead of $(15,1,3)$ quantum  Reed-Muller code.  Adaptive $T$ gates are not applied since we want to keep the model instantaneous - this will be accounted for in our supremacy proof.
To avoid adaptivity in distillation we fix which magic states we keep for the next level of distillation independently of the syndrome measurements outcomes - this is not a problem because we have `free' postselection on the syndrome measurements of the target computation (register $y$).

Even the $\ket{\pm}$ measurements of the traps use inputs that go through the distillation and teleportation procedure (Fig.~(\ref{teleport1}) (iii)). This is in order for the physical attacks to have the same effect  on the target and trap computation at the logical level (see proof of verifiability for more).

\item \textbf{Topological translation:} The topological translation from the circuit to the topology is straightforward~\cite{fowler09simplified_topological_ft}.

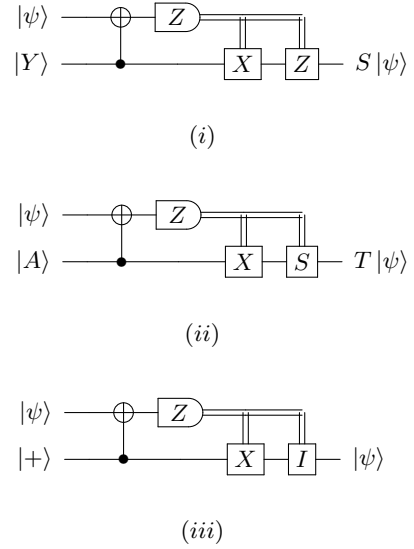
\begin{figure}[t]
\centering
\[
\Qcircuit @C=1em @R=.7em {
\lstick{\ket{\psi}} & \qw & \targ  & \measureD{Z}  & \cw \cwx[1] & \cw \cwx[1] \\
\lstick{\ket{Y}} & \qw & \ctrl{-1} & \qw & \gate{X} & \gate{Z} & \rstick{S\ket{\psi}} \qw
}
\]

\vspace{2ex}

$(i)$

\[
\Qcircuit @C=1em @R=.7em {
\lstick{\ket{\psi}} & \qw & \targ  & \measureD{Z}  & \cw \cwx[1] & \cw \cwx[1] \\
\lstick{\ket{A}} & \qw & \ctrl{-1} & \qw & \gate{X} & \gate{S} & \rstick{T\ket{\psi}} \qw
}
\]

\vspace{2ex}

$(ii)$

\[
\Qcircuit @C=1em @R=.7em {
\lstick{\ket{\psi}} & \qw & \targ  & \measureD{Z}  & \cw \cwx[1] & \cw \cwx[1] \\
\lstick{\ket{+}} & \qw & \ctrl{-1} & \qw & \gate{X} & \gate{I} & \rstick{\ket{\psi}} \qw
}
\]

\vspace{2ex}

$(iii)$

\caption{$(i),(ii)$ Gate teleportation (up to global phase) using magic states $\ket{Y},\ket{A}$, $(iii)$ State teleportation using auxiliary state $\ket{+}$ that mimics gate teleportation (via blindess). These operations are applied in a topologically protected way, both during state distillation using `impure' states and to implement the corresponding logical operations during computation using distilled states.}
\label{teleport1}
\end{figure}

\item \textbf{Blind implementation of topology:}  The topological code is implemented at the physical level by MBQC using the 3D-graphs, so that we can implement them blindly using the following encryption.

\begin{enumerate}

\item Verifier prepares, one by one, the pre-rotated physical qubits $\ket{+_{\theta}}$,  $\theta \in_R \{0, \frac{\pi}{4},\frac{2\pi}{4}, \ldots,  \frac{7\pi}{4}\}$,  needed for the blind implementation of the topological protected computation on the three dimensional cluster states and sends them to the prover. Blindness, induced by the random rotations, hides from the prover the physical operations applied and therefore the logical structure of the computation in the topologically protected (vacuum) and isolated qubit region. In particular, the prover is not able to distinguish between implementing distillation and teleportation of a magic state or a $\ket{+}$ state used for computation and testing respectively.

\item Verifier sends all the encrypted measurement angles $\delta_i = \theta_i + (-1)^{r'_i}\phi_i + r_i \pi$ for $r_i, r'_i \in_R \{0,1\}$. Parameters $r_i, r'_i$ are classical one-time pads for the measurement outputs.

\item Prover runs the computation, by entangling, measuring all at once and returning the results.
\item Verifier classically corrects the returned outcomes using the correction procedure of the quantum error correcting code used in distillation and the topological code and undoes the $r,r'$ pad.
\item Verifier accepts the outcome, which is the logical output string $\boldsymbol{x}$ and  syndrome measurement bit
 $y$ ($y=0$ no error, $y=1$ error) of the target computation, if all the results of the logical trap computations are correct, otherwise rejects.
\end{enumerate}
\end{enumerate}

\vspace{0.3cm}
Completeness of the protocol follows the same analysis as in the non-fault-tolerant case. We can eliminate the pre-rotations by the $\theta$'s of the computation in the lower level MBQC due to $\theta$ being in $\delta$ and, then, the computation is  correct up to Pauli $Z$ corrections before the measurements. Local noise is taken care of by the error correction if it is lower than the threshold of the RHG code.  This avoids scaling issues that we had in the non fault-tolerant protocol.
In particular because of fault tolerance we get   $\text{var} \leq  \sqrt{\epsilon''}$. Completeness means that our scheme should also accept with high probability and this is achieved by setting the limit to accept the fidelity estimate to $(1- 2 \epsilon'')$. By repeating $N = \log(1/\beta)/(2 \epsilon''^2)$ times gets us $\sqrt{\epsilon''}$-close in our estimation with confidence $1 - \beta$. Thus, this is a lower bound on the probability our scheme accepts in the case of completeness.

 The proof of soundness is  similar to the non fault-tolerant case since the noise can be considered part of the prover's attack that  has the same effect on the target computation and the trap at a logical level. We show this in Appendix~\ref{app:Thm2}.

The threshold of this protocol is the same as the threshold of the RHG code since error correction is used in the trap rounds.

\subsection{Protocol 2b}
\label{sub_prot_ft_b}

\begin{figure}[t!]
\centering
\begin{subfigure}[t]{.35\columnwidth}
\centering
\Qcircuit  @C=0.8em @R=0.8em {
	&	&	& \lstick{\ket{+}}& \qw & \ctrl{1}	& \qw		& \qw		& \qw		& \meter & X \\
		&	&	& \lstick{\ket{+}}& \qw& \ctrl{-1}		& \ctrl{1}	& \ctrl{3}		& \qw		&  \meter & X \\
		&	&	& \lstick{\ket{+}}& \qw& \qw		& \ctrl{-1}	& \qw		& \qw & \meter  & X \\
		&	&	& \lstick{\ket{+}}& \qw& \ctrl{1}		& \qw		& \qw & \qw	& \meter  & X \\
		&	&	& \lstick{\ket{+}}& \qw& \ctrl{-1}		& \ctrl{1}		& \ctrl{-3}		& \qw	&  \meter  & X \\
		&	&	& \lstick{\ket{+}}& \qw& \qw	& \ctrl{-1}		& \qw		& \qw	& \meter  & X
}
\caption{}
\end{subfigure}
\hspace{1.3cm}
\begin{subfigure}[t]{.35\columnwidth}
\centering
\Qcircuit  @C=0.6em @R=0.8em {
	&	&	& \lstick{\ket{+}}& \qw & \ctrl{1}	& \qw		& \qw		& \qw		& \meter & X \\
		&	&	& \lstick{\ket{0}}& \qw& \targ		& \targ	& \targ		& \qw		&  \meter & Z \\
		&	&	& \lstick{\ket{+}}& \qw& \qw		& \ctrl{-1}	& \qw		& \qw & \meter  & X \\
		&	&	& \lstick{\ket{0}}& \qw& \targ		& \qw		& \qw & \qw	& \meter  & Z \\
		&	&	& \lstick{\ket{+}}& \qw& \ctrl{-1}		& \ctrl{1}		& \ctrl{-3}		& \qw	&  \meter  & X \\
		&	&	& \lstick{\ket{0}}& \qw& \qw	& \targ		& \qw		& \qw	& \meter  & Z
}
\caption{}
\end{subfigure}

\begin{subfigure}{0.5\columnwidth}
\includegraphics[width=\textwidth]{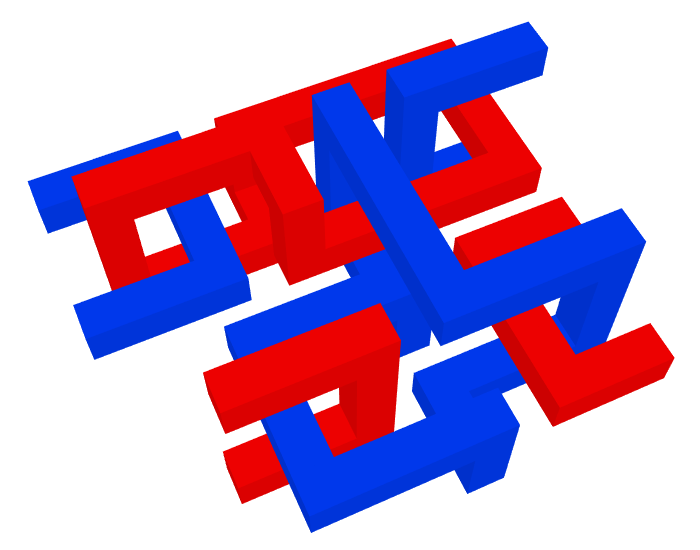}
\caption{}
\end{subfigure}

\vspace{0.3cm}
\begin{subfigure}{0.8\columnwidth}
\centering
\resizebox{\linewidth}{!}{
\begin{tikzpicture}

\pgftransformcm{1}{0}{0.4}{0.5}{\pgfpoint{0cm}{0cm}};
\foreach \x in {10,14,18,22,26} {
	\foreach \y in {3,7} {            

	\node at (\x,\y) [circle,draw=black,thick,fill=red] {};

	}
}

\foreach \x in {12,16,20,24} {
	\foreach \y in {3,7} {            

	\node at (\x,\y) [circle,draw=black,thick,fill=blue] {};

	}
}
\foreach \x in {10,14,18,22,26} {
	\foreach \y in {1,5} {            

	\node at (\x,\y) [circle,draw=black,thick,fill=blue] {};

	}
}

\foreach \x in {12,16,20,24} {
	\foreach \y in {1,5} {            

	\node at (\x,\y) [circle,draw=black,thick,fill=red] {};

	}
}

\draw [very thick] (10,5) -- (26,5);
\draw [very thick] (14,5) -- (14,7);
\draw [very thick] (10,7) -- (26,7);
\draw [very thick] (18,5) -- (18,7);

\draw [very thick] (10,1) -- (26,1);
\draw [very thick] (14,1) -- (14,3);
\draw [very thick] (10,3) -- (26,3);
\draw [very thick] (18,1) -- (18,3);

\draw [very thick] (22,3) -- (22,5);
\draw [very thick] (26,3) -- (26,5);

\end{tikzpicture}
}
\caption{}
\end{subfigure}

\caption{(a) `H'-shaped building component of brickwork state. (b) Same with cNOT gates where the control is always $\ket{+}$ and the target always $\ket{0}$. (c) Translation into prime (blue)/dual (red) topologically protected qubits. (d) Prime/dual colouring of the topologically protected brickwork state.}
\label{bricktopo}
\end{figure}

We now adapt Protocol 2a to work with error detection and  attain a better threshold. The main idea is that because the traps are isolated qubits one can look at the syndrome measurements of all trap computations and pick from each computation only the logical trap qubit measurements that are correct individually.

The traps in this case test topologically protected qubits of the graph state that implements the target computation \emph{together with the distillation}. This is because we want to have smaller traps, in terms of number of physical qubits, compared to Protocol 2a where a trap can be as large as a magic state distillation circuit. This is crucial because by employing error detection and retransmission, one needs to resend one logical trap every time at least one syndrome measurement in the topologically protected region around the trap fails (we limit this overhead to $M$ times). 

To avoid the trap computation being distinguishable from the target, we implement the  traps as if all qubits have an injected singular qubit, but the injected singular qubit is prepared in the $\ket{+}$ state and therefore the logical input remains the logical $\ket{+}$. The underlying MBQC blindness hides the qubit that is injected. Each trap computation is now performed $M$ times.

\vspace{0.4cm}

\textbf{Protocol 2b:}

\begin{enumerate}
\item \textbf{Generation of the `topological code-compatible' MBQC pattern:} Verifier selects a random ordering of $2\kappa M +1$ sufficiently large 3D graphs of Fig.~(\ref{3dgraph}), one for the target computation and $2\kappa M$ for the trap computations.
For the target computation round: The Ising sampler MBQC pattern of Fig.~(\ref{brick1}) is translated into a `topological code-compatible' one, i.e. an MBQC pattern where qubits are prepared as $\ket{+_{k \pi/8}}$ states and always measured in the $\{\ket{\pm}\}$ basis. This translation is possible  using again circuits similar to Fig.~(\ref{teleport1}). Notice that this introduces some adaptive $T$ gates that we cannot perform if we want to keep the model instantaneous - this will be accounted for in our supremacy proof. Moreover, topological protection requires the distillation of the magic states and this can also be translated into an MBQC pattern. To avoid adaptivity we fix which magic states we keep for the next level of distillation independently of the syndrome measurements outcomes - this is not a problem because we have `free' postselection on the target computation (register $y$). The final MBQC pattern can be also standardised to the form of a brickwork state so that it can be trapified shown as in Fig.~(\ref{protocol_pic}).

\item \textbf{Trapification:} The target computation is the MBQC pattern generated in the previous step. For the trap round, as shown in Fig.~(\ref{protocol_pic}), we have two types of trap computations by isolating qubits of the brickwork state. This is also `topological code-compatible'. Qubits are prepared in the $\ket{+}$ or $\ket{0}$ state and are measured in the  $\{\ket{\pm}\}$ basis. Notice that in the trap rounds there is no adaptivity. We call this the logical layer of our protocol.

\item \textbf{Topological translation:} As shown in Fig.~(\ref{bricktopo}) one can translate the `topological code-compatible' MBQC pattern into a topology that conforms with the topological code. To avoid leaking any information concerning when magic states or dummy qubits are injected, we inject a physical qubit at every logical qubit. Thus, we use the same topology to inject $\ket{+}$ (which is equivalent to not injecting  anything in the topology of Fig.~(\ref{bricktopo})) or $\ket{0}$ or a magic state when needed.

\item \textbf{Blind implementation of topology:} In order to implement the above topology blindly, so that the prover does not know which physical states we inject, we chose to implement it on MBQC and use the following encryption.

\begin{enumerate}

\item Verifier prepares, one by one, the pre-rotated physical qubits $\ket{+_{\theta}}$,  $\theta \in_R \{0, \frac{\pi}{4},\frac{2\pi}{4}, \ldots,  \frac{7\pi}{4}\}$

\item Verifier sends all the encrypted measurement angles $\delta_i = \theta_i + (-1)^{r'_i}\phi_i + r_i \pi$ for $r_i, r'_i \in_R \{0,1\}$. Parameters $r_i, r'_i$ are classical one-time pads for the measurement outputs.

\item Prover runs the computation, by entangling, measuring all at once and returning the results.
\item Verifier classically detects the errors in the returned syndrome measurements of the trap computations after undoing the $r,r'$ pad. From the set of  the $\kappa M$ logical trap qubits corresponding to each position in the trap graph it selects $\kappa$ correct ones. This is possible, on average, if $M$ is large enough as described at the end of Appendix~\ref{threshh}. This results in the quantity $\Delta_{\kappa}$ in Theorem~\ref{thm2} being averaged over the noise distribution.
\item Verifier accepts the outcome,  which is the logical output string $\boldsymbol{x}$ and  syndrome measurement bit $y$ ($y=0$ no error, $y=1$ error) of the target computation, if all the results of the logical trap computations are correct, otherwise rejects.
\end{enumerate}
\end{enumerate}

\vspace{0.3cm}
The proof of correctness and soundness are identical to Protocol 2a.

\section{Noisy Computational Supremacy}
\label{sec:supr}

Assuming the following conjectures, the quantum computational supremacy theorem (Theorem \ref{thm3}) for the noisy case holds.

\begin{conjecture}[Average-case hardness]
\label{conject2}

For $0 \leq \alpha_1, \beta_1 \leq 1,$  approximating the probability distribution of the Ising sampler by $p^{\mathrm{apx}}(\boldsymbol{x}|y=0)$ up to multiplicative error 
\begin{equation}
| p^{\mathrm{apx}}(\boldsymbol{x}|y=0) - q^{\mathrm{exc}}(\boldsymbol{x}|y=0)|  \leq \alpha_1 q^{\mathrm{exc}}(\boldsymbol{x}|y=0) \nonumber 
\end{equation}
in time $poly(|\boldsymbol{x}|,1/\alpha_1,1/\beta_1)$ is $\#P$-hard for at least a fraction $\beta_1$  of $\boldsymbol{x}$ instances.

\end{conjecture}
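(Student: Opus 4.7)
The plan is to attack this in two stages, mirroring the standard worst-case-to-average-case template used for related sampling problems (Aaronson--Arkhipov for boson sampling, Bremner--Jozsa--Shepherd / Bremner--Montanaro--Shepherd for IQP). First, I would invoke the worst-case $\#P$-hardness of computing $|\mathcal{Z}_{\boldsymbol{x}}|^2$ at imaginary temperature, which follows from the results of \cite{goldberg2014complexity,fujii2013quantum} applied to the Ising Hamiltonian $\mathcal{H}'$ of the sampler. Second, I would lift this to average-case hardness by a polynomial self-reduction in the spirit of Lipton's reduction for the permanent.

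Concretely, the key step is to recognise that the conditional probability $q^{\mathrm{exc}}(\boldsymbol{x}|y=0) \propto |\mathcal{Z}_{\boldsymbol{x}}|^2$ is, for a natural continuous parametrisation of the instance (e.g.\ perturbing the magnetic-field coefficients or the measurement angles by a small $t$), a low-degree polynomial $P(t)$ in $t$. Given an oracle $p^{\mathrm{apx}}$ that multiplicatively approximates $q^{\mathrm{exc}}$ on a $\beta_1$-fraction of instances, one queries it at $O(1/\alpha_1)$ random points $t_1,\ldots,t_m$ along a line passing through the worst-case instance at $t=0$; provided the random perturbed instances fall inside the ``average'' region, Berlekamp--Welch-style polynomial interpolation recovers $P(0)$ and hence solves the worst-case $\#P$-hard problem. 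Combined with Stockmeyer counting inside the polynomial hierarchy, this would yield the collapse $\mathrm{PH} \subseteq \mathrm{BPP}^{\mathrm{NP}^{p^{\mathrm{apx}}}}$ invoked for supremacy in Theorem~\ref{thm3}.

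The principal obstacle, and the reason this remains a conjecture rather than a theorem, is twofold. First, the multiplicative-error requirement is fragile: interpolation distributes approximation error multiplicatively along the line, so points where $|\mathcal{Z}_{\boldsymbol{x}}|^2$ is exponentially small amplify noise catastrophically, and there is no known way to restrict the interpolation path to ``large-probability'' instances while still passing through an arbitrary worst-case point. Second, the random self-reduction must keep the perturbed instances genuinely inside the natural distribution of the Ising sampler, i.e.\ with measurement angles drawn from the discrete set $\{k\pi/8\}$ on the fixed extended-brickwork graph of Fig.~\ref{brick1}; a continuous perturbation in $t$ leaves this support, so some discretisation or rounding argument is also needed. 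The same pair of obstacles blocks the analogous conjectures for boson sampling, IQP and random circuit sampling, and I would expect genuine progress to require either a new polynomial encoding of $|\mathcal{Z}_{\boldsymbol{x}}|^2$ that is robust under multiplicative perturbation, or a hardness-amplification tool tailored specifically to imaginary-temperature Ising partition functions on the square lattice.
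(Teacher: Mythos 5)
You are attempting to prove a statement that the paper itself does not prove: Conjecture~\ref{conject2} is an \emph{assumption}, invoked as a hypothesis of Theorem~\ref{thm3}, and the paper offers only heuristic support for it, not a proof. So there is no proof in the paper against which your argument can be checked line by line; the relevant question is whether your proposal actually closes the gap the authors leave open, and it does not --- as you yourself concede in your final paragraph. The Lipton-style polynomial interpolation you sketch is the standard worst-to-average template, and the two obstructions you identify (catastrophic amplification of multiplicative error at points where $|\mathcal{Z}_{\boldsymbol{x}}|^2$ is exponentially small, and the perturbed instances leaving the discrete support $\{k\pi/8\}$ of the fixed extended-brickwork instance) are exactly why this route has not been made to work for any of the sampling models in this family. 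An honest assessment is therefore that your text is a well-informed explanation of why the conjecture is hard to prove, not a proof of it; presented as a proof it has a fatal gap at the interpolation step, which you cannot repair without one of the new tools you mention at the end.

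It is worth contrasting your proposed (and abandoned) route with the justification the paper actually gives in Section~\ref{sec:supr}. The authors do not attempt any self-reduction at all. Their support for Conjecture~\ref{conject2} is inherited from Ref.~\cite{gao_quantum_2016}: because all computational branches $(x_1,\ldots,x_{(n-1)m})$ of the fixed Ising-sampler instance are equiprobable and each branch selects a uniformly random angle from $\{k\pi/4\}$ on every vertex of the underlying brickwork state, the single fixed instance is effectively a random-circuit generator, and average-case hardness over $\boldsymbol{x}$ is then argued by analogy with random circuit sampling~\cite{boixo2016characterizing} (quantum chaos, numerical evidence). The paper additionally notes that its version of the conjecture concerns a slightly shifted distribution, because the nonadaptive fault-tolerant gadgets leave uncorrected $k_0\pi/4$ byproduct rotations, and argues these do not disturb the uniformity of the induced angles. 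So the paper's "evidence" is a distributional symmetry argument plus an appeal to a widely held belief, whereas yours is a reduction that provably fails at the multiplicative-error step; neither constitutes a proof, and the statement should remain labelled a conjecture.
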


\begin{conjecture}[Anti-concentration]
There exist some $0 \leq \alpha_2, \beta_2 \leq 1$, $1/\alpha_2 \in \mathrm{poly}(1/\beta_2)$ such that for all $x$
\begin{eqnarray}
\mathrm{prob}\left( q^{\mathrm{exc}}(\boldsymbol{x}|y=0) \geq \frac{ \alpha_2}{2^{N}} \right) \geq \beta_2
\end{eqnarray}
\label{conject1}
\end{conjecture}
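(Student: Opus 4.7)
The plan is to use the second moment (Paley--Zygmund) method, the standard route for anti-concentration claims in the random-circuit supremacy literature. Taking $X = 2^N q^{\mathrm{exc}}(\boldsymbol{x}\mid y=0)$ as a random variable under the uniform measure on $\boldsymbol{x}\in\{0,1\}^N$, the conditional normalisation gives $\mathbb{E}[X]=1$, and Paley--Zygmund yields $\Pr[X\geq\alpha_2]\geq (1-\alpha_2)^2/\mathbb{E}[X^2]$. Hence it suffices to bound $\mathbb{E}_{\boldsymbol{x}}\big[\,q^{\mathrm{exc}}(\boldsymbol{x}\mid y=0)^2\big]\leq C/2^{2N}$ for some absolute constant $C$, and then set $\beta_2=(1-\alpha_2)^2/C$.

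The next step is to rewrite $q^{\mathrm{exc}}(\boldsymbol{x}\mid y=0)=\lvert\langle\boldsymbol{x}\rvert U\ket{+}^{\otimes N}\rvert^{2}/\Pr(y=0)$, where $U$ is the effective unitary emulated by the Ising sampler (equivalently, its extended brickwork construction) conditioned on null syndromes. Using the observation exploited in Ref.~\cite{gao_quantum_2016} that the auxiliary ``extension'' qubits in Fig.~(\ref{brick1}) translate bit-string choices into random choices of measurement angles from $\{k\pi/4\}$ in the underlying logical brickwork, the summation defining the second moment can be recast as a Haar-like average over an ensemble of Clifford$+T$ circuits on the $N$ logical qubits. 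The second moment then becomes a swap-operator expectation $\mathrm{Tr}[(U^{\otimes 2})^{\dagger}\,\mathrm{SWAP}\,(U^{\otimes 2})\,\rho_0^{\otimes 2}]$ that I would evaluate by the tensor-network technique developed for brickwork 2-designs.

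The crux, and the step I expect to be the hardest, is showing that the circuit ensemble induced by the extended brickwork forms an \emph{approximate} unitary 2-design in polynomial depth; equivalently, that its second moment is within a constant factor of the Haar (Porter--Thomas) value $2/2^{2N}$. For continuous brickwork circuits this follows from Brand\~ao--Harrow--Horodecki, but here the per-qubit angles are restricted to eight discrete values, so one must adapt those arguments (or use the newer anti-concentration-transfer results of Hangleiter et al.) to the Clifford$+T$ gate set emitted by the sampler, and verify that the depth $n$ can be chosen $\mathrm{poly}(m)$ without sacrificing blindness or the single-instance property.

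Finally, I would dispense with the post-selection layer by noting that the syndrome projector $\Pi_{y=0}$ acts only on the ancillary topological registers and commutes through to the front of the logical circuit, so conditioning merely renormalises both moments by the same factor $\Pr(y=0)^{-1}$; provided $\Pr(y=0)$ is bounded below by a constant at sub-threshold noise (which Theorem~\ref{thm3}'s setting guarantees), the ratio $\mathbb{E}[X^2]/\mathbb{E}[X]^2$ is unaffected and anti-concentration for the post-selected distribution follows from the unconditional version. This reduction is precisely why the conjecture can inherit the best available evidence from the non-post-selected random-circuit literature, even though a fully rigorous proof for the specific Ising-sampler ensemble remains open, as is the case for Ref.~\cite{gao_quantum_2016}.
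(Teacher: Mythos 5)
The statement you are trying to prove is explicitly labelled a \emph{conjecture} in the paper: it is an unproven hypothesis assumed as input to Theorem~\ref{thm3}, and the authors offer only heuristic support for it (by noting it is implicit in the broader conjecture of Ref.~\cite{gao_quantum_2016} and by appealing to connections with random-circuit sampling). There is therefore no proof in the paper to compare against, and your proposal should be judged on whether it actually closes the question. It does not. The Paley--Zygmund reduction in your first paragraph is fine as far as it goes (with the caveat that one must keep $\alpha_2$ bounded away from $0$ so that $1/\alpha_2\in\mathrm{poly}(1/\beta_2)$ is satisfiable, e.g.\ $\alpha_2=1/2$), and the post-selection step in your last paragraph is actually easier than you make it: for the \emph{exact} noiseless sampler the paper uses $q^{\mathrm{exc}}(\boldsymbol{x}|y=0)=q^{\mathrm{exc}}(\boldsymbol{x})$ directly, so no renormalisation argument is needed. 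But the entire burden of the conjecture is concentrated in your third paragraph --- the claim that the ensemble of circuits induced by the fixed-instance extended brickwork state, with per-qubit angles restricted to eight discrete values selected by the measurement outcomes of the auxiliary qubits, has a second moment within a constant factor of the Porter--Thomas value $2/2^{2N}$. You correctly identify this as the crux and then explicitly leave it open. Since that step \emph{is} the anti-concentration statement (up to the routine Paley--Zygmund wrapper), the proposal reduces the conjecture to an equivalent unproven claim rather than proving it.

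A further point worth flagging: the probability in Conjecture~\ref{conject1} is taken over the uniform distribution on output strings $\boldsymbol{x}$ of a \emph{single fixed instance}, where the randomness over effective logical circuits arises from the computational-branch bits $(x_1,\ldots,x_{(n-1)m})$ rather than from an externally sampled circuit ensemble. Your recasting of the second moment as a swap-operator expectation over a ``Haar-like average'' therefore needs a careful argument that averaging over branches of one instance is equivalent to averaging over an i.i.d.\ circuit ensemble; the Brand\~ao--Harrow--Horodecki-type 2-design results you cite are proved for independently drawn local gates and do not transfer automatically to this correlated, discrete, fixed-instance setting. This is precisely why the paper states the property as a conjecture, and your write-up, while a reasonable survey of how one would attack it, leaves it in the same status.
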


The above encapsulate two properties for the Ising sampler: the worst to average case hardness equivalence for multiplicative approximations and the probability anti-concentration conjecture.

\begin{theorem}[Fault-tolerant hardness]
\label{thm3}
Assume that Conjectures  \ref{conject2} and \ref{conject1} hold. 
Then sampling from the output distribution of the experimental Ising sampler $q^{\mathrm{nsy}}(\boldsymbol{x},y)$ with a classical machine, assuming a $(\varepsilon',\varepsilon)$-sound verification scheme
 (Def.~\ref{def_ver}/Def.~\ref{def_ver_post})  accepts with 
\begin{equation}
\label{eq:epsi}
\varepsilon \leq \frac{(\beta_1+\beta_2 -1-2^{-N})\alpha_1 \alpha_2}{2},
\end{equation}
implies, with confidence $\varepsilon'$, a collapse in the polynomial hierarchy to the third level.
\end{theorem}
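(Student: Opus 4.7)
The plan is to follow the now-standard Aaronson--Arkhipov / Bremner--Jozsa--Shepherd template for proving hardness of additive-error sampling, adapted to the post-selected and verified setting. Suppose for contradiction that there is a classical polynomial-time sampler $\mathcal{C}$ whose output $(\boldsymbol{x},y)$ is identical in distribution to the experimental output $q^{\mathrm{nsy}}(\boldsymbol{x},y)$ that the verifier accepts. By the $(\varepsilon',\varepsilon)$-soundness (Def.~\ref{def_ver_post}) we then have, with confidence $\varepsilon'$, that $\mathrm{var}^{\mathrm{Post}}\leq \varepsilon$, i.e.\ the conditional distribution of $\mathcal{C}$ given $y=0$ is $\varepsilon$-close in total variation to the exact Ising-sampler distribution $q^{\mathrm{exc}}(\boldsymbol{x}\mid y=0)$.

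First I would apply Stockmeyer's approximate counting theorem to $\mathcal{C}$. Because $p^{\mathrm{cls}}(\boldsymbol{x},y=0)$ is the acceptance probability of a polynomial-time randomised process, a $\mathrm{BPP}^{\mathrm{NP}}$ machine can produce an estimator of $p^{\mathrm{cls}}(\boldsymbol{x},y=0)$ with multiplicative accuracy $1\pm 1/\mathrm{poly}(N)$, and likewise for $\mathrm{Pr}(y=0)$; their ratio gives a multiplicative approximation of the conditional $p^{\mathrm{cls}}(\boldsymbol{x}\mid y=0)$ in $\mathrm{BPP}^{\mathrm{NP}}$. This is the only place the classical sampler is used; from here the argument is purely about how the approximate conditional distribution compares to $q^{\mathrm{exc}}(\boldsymbol{x}\mid y=0)$.

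Second, I would convert the additive closeness $\mathrm{var}^{\mathrm{Post}}\leq \varepsilon$ into pointwise multiplicative closeness on a large fraction of $\boldsymbol{x}$. A Markov argument on $\sum_{\boldsymbol{x}}|p^{\mathrm{cls}}(\boldsymbol{x}\mid y=0)-q^{\mathrm{exc}}(\boldsymbol{x}\mid y=0)|\leq 2\varepsilon$ shows that the fraction of $\boldsymbol{x}$ with additive error exceeding $\alpha_1\alpha_2/2^{N}$ is at most $2\varepsilon/(\alpha_1\alpha_2)$. Intersecting this set with the fraction $\beta_2$ of anti-concentrated $\boldsymbol{x}$ from Conjecture~\ref{conject1} via a union bound identifies a subset of $\boldsymbol{x}$ of fraction at least $\beta_2 - 2\varepsilon/(\alpha_1\alpha_2)$ on which the Stockmeyer-approximated $p^{\mathrm{cls}}(\boldsymbol{x}\mid y=0)$ is multiplicatively within $\alpha_1$ of $q^{\mathrm{exc}}(\boldsymbol{x}\mid y=0)$. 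The $2^{-N}$ term absorbs the Stockmeyer slack together with the slack from normalising by $\mathrm{Pr}(y=0)$; choosing $\varepsilon\leq (\beta_1+\beta_2-1-2^{-N})\alpha_1\alpha_2/2$ guarantees that the complement of this good set cannot cover the fraction $\beta_1$ of instances that Conjecture~\ref{conject2} identifies as $\#\mathrm{P}$-hard, so the $\mathrm{BPP}^{\mathrm{NP}}$ algorithm must be approximating at least one such hard instance.

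Finally I would invoke Toda's theorem: the chain $\#\mathrm{P}\subseteq\mathrm{BPP}^{\mathrm{NP}}$ implied above gives $\mathrm{PH}\subseteq\mathrm{P}^{\#\mathrm{P}}\subseteq\Sigma_3^{P}$, i.e.\ the polynomial hierarchy collapses to the third level, with confidence $\varepsilon'$ inherited from the soundness statement. The main technical obstacle I anticipate is the bookkeeping around the conditioning on $y=0$: because $\mathrm{Pr}(y=0)$ can be small, one must ensure that the two multiplicative Stockmeyer approximations propagate cleanly through the division without degrading the $1/\mathrm{poly}(N)$ accuracy needed to keep the anti-concentration floor $\alpha_2/2^{N}$ strictly above the Markov-controlled additive slack; the explicit $-2^{-N}$ in the statement is precisely the cushion required for this. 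The rest reduces to the counting-and-averaging arguments used for the unconditional Ising sampler in Ref.~\cite{gao_quantum_2016}, now applied to the $y=0$ post-selected distribution that Protocols 2a/2b expose.
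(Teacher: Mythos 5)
Your proposal is correct and follows essentially the same route as the paper's proof: Stockmeyer applied to the joint distribution and the marginal $q^{\mathrm{nsy}}(y=0)$ to get a multiplicative approximation of the conditional, a triangle inequality combining that with the soundness bound $\mathrm{var}^{\mathrm{Post}}\leq\varepsilon$, a Markov/counting step converting additive to multiplicative error on the anti-concentrated fraction, an overlap argument with the $\beta_1$-fraction of hard instances, and Toda's theorem. The only slight discrepancy is interpretive: in the paper the $2^{-N}$ cushion exists purely to guarantee that the good set and the $\#$P-hard set overlap in at least one instance (the Stockmeyer slack is instead folded into $\alpha_1$ as the $o(1)$ term), but your counting arrives at the same conclusion.
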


In order to have a scheme with positive  soundness parameter $\varepsilon$, we need our conjectures to satisfy  $\beta_1 + \beta_2 -2^{-N} \geq 1$.

The proof uses Stockmeyer's theorem \cite{stockmeyer1985approximation}, which is based on a hypothetical machine and predicts, if classical sampling is possible, the collapse of the polynomial hierarchy to the third level. The collapse of the infinite polynomial hierarchy at any level is considered a highly unlikely event in computational complexity theory.

\subsection{Proof}

Compared the the FT hardness proof of \cite{fujii2016noise}, our proof is for the more general case of additive as opposed to multiplicative approximation, thus answering an open question of that paper.

 We follow a similar line of reasoning as the original translationally invariant Ising sampler~\cite{gao_quantum_2016} - proof by contradiction. The main difference is that we use probabilities conditioned on null syndromes. Other differences include adding explanation of intermediate steps, a discussion about obfuscation and breaking the original single Conjecture into two separate ones: one for anti-concentration and one for average case hardness.
 
  We also follow a line similar to an earlier result \cite{d71a5fed54df414db082ed64ed5c9ef7}. Compared to that result our proof does not assign specific numbers to the parameters of the conjectures, but states them in a parametrised fashion.
  
The following proof holds for a verification scheme according to Def.~\ref{def_ver_post}. In the case of a verification scheme of Def.~\ref{def_ver} the same proof holds, replacing var$^{\text{Post}}$ with var and setting $q^{\text{nsy}}(y=0)=1$.

\begin{proof}[Proof of Theorem \ref{thm3}]

If we can classically sample from $q^{\text{nsy}}(\boldsymbol{x},y)$ (which means that our quantum computer could be a classical impostor), then estimating the probabilities of the distribution
 with exponential accuracy
 is in $\#P$: We can construct a polynomial time non-deterministic Turing machine that uses the sampler as an oracle that accepts when a specific string is sampled, so that the probability of that event could be estimated, if we could count the accepting branches.  We could also estimate the marginal probabilities $q^{\text{nsy}}(y)$ in such a manner. Notice that we could not apply the same argument for the quantum sampler since we cannot extract its randomness as input to build the oracle.
   From Stockmeyer's theorem~\cite{stockmeyer1985approximation}, there exists an $FBPP^{NP}$ machine that can compute explicitly the values  $p^{\text{apx}}(\boldsymbol{x},y)$, such that for every $\boldsymbol{x},y$
\begin{equation}
| p^{\text{apx}}(\boldsymbol{x},y) - q^{\text{nsy}}({\boldsymbol{x}},y) | \leq  \frac{q^{\text{nsy}}({\boldsymbol{x}},y)}{\text{poly}(N)}.
\end{equation}

The same can be applied in calculating the marginals. Thus a $FBPP^{NP}$ machine can calculate $q^{\text{nsy}}(y=0)$, the probability of accepting the syndrome measurements, with accuracy of the same scaling as the joint probability.

Using the fact that $q^{\text{nsy}}(y=0)$ is non-zero (it is lower bounded by $(1- \epsilon)^N$, so one can get a non-zero estimate in $\#P$ and approximate it using Stockmeyer) it is easy to prove that for conditional probabilities,
\begin{equation}
| p^{\text{apx}}(\boldsymbol{x}|y=0) - q^{\text{nsy}}({\boldsymbol{x}}|y=0) | \leq  \frac{q^{\text{nsy}}({\boldsymbol{x}}|y=0)}{\text{poly}(N)}.
\end{equation}

Applying the triangle inequality, for every $\boldsymbol{x}$ the distance between the values
$p^{\text{apx}}(\boldsymbol{x}|y=0)$ and the exact conditional probability $q^{\text{exc}}(\boldsymbol{x}|y=0)$ of the Ising sampler is
\begin{eqnarray}
 | p^{\text{apx}}(\boldsymbol{x}|y=0) - q^{\text{exc}}(\boldsymbol{x}|y=0)|  \nonumber \\
 \leq | p^{\text{apx}}(\boldsymbol{x}|y=0) - q^{\text{nsy}}(\boldsymbol{x}|y=0)| \nonumber \\  + | q^{\text{nsy}}(\boldsymbol{x}|y=0) - q^{\text{exc}}(\boldsymbol{x}|y=0)|  \\
 \leq \frac{q^{\text{nsy}}(\boldsymbol{x}|y=0)}{\text{poly}(N)} +  | q^{\text{nsy}}(\boldsymbol{x}|y=0) \nonumber \\- q^{\text{exc}}(\boldsymbol{x}|y=0)|.
\label{eq:threeterms}
\end{eqnarray}
Assuming an $(\varepsilon',\varepsilon)$-sound verification scheme has accepted, it follows that  $ | q^{\text{nsy}}(\boldsymbol{x}|y=0) - q^{\text{exc}}(\boldsymbol{x}|y=0)| \leq 2\varepsilon$,  with confidence $\varepsilon'$. 

Obfuscation of the probability estimated in this model comes by construction. We can pick a computational branch $(x_1, \ldots, x_{m(n-1)})$ and final layer output string $(x_{n}, \ldots, x_{m n})$ to estimate at random, without revealing any information to the sampler. This is possible because the uniform distribution over branches  (see Eq.~(\ref{eqnequi}) in Section~\ref{sec:correct}) is created within a fixed instance of the Ising sampler, with no extra input provided to the sampler. 
 The expectation of $\text{var}^{\text{Post}}$ over the uniform distribution on $\boldsymbol{x}$ is $\leq \frac{2\varepsilon}{2^{mn}}$, where $m,n$ are the dimensions of the logical `extended' brickwork state and $N=mn$.

Markov inequality relates the probability of a random variable exceeding a certain value with its expectation. For a random variable $X$ and $\gamma >0,$
\begin{equation}
\text{prob}(X \geq \gamma  ) \leq \frac{E(X)}{\gamma}.
\end{equation}

Applying the Markov inequality to the second term in Eqn.~(\ref{eq:threeterms})
\begin{equation}
\text{prob}( | q^{\text{nsy}}(\boldsymbol{x}|y=0) - q^{\text{exc}}(\boldsymbol{x}|y=0)| \geq \gamma ) \leq  \frac{2\varepsilon}{2^{N} \gamma}. 
\end{equation}
By changing variables
\begin{eqnarray}
\text{prob}\left( | q^{\text{nsy}}(\boldsymbol{x}|y=0) - q^{\text{exc}}(\boldsymbol{x}|y=0) | \right. \nonumber \\  \left. \geq  \frac{2\varepsilon}{2^{N} \gamma} \right)  \leq \gamma 
\end{eqnarray}

or

\begin{eqnarray}
\text{prob}\left( | q^{\text{nsy}}(\boldsymbol{x}|y=0) - q^{\text{exc}}(\boldsymbol{x}|y=0) | \right. \nonumber \\ \left. \leq  \frac{2\varepsilon}{2^{N} \gamma} \right)  \geq 1 - \gamma. 
\end{eqnarray}

Condensing this with Eqn.~(\ref{eq:threeterms}),

\begin{eqnarray}
\text{prob} \Bigg( | p^{\text{apx}}(\boldsymbol{x}|y=0) - q^{\text{exc}}(\boldsymbol{x}|y=0)|  \leq  \nonumber \\ \frac{q^{\text{exc}}(\boldsymbol{x}|y=0)}{\text{poly}(N)} 
 \left. + \frac{2\varepsilon (1+o(1))}{2^{N} \gamma} \right)  \nonumber \\
\geq 1- \gamma,   
\end{eqnarray}

Thus, for more than $1-\gamma$ fraction of instances of $\boldsymbol{x}$
\begin{eqnarray}
| p^{\text{apx}}(\boldsymbol{x}|y=0) - q^{\text{exc}}(\boldsymbol{x}|y=0)| \nonumber \\ \leq \frac{q^{\text{exc}}(\boldsymbol{x}|y=0)}{\text{poly}(N)}  + \frac{2\varepsilon (1+o(1))}{2^{N} \gamma}, \label{fractionof} 
\end{eqnarray}

which means that strongly simulating, i.e. calculating the probabilities, of the Ising distribution with the above mixture of additive and multiplicative accuracy for more than $1-\gamma$  fraction of instances of $\boldsymbol{x}$ is in the third level of the polynomial hierarchy.

We use the two conjectures to continue our proof. From Eqn. (\ref{fractionof}) and Conjecture~\ref{conject1}, setting $\varepsilon_1 \in \frac{2\varepsilon (1+o(1))}{\gamma \alpha_2}$, there must be at least $\beta_2 - \gamma$ fraction of instances of $\boldsymbol{x}$ (we assume $\gamma < \beta_2$)
 such that
\begin{eqnarray}
| p^{\text{apx}}(\boldsymbol{x}|y=0) - q^{\text{exc}}(\boldsymbol{x}|y=0)|    \nonumber \\
\leq \frac{q^{\text{exc}}(\boldsymbol{x}|y=0)}{\text{poly}(N)} + \varepsilon_1 q^{\text{exc}}(\boldsymbol{x}|y=0) \\
 \leq  (o(1) + \varepsilon_1) q^{\text{exc}}(\boldsymbol{x}|y=0).
\end{eqnarray}

Let Conjecture~\ref{conject2} hold with $\beta_1 \geq 1 - (\beta_2 - \gamma) + 2^{-N}$  
 and $\alpha_1 \in o(1) + \varepsilon_1$.  These imply for the soundness parameter
\begin{equation}
\varepsilon \leq \frac{(\beta_1+\beta_2 -1-2^{-N})\alpha_2 \alpha_1}{2} ,
\end{equation}
which is positive for $\beta_1 + \beta_2 -2^{-N} \geq 1$.

Then, there exists at least one instance for which the  multiplicative approximation the $FBPP^{NP}$ Stockmeyer machine have calculated is $\#P$-hard.

Then, $PH \subseteq P^{\# P} \subseteq P^{FBPP^{NP}}\subseteq \Sigma_3^{P}$, where the first inclusion is given by Toda's theorem, and the polynomial hierarchy collapses to the third level, an event expected to be highly unlikely.

\end{proof}

Our average case conjecture is implicitly contained in  a stronger conjecture that includes anti-concetration in~\cite{gao_quantum_2016}. Notice that our average case conjecture, however, applies to a slightly different distribution. The difference is that, in our case, we can have extra rotations on some of the measurement angles, as a consequence of not applying the correction gates conditioned on the measurement outcome of the teleportation step of the FT gates (Fig.~(\ref{teleport1}) $(ii)$). This issue will affect the implementation of the measurements with non-zero angles in the extended brickwork state (Figure \ref{brick1} $(ii)$). Assuming magic $\ket{+_{\pi/8}}$ states are used for the implementation of the $\pi/8$ rotations ($\sqrt{T}$ gates), the byproduct is a $k_0 \pi/4$ rotation on the measurements of the brickwork state vertices (Fig.~(\ref{brick1}) $(i)$), for some $k_0$ which depends on the measurement outcomes of the magic state teleportation steps. The original argument made in \cite{gao_quantum_2016} to support their average case conjecture  relies on the fact that from random measurement outcomes of the vertices of the extended brickwork state, a uniform rotation over the $8$ different $\{k \pi/4\}$ angles is produced on the brickwork state. The argument is that this corresponds to random circuits which will likely produce highly entangled states (see also p.~$7$ in Supplemental Material of Ref.~\cite{gao_quantum_2016}). In our case it is the same, because the extra rotations cannot change the uniformity of these angles. Thus, the computation applied is based on a random brickwork MBQC pattern and connections to the random circuit model, such as in \cite{boixo2018characterizing}, can be made. In the latter paper, the average hardness of sampling from a random circuit is supported by drawing connections to quantum chaos and some numerical evidence. In another recent result~(\cite{bouland2018quantum}), average-case hardness has been proven for random circuits for the exact case.

\section{Discussions}
\label{sec:Disc}

Quantum computational supremacy demonstration is believed to be easier than universal quantum computing since it may not have to fulfil atleast one of DiVincenzo's criteria. 
Our work shows that fault-tolerant verifiable quantum supremacy is quantitatively easier than fault-tolerant universal quantum computation in terms of thresholds. 
This relies on combining the notion of post-selected thresholds with trap-based verification which allows error-detection-based fault tolerance to combat noise. 
Such a combination is not known to exist for other quantum verification methods.
In the trap-based verification schemes we use, it is the isolatable nature of the traps that enables error-detection-based fault tolerance.

The techniques developed here have a wide range of applicability. We apply it to the Ising sampler as a specific example of a model for quantum supremacy. For example, they could be applied in implementations of the Boson Sampling model~\cite{aaronson10boson_sampling} in a fault-tolerant quantum computer based on qubits~\cite{peropadre2015spin}. Our methods should also apply to the random circuit IQP model~\cite{PhysRevLett.117.080501} which, in the `graph program' implementation~\cite{shepherd09iqp} requires a smaller than the Ising sampler, but non-planar, resource state. Finally, it can be applied to recently studied quantum supremacy architectures on low-periodicity planar lattices~\cite{bermejo2017architectures}. The only requirement for our isolated trap computation technique is that the underlying graph state is bipartite.  Thus, it can even be used to simplify the original verification protocol \cite{fitzsimons12vubqc} for a universal quantum computer, in the case it runs a classical output problem, and use our technique to implement it in a fault-tolerant way.

Trap-based techniques require blindness, which is not believed possible with a classical verifier~\cite{morimae2014impossibility, aaronson2017implausibility}. Even verification protocols that do not require blindness, such as~\cite{aharonov10abe_published}, still need some level of quantum encryption. This is true for any protocol based on quantum authentication schemes~\cite{barnum02qas}, made possible by a quantum verifier. Verification protocols with classical verifiers exist~\cite{shepherd09iqp,reichardt12ruv_nature,mahadev2018classical}, but require extra assumptions such as additional computational hardness conjectures or non-communicating provers respectively. For general reviews of blind and verifiable protocols see Refs.~\cite{fitzsimons2017private,gheorghiu2018verification}.

Our work is one of the first on fault-tolerant verification, which was known to be a challenging open question. Another recent progress \cite{fujii2016verifiable} presents a fault-tolerant verification technique for universal MBQC that requires the verifier to perform measurements, as opposed to preparations as in our scheme.  Our scheme is complementary to contemperaneous work on composable verification of IQP, which is a classical hypothesis test with the verifier preparing perfect stabilizer states and the prover using a non-planar graph~\cite{DanielMills2017}. A formal proof of composability of our protocol is a desirable next step and may be developed using the  methods given in \cite{dunjko14composable}.

A direction for future investigation should be the potential of other known fault-tolerant quantum codes in providing improved post-selected thresholds, as well as the search for quantum codes for non-universal models. Another direction should be the study of known quantum supremacy models for which a verifiable version using the same restricted physical assumptions as the original exists \cite{kapourniotis2014blindness}, as well as the development of such new non-universal models.
More technically, an open problem for our verification scheme is to find a graph state with local rotations being only multiples of $\pi/4$ and still generate random universal logical measurement angles as in the existing scheme. This will make the fault-tolerant version easier because it will be based on more standard magic states. Also, other universal constructions with $xy$-plane measurements can also be considered~\cite{mantri2016universality,PhysRevA.97.022333}.

\section*{Acknowledgements}

We thank Zhang Jiang, Ashley Montanaro, Michael Bremner for early discussions and Elham Kashefi, Petros Wallden and Andru Gheorghiu for discussions on the trap-based verification technique. We thank the anonymous referees for valuable feedback that helped improve the manuscript. We thank Dominic Branford and Samuele Ferracin for discussions and the former for also helping with the figures. This work was supported, in part, by the UK EPSRC (EP/K04057X/2), and the UK National Quantum Technologies Programme (EP/M013243/1).

\bibliographystyle{plainnat}
\bibliography{VerificationReferences}

\onecolumn\newpage
\appendix

\section{\label{app:bridge}Bridge and break operators and the Ising Sampler}

To understand the procedure of carving a specific graph out of a square lattice state by using only $xy$-plane measurements, we explain two types of operations originally introduced in \cite{fitzsimons12vubqc, hein2004multiparty}.

The first is \emph{break} operators. Let $i$ be a vertex we want to remove from the original graph, together with the connection to its neighbours. One can achieve this by performing a Pauli $Z$ measurement on the qubit that corresponds to $i$ and discard the outcome. However, we do not have the power to perform measurements out of the $xy$-plane in our protocol, otherwise we risk revealing the position of the traps by asking the prover to change the basis. Pauli $Z$-measurements can be simulated by preparing  \emph{dummy} qubits in the $\ket{0}$ state. Then,the $cZ$ gate applied by the prover has no effect in entangling it with its neighbours. The measurement can have any arbitrary rotation since the qubit is isolated and does not participate in the computation. In order to keep the whole procedure blind we instead prepare the qubit in state $ \ket{d_i}$ for $d_i$ chosen independently and uniformly at random from $\{0,1\}$. This ensures  that the state is identical to the maximally mixed state, as is the case for the other qubits in a blind protocol. Also, we apply on each of its neighbours Pauli operation $Z^{d_i}$, before sending them to the prover, so that we cancel the effect that the prover's entangling will have on that neighbour.

The second  is called \emph{bridge} operators. Let $i$ be a vertex of degree $2$ that we want to remove  in the original graph and join its neighbours by an new edge. To achieve this we apply a Pauli $Y$ measurement (measurement angle $\pi/2$) on the qubit that corresponds to vertex $i$ and add $\pi/2$ on the angles of each the two neighbours. Conditioned on this measurement giving $0$ the resulting graph, when we trace out the measured qubit, is the desired one. If the measurement outcome is $1$ then, in order to get the correct graph, we need to apply a $Z$ correction on the neighbours. Since our Ising sampler model is nonadaptive and all our measurements are in the $xy$ plane we can achieve this by flipping the measurement outcomes of the corresponding qubits. Notice that this is not an issue in the trap rounds that we explain in Section~\ref{NFTverif} since there are no bridge operations in this case. 

\section{\label{app:Thm1}Proof of soundness in Theorem~\ref{thm1}}

\begin{proof}
Let $U_P(\boldsymbol{r},\boldsymbol{d})$ denote the correct unitary operation of the protocol. It includes everything preceding $U_B$ in Fig.~(\ref{completeness3}), and we have only included in the arguments the random parameters that will be averaged over later. The vector $\boldsymbol{r}$ contains as elements bits $r_i,r'_i$, where $i$ ranges from $1$ to $(2\kappa+1)N$ ($2^{2(2\kappa+1)N}$ possible values)  and the vector $\boldsymbol{d}$ contains as elements bits $d_i$, where $i$ ranges again from $1$ to $(2\kappa+1)N$ (for the non-dummy qubits we assume fixed $d_i=0$, thus $2^{\kappa N}$ possible values). The rest of the  random parameters are the vector $\boldsymbol{\theta}$ which contains elements $\theta_i \in \{0, \frac{\pi}{8},\frac{2\pi}{8}, \ldots,  \frac{15\pi}{8}\}$ for $1 \leq i \leq (2\kappa+1)N$ ($16^{(2\kappa+1)N}$ possible values) and the vector $\boldsymbol{t}$ which contains the indices of the positions of the traps in all $2\kappa$ trap systems and takes all ${{2 \kappa +1}\choose{\kappa}} (\kappa+1)$ possible values allowed by the construction. The distributions over all the possible values of the above random parameters are uniform.

In the honest case, after $U_P(\boldsymbol{r},\boldsymbol{d})$ is applied, the state of the trap system becomes $\rho_t= \bigotimes_{i \in \boldsymbol{t}} Z^{r_i} \ket{+}_i \bra{+}_i Z^{r_i}$, where the index $i$ takes values from the elements of $\boldsymbol{t}$ that represent the positions of the traps. In the dishonest case (again based on Fig.~(\ref{completeness3})), tracing out the prover's private system, the deviation $U_B$ becomes an arbitrary CPTP map denoted by $\mathcal{E}$. The  probability of getting all zeros of the trap system $\rho'_t$, right after the measurements, can be written as

\begin{eqnarray}
F_t^2 & \equiv & \sum_{\boldsymbol{r},\boldsymbol{\theta},\boldsymbol{t},\boldsymbol{d}} p(\boldsymbol{r},\boldsymbol{\theta},\boldsymbol{t},\boldsymbol{d}) \text{Tr} \left( \bigotimes_{i \in \boldsymbol{t}}  Z^{r_i} \ket{+}_i \bra{+}_i Z^{r_i} \rho'_t \right )  
\end{eqnarray}

\begin{eqnarray}
&=&\sum_{\boldsymbol{r},\boldsymbol{\theta},\boldsymbol{t},\boldsymbol{d}, \boldsymbol{b}} p \text{Tr} \left( \bigotimes_{i \in \boldsymbol{t}}  Z^{r_i} \ket{+}_i \bra{+}_i Z^{r_i}  \text{Tr}_{\{i : i \notin \boldsymbol{t}\}} \left( \bigotimes_i Z^{b_i} \ket{+}_i\bra{+}_i Z^{b_i} \mathcal{E} \left(U_P(\boldsymbol{r},\boldsymbol{d}) \bigotimes_{i \notin \boldsymbol{m(t)}} \ket{+}_i \bra{+}_i \right. \right. \right. \nonumber \\ & & \bigotimes_{i \in \boldsymbol{m(t)}} \ket{0}_i \bra{0}_i  U_P(\boldsymbol{r},\boldsymbol{d})^{\dagger} 
 \left. \left. \left. \bigotimes_i \ket{\delta_i(\theta_i,r_i)} \bra{\delta_i(\theta_i,r_i)}\right) \bigotimes_i Z^{b_i} \ket{+}_i\bra{+}_i Z^{b_i}\right)\right) 
\end{eqnarray}

where the inner trace in the formula is taken over all the systems except the trap system. The vector $\boldsymbol{b}$ has been introduced, where elements $b_i$ are bits which correspond to the results of measurements of bits $i$ for $1 \leq i \leq (2\kappa+1)N$. The probability $p$ comes from the uniform distribution over all possible values of the random parameters and is therefore $1/(2^{2(2\kappa+1)N} 2^{\kappa N}16^{(2\kappa+1)N}{{2 \kappa +1}\choose{\kappa}} (\kappa+1))$. Also, $\boldsymbol{m(t)}$ are the positions of the dummy qubits for a choice of trap positions $\boldsymbol{t}$.

Summing over $\theta$'s creates the maximally mixed state for the $\delta$'s and summing over the $r$'s and $d$'s of  the computational system and the dummy system creates the maximally mixed state for those systems. This is because just before the application of deviation operator these systems are not entangled with the trap system and at the same time a quantum one-time-pad is applied on them. We can therefore trace them out and update the CPTP map $\mathcal{E}$ to a new CPTP map $\mathcal{E}'$ that applies on the remaining system (of dimension $2^{N'}$) and does not depend on the secret parameters.

The CPTP map $\mathcal{E}'$ can be written as a Kraus decomposition, where the Kraus operators $\{E_u\}$ obey $\sum_u E_u E_u^{\dagger} = I_{2^{N'}}$, where $I_{2^{N'}}$ is the identity on a $2^{N'}$ dimension system. Each Kraus operator can be further decomposed into the Pauli basis as $E_u = \sum_k a_{u,k} P_k$, where $\{P_k\}$ are all generalized elements of the Pauli basis applying on a $2^{N'}$ dimension system and $\{a_{u,k}\}$ are complex coefficients. Also, we remind the reader that the $\phi$ parameters of the trap qubits are all zero and therefore the remaining honest operation consists only of the rotations by the $r$ parameters.

\begin{eqnarray}
F_t^2 & = & \sum_{\boldsymbol{r_t},\boldsymbol{t},\boldsymbol{b_t}}p(\boldsymbol{r_t},\boldsymbol{t})\text{Tr}\bigg( \sum_u  \sum_{k=1}^{4^{N'}} \sum_{l=1}^{4^{N'}} a_{u,k} a_{u,l}^*  \bigotimes_{i \in \boldsymbol{t}} Z^{r_i} \ket{+}_i \bra{+}_i Z^{r_i}  \bigotimes_{i \in \boldsymbol{t}} Z^{b_i} \ket{+}\bra{+} Z^{b_i} P_{k | i} Z^{r_i} \ket{+}_i \bra{+}_i Z^{r_i} P_{l | i} \nonumber \\
& & Z^{b_i} \ket{+}\bra{+} Z^{b_i} \bigg),
\end{eqnarray}
where $P_{k | i}$ denotes the Pauli operator that applies on qubit with index $i$ from the generalized Pauli basis operator $P_k$. Because of the state of the system, in particular the fact that $X\ket{+}=\ket{+}$, we can add `free' Pauli $X$ operators randomized by new parameters $r'$ taken uniform at random.

\begin{eqnarray}
F_t^2 &=& \sum_{\boldsymbol{r_t}, \boldsymbol{r'_t},\boldsymbol{t},\boldsymbol{b_t}}p(\boldsymbol{r_t}, \boldsymbol{r'_t},\boldsymbol{t})\text{Tr} \bigg(  \sum_{u,k,l} a_{u,k} a_{u,l}^* \bigotimes_{i \in \boldsymbol{t}} Z^{r_i}  \ket{+}_i \bra{+}_i Z^{r_i}  \bigotimes_{i \in \boldsymbol{t}}  Z^{b_i} \ket{+}\bra{+}  X^{r'_i} Z^{b_i} P_{k | i} Z^{r_i} X^{r'_i} \ket{+}_i \bra{+}_i \nonumber \\
& & X^{r'_i} Z^{r_i} P_{l | i}  Z^{b_i} X^{r'_i} \ket{+}\bra{+} Z^{b_i} \bigg) 
\end{eqnarray}

By changing variables $b_i' = b_i + r_i$ and applying the cyclic property of the trace to move $Z^{r_i} X^{r'_i}$ around
\begin{eqnarray}
F_t^2 &=& \sum_{\boldsymbol{r_t}, \boldsymbol{r'_t},\boldsymbol{t},\boldsymbol{b'_t}}p(\boldsymbol{r_t}, \boldsymbol{r'_t},\boldsymbol{t})\text{Tr}\bigg(  \sum_{u,k,l} a_{u,k} a_{u,l}^* \bigotimes_{i \in \boldsymbol{t}}   \ket{+}_i \bra{+}_i   \bigotimes_{i \in \boldsymbol{t}}  Z^{b'_i} \ket{+}\bra{+}  X^{r'_i} Z^{b'_i+r_i} P_{k | i} Z^{r_i} X^{r'_i} \ket{+}_i \bra{+}_i           \nonumber \\
& & X^{r'_i} Z^{r_i} P_{l | i}  Z^{b'_i+r_i} X^{r'_i}  \ket{+}\bra{+} Z^{b'_i} \bigg)
\end{eqnarray}

Applying the Pauli twirl lemma~\cite{dankert09operator_twirl}, proven in Appendix~\ref{app:lemma}, by averaging over $\boldsymbol{r_t}, \boldsymbol{r'_t},$ we get

\begin{eqnarray}
F_t^2 &=& \sum_{\boldsymbol{t},\boldsymbol{b'_t}} p(\boldsymbol{t})   \sum_{u,k} |a_{u,k}|^2  \prod_{i \in \boldsymbol{t}}  |\bra{+}_i  Z^{b'_i} \ket{+}\bra{+} Z^{b'_i}   P_{k |i }  \ket{+}_i |^2 \nonumber \\
&=& \sum_{\boldsymbol{t}} p(\boldsymbol{t})   \sum_{k} |\alpha_{k}|^2  \prod_{i \in \boldsymbol{t}}  |\bra{+}_i    P_{k |i }  \ket{+}_i |^2 ,
\label{eq1_}
\end{eqnarray}

where $ |\alpha_{k}|^2 = \sum_{u} |a_{u,k}|^2 $ and $\sum_{k} |\alpha_{k}|^2 = 1$ from the unital property of the attack.

A similar analysis is applied to calculate the average fidelity $F_c=F(\rho_c, \rho'_c)$ of the computational state after the measurements. In the honest case the computational state $\rho_c$ just before the measurement will be disentangled from the rest of the system: $ \bigotimes_{i \in \boldsymbol{c}} Z^{r_i} X^{r'_i}R_z(\phi_i)X^{r'_i} \ket{G} \bra{G} \bigotimes_{j \in \boldsymbol{c}} X^{r'_j} R_z(-\phi_j) X^{r'_j} Z^{r_j}$, where $\boldsymbol{c(t)}$ are the positions of the computational qubits for a choice of trap positions $\boldsymbol{t}$ and $\ket{G} \bra{G}$ is the computational graph state. The latter can be expressed as $E_G \bigotimes_{i \in \boldsymbol{c}} \ket{+}_i \bra{+}_i E_G^{\dagger}$, where $E_G$ denotes all entangling operators $cZ$ that apply on a graph $G$. In the dishonest case, for an attack $\mathcal{E}$ the fidelity $\bar{F}_c$ is

\begin{eqnarray}
F_c &\equiv& \sum_{\boldsymbol{r},\boldsymbol{r}',\boldsymbol{\theta},\boldsymbol{t}, \boldsymbol{d}} p(\boldsymbol{r},\boldsymbol{r}',\boldsymbol{\theta},\boldsymbol{t}, \boldsymbol{d}) F(\rho_c, \rho'_c)  \\  
& = & \sum_{\boldsymbol{r},\boldsymbol{r}',\boldsymbol{\theta},\boldsymbol{t}, \boldsymbol{d}} p(\boldsymbol{r},\boldsymbol{r}',\boldsymbol{\theta},\boldsymbol{t}, \boldsymbol{d})  \text{Tr}\left( (\rho_c \rho'_c)^{1/2} \right)   \\
 &\geq&    \text{Tr}\left( \left (\sum_{\boldsymbol{r},\boldsymbol{r}',\boldsymbol{\theta},\boldsymbol{t}, \boldsymbol{d}} p(\boldsymbol{r},\boldsymbol{r}',\boldsymbol{\theta},\boldsymbol{t}, \boldsymbol{d}) \rho_c \rho'_c \right)^{1/2} \right)    \\
&=& \text{Tr}\left(\left(\sum_{\boldsymbol{r},\boldsymbol{r}', \boldsymbol{\theta},\boldsymbol{t}, \boldsymbol{d},\boldsymbol{b}} p \bigotimes_{i \in \boldsymbol{c}} Z^{r_i} X^{r'_i} R_z(\phi_i) X^{r'_i} \ket{G} \bra{G}  \bigotimes_{j \in \boldsymbol{c}} X^{r'_j} R_z(-\phi_j) X^{r'_j} Z^{r_j} 
\text{Tr}_{\{i : i \notin \boldsymbol{c}\}} \left( \bigotimes_i Z^{b_i} \ket{+}_i \right. \right. \right.\nonumber \\
&& \bra{+}_i   Z^{b_i} \mathcal{E} ( U_P(\boldsymbol{r},\boldsymbol{r}',\boldsymbol{d})  \bigotimes_{i \notin \boldsymbol{m(t)}} \ket{+}_i \bra{+}_i \bigotimes_{i \in \boldsymbol{m(t)}} \ket{0}_i \bra{0}_i  U_P(\boldsymbol{r},\boldsymbol{r}',\boldsymbol{d})^{\dagger} 
  \bigotimes_i \ket{\delta_i(\theta_i,r_i,r'_i)} \bra{\delta_i(\theta_i,r_i,r'_i)})  \nonumber \\
&&  \left. \left. \left. \bigotimes_i Z^{b_i} \ket{+}_i\bra{+}_i Z^{b_i}\right) \right)^{1/2} \right).
\end{eqnarray}

Summing over the $\theta$'s of the $\delta$'s and the $r$'s and the $d$'s of the trap and the dummy system creates the maximally mixed system for these systems which can be traced over. Expressing the attack on the remaining system (of dimension $2^{N'}$) using the Kraus decomposition with each Kraus element decomposed in the Pauli basis we get as before

\begin{eqnarray} 
F_c  \geq   \text{Tr}\bigg( \bigg( \sum_{\boldsymbol{t},\boldsymbol{r_{c(t)}},\boldsymbol{r'_{c(t)}},\boldsymbol{b_{c(t)}}}  p \sum_{u} \sum_{k=1}^{4^{N'}} \sum_{l=1}^{4^{N'}}  a_{u,k} a^{*}_{u,l} \bigotimes_{i \in \boldsymbol{c}} Z^{r_i} X^{r'_i} R_z(\phi_i) X^{r'_i} \ket{G} \bra{G}  
 \bigotimes_{j \in \boldsymbol{c}}X^{r'_j} R_z(-\phi_j) 
 X^{r'_j} Z^{r_j} \nonumber \\ 
  \bigotimes_{i \in \boldsymbol{c}} Z^{b_i} \ket{+}\bra{+} Z^{b_i} 
  P_{k | i} Z^{r_i} X^{r'_i} R_z(\phi_i) X^{r'_i} \ket{G} \bra{G} \bigotimes_{j \in \boldsymbol{c}} X^{r'_j} R_z(-\phi_j) X^{r'_j} Z^{r_j} P_{l | j} Z^{b_i} \ket{+}\bra{+} Z^{b_i}\bigg)^{1/2}\bigg).
\end{eqnarray}

By changing variables $b' = b + r$ and applying the cyclic property of the trace
 \begin{eqnarray}
F_c  \geq  \text{Tr}\bigg( \bigg( \sum_{\boldsymbol{t},\boldsymbol{r_{c(t)}},\boldsymbol{r'_{c(t)}},\boldsymbol{b'_{c(t)}}}  p \sum_u \sum_{k=1}^{4^{N'}} \sum_{l=1}^{4^{N'}}  a_{u,k} a^{*}_{u,l} \bigotimes_{i \in \boldsymbol{c}}  X^{r'_i} R_z(\phi_i) X^{r'_i} \ket{G} \bra{G}  
 \bigotimes_{j \in \boldsymbol{c}}X^{r'_j} R_z(-\phi_j) 
 X^{r'_j}  \bigotimes_{i \in \boldsymbol{c}} Z^{b'_i}  
    \nonumber \\
 \ket{+} \bra{+} Z^{b'_i + r_i} P_{k | i} Z^{r_i} X^{r'_i} R_z(\phi_i) X^{r'_i} \ket{G} \bra{G} \bigotimes_{j \in \boldsymbol{c}} X^{r'_j} R_z(-\phi_j) X^{r'_j} Z^{r_j} P_{l | j} Z^{b'_i+r_i} \ket{+}\bra{+} Z^{b'_i}\bigg)^{1/2} \bigg).
\end{eqnarray}

Using Corollary \ref{twirl2} in Appendix~\ref{app:lemma} and the cyclic property of the trace and sum over $\boldsymbol{r_{c(t)}}$, we can eliminate all Pauli $X$ operators of the attack that differ in the two sides (we denote this by replacing the summation over $l$ with a summation over $l_x$ where the element $P_l$ agrees with $P_k$ on all the Pauli $X$ components). We also use the cyclic property of the trace to get
 \begin{eqnarray}
F_c & \geq & \text{Tr}\bigg( \bigg( \sum_{\boldsymbol{t},\boldsymbol{r_{c(t)}},\boldsymbol{r'_{c(t)}},\boldsymbol{b'_{c(t)}}}  p \sum_u \sum_{k} \sum_{l_x}  a_{u,k} a^{*}_{u,l}  \bra{G}  
 \bigotimes_{j \in \boldsymbol{c}}X^{r'_j} R_z(-\phi_j) 
 X^{r'_j}  
  \bigotimes_{i \in \boldsymbol{c}} Z^{b'_i} \ket{+}\bra{+} Z^{b'_i} P_{k | i}  X^{r'_i} R_z(\phi_i) \nonumber \\
 & &  X^{r'_i} \ket{G} \bra{G} \bigotimes_{j \in \boldsymbol{c}} X^{r'_j} R_z(-\phi_j) X^{r'_j}  P_{l | j} Z^{b'_i} \ket{+}\bra{+} Z^{b'_i} \bigotimes_{i \in \boldsymbol{c}}  X^{r'_i} R_z(\phi_i) X^{r'_i} \ket{G}\bigg)^{1/2} \bigg). 
\end{eqnarray}

Then, the $X^{r'_i}$ ($X^{r'_j}$) operators that are next to $\ket{G}$ ($\bra{G}$) can be rewritten as Pauli $Z$ operators on their neighbours, which then commute with $z$ rotations and the attack (since the Pauli $X$ operators of the attack are the same from both sides) and with the projectors $Z^{b'_i} \ket{+}\bra{+} Z^{b'_i}$, by changing variable $b''_i = b'_i+r'_i$, and cancel each other. The $X^{r'_i}$ operators that are next to projectors $Z^{b'_i} \ket{+}\bra{+} Z^{b'_i}$ commute with them trivially
 \begin{eqnarray}
F_c & \geq & \text{Tr}\bigg( \bigg( \sum_{\boldsymbol{t},\boldsymbol{r'_{c(t)}},\boldsymbol{b''_{c(t)}}}  p \sum_u \sum_{k} \sum_{l_x}  a_{u,k} a^{*}_{u,l}  \bra{G}  
 \bigotimes_{j \in \boldsymbol{c}} R_z(-\phi_j)    
  \bigotimes_{i \in \boldsymbol{c}} Z^{b''_i} \ket{+}\bra{+} Z^{b''_i} \nonumber \\
 & & X^{r'_j} P_{k | i} X^{r'_i} R_z(\phi_i)  \ket{G} \bra{G} \bigotimes_{j \in \boldsymbol{c}}  R_z(-\phi_j) X^{r'_j}  P_{l | j} X^{r'_i} Z^{b''_i} \ket{+}\bra{+} Z^{b''_i} \bigotimes_{i \in \boldsymbol{c}}   R_z(\phi_i)  \ket{G}\bigg)^{1/2} \bigg).
\end{eqnarray}

 Then we can use again Corollary \ref{twirl2}, but with $Q$, $Q'$ being Pauli $Z$+identity operators and $\{P_i\}$ all tensor products of Pauli $X$+identity operators, and sum over $\boldsymbol{r'_{c(t)}}$ to eliminate the Pauli $Z$ components of the attack the differ in the two sides. Thus, given that $\sum_{u,k} |a_{u,k}|^2 = 1$ from the unital property of the attack, the attack becomes a convex combination of Pauli operators:
\begin{eqnarray}
F_c \geq \bigg( \sum_{\boldsymbol{t},\boldsymbol{b''_{c(t)}}} p(\boldsymbol{t}) \sum_{u,k} |a_{u,k}|^2    | \bigotimes_{i \in \boldsymbol{c(t)}} \bra{+}_i  E_G^{\dagger} \bigotimes_{i \in \boldsymbol{c(t)}} R_z(-\phi_i)  Z^{b''_i} \ket{+}\bra{+} Z^{b''_i} P_{k|i}  R_z(\phi_i) E_G  \bigotimes_{i \in \boldsymbol{c(t)}} \ket{+}_i |^2 \bigg)^{1/2} 
\end{eqnarray}

The Pauli $X$ component of $P_{k|i}$ can be replaced by $I$ since the only effect is, depending on $b''_i$, to change the sign of the quantity inside the absolute and the sign is eliminated. Then, we can sum over the $b$'s  to get identity and since $E_G^{\dagger} R_z(-\phi_i)$ now commutes with the attack Pauli operators, it cancels with $R_z(\phi_i) E_G$. Also, as before, we set $ |\alpha_{k}|^2 = \sum_{u} |a_{u,k}|^2 $
\begin{eqnarray}
F_c^2 \geq \sum_{\boldsymbol{t}} p(\boldsymbol{t}) \sum_{k} |\alpha_{k}|^2  \prod_{i \in \boldsymbol{c(t)}}  | \bra{+}_i     P_{k|i}    \ket{+}_i |^2. \label{eq2_} 
\end{eqnarray}

It is easy to see by the symmetry of the trap construction that, for attacks that are exactly the same on any of the $2\kappa+1$ graphs of the different the rounds of the protocol e.g. stochastic noise, Equation \ref{eq1_} and Equation \ref{eq2_} give the same result when averaged over $\boldsymbol{t}$. However, one needs to deal with more clever attacks which attack a different qubit at every round trying to coincide with dummies instead of traps with non-zero probability, i.e. for some of the possible permutations of the $2\kappa+1$ graphs.

We now show that for $\kappa \geq 1$ the maximum of $F_t^2-F_c^2$ for all possible deviation strategies  is
$$\Delta_{\kappa} \equiv \frac{\kappa! (\kappa+1)!}{(2\kappa+1)!}$$

The attack is a convex combination of Pauli operators, thus it suffices to find the Pauli operator that maximizes $F_t^2-F_c^2$. The  maximum comes from the attack that touches all $2 \kappa +1$ rounds. In this case, $F_c^2$ is lower bounded by $0$ and $F_t^2$ comes from the probability the attack does not coincide with any trap in the $2 \kappa$ trap computation rounds. There are $2 \kappa +1$ ways of picking the target computation round, which fixes the  positions of the $2 \kappa$ trap computation rounds. The choice of the even/odd parity positions for the traps is fixed not to coincide with the attacks. Further simplification can be done by observing that only the attacks on $\kappa+1$ of the same kind (even or odd) and $\kappa$ of the other, are successful. By attacking $\kappa+2$ or more of the same kind they are guaranteed to hit a trap of the same kind, independently of the position of the target. This reduces the possible ways of picking the target to $\kappa +1$, which gives a bound of $\frac{\kappa +1}{{2\kappa +1 \choose \kappa} (\kappa+1)}$, equal to the value of $\Delta_{\kappa}$ above.

To conclude the proof we show that for all other attacks  $F_t^2-F_c^2$ is non-positive and thus estimating $F^2_{c}$ through $F^2_{t}$ is a conservative estimation.  We begin by arguing that there is no benefit for the attacker to touch more than one qubit of each round, since the lower bound of the fidelity $F_c^2$ contains  products of terms that can be $0$ or $1$ and thus it suffices to make one term $0$ to make the product $0$. By symmetry of the construction it does not matter which particular qubit the attacker touches but only whether it is an odd or even position qubit at each round. Let $\lambda$ be the number of rounds that are attacked. 

Assume,  without loss of generality, the first $\xi$ rounds are attacked on an even qubit, where $\xi \leq \kappa$ otherwise it will certainly hit a trap, and $\lambda - \xi \leq \kappa$ for the same reason. Also, assume $\xi \geq (\lambda - \xi)$ without loss of generality.

For index $k$ to correspond to an attack on $\lambda$ rounds

\begin{eqnarray}
F_{c,k}^2 \geq 1 - \frac{\lambda}{2 \kappa +1} = \frac{2\kappa +1 - \lambda}{2 \kappa +1}
\end{eqnarray}

There are ${{2 \kappa +1}\choose{\kappa}} (\kappa+1)$ possibilities for the selection of traps. In order to count the combinations of attacks  not affecting the traps, we identify two cases, (i) the target is in the attacked rounds ($\lambda$ possible positions) and there are ${{2\kappa +1 - \lambda}\choose{\kappa -  \xi }}$ possible placings of the remaining traps in the non-attacked positions, (ii) the target is not in the attacked rounds ($2 \kappa +1 - \lambda$ possible positions) and there are ${{2\kappa - \lambda}\choose{\kappa - \xi}}$ possible placings of the remaining traps in the non-attacked positions:

\begin{eqnarray}
F_{t,k} = \frac{\lambda {{2\kappa +1 - \lambda}\choose{\kappa - \xi}} + (2 \kappa +1 - \lambda) {{2 \kappa - \lambda}\choose{\kappa - \xi}}} { {{2 \kappa +1}\choose{\kappa}} (\kappa+1)}
\end{eqnarray}

We show that for $\lambda \leq 2 \kappa$ we have $F_{t,k}-F_{c,k} \leq 0$.

\begin{eqnarray}
 \frac{\lambda {{2\kappa +1 - \lambda}\choose{\kappa - \xi}} + (2 \kappa +1 - \lambda) {{2 \kappa - \lambda}\choose{\kappa - \xi}}} { {{2 \kappa +1}\choose{\kappa}} (\kappa+1)} - \frac{2\kappa +1 - \lambda}{2 \kappa +1} \leq 0 \Leftrightarrow \nonumber \\
\frac{ (2 \kappa - \lambda +1 )(\kappa +\xi +1) {{2 \kappa  - \lambda}\choose{\kappa - \xi }}} {(\kappa - (\lambda - \xi) +1) {{2 \kappa +1}\choose{\kappa}} (\kappa+1)} - \frac{2\kappa +1 - \lambda}{2 \kappa +1} \leq 0 \Leftrightarrow \nonumber \\
\frac{ (2 \kappa - \lambda +1 )(\kappa + \xi +1)(2 \kappa - \lambda)! (\kappa !)^2 } {(\kappa - (\lambda - \xi) +1)  (2\kappa + 1)! (\kappa - (\lambda - \xi))!(\kappa - \xi)!} \leq \frac{2\kappa +1 - \lambda}{2 \kappa +1} \Leftrightarrow \nonumber \\
\frac{ (\kappa + \xi +1)(2 \kappa - \lambda)! (\kappa !)^2 } {(\kappa - (\lambda - \xi) +1)  (2\kappa)! (\kappa - (\lambda - \xi))!(\kappa - \xi)!} \leq 1
\end{eqnarray}

For $\kappa =\{1,2\}$ it is easy to verify the expression directly. For the general case  we rewrite the above as:

\begin{eqnarray}
\frac{ (\kappa + \xi +1) [(\kappa - \xi +1)\cdot \ldots \cdot \kappa ][(\kappa - (\lambda - \xi)+1) \cdot \ldots \cdot\kappa] } {(\kappa - (\lambda - \xi) +1)  (2\kappa - \lambda + 1) \cdot \ldots \cdot 2\kappa}  \leq 1
\end{eqnarray}

where we have $1+\xi+(\lambda-\xi)=1+ \lambda$ terms on the numerator and $1+\lambda$ terms in the denominator. 

For the LHS of the above equation we have

\begin{eqnarray}
\leq \frac{ (\kappa + \xi +1) } {(\kappa - (\lambda - \xi) +1)  2^{\xi}}  \nonumber \\
\leq \frac{ (\kappa + \xi +1) } {(\kappa - \xi +1)  2^{\xi}}
\end{eqnarray}

It suffices to show that the above is $\leq 1$ for all allowed values of $\kappa$ and $\xi$. We can rewrite it as:

\begin{eqnarray}
\frac{ (\kappa + \xi +1) } {(\kappa - \xi +1)  } \leq 2^{\xi} \Leftrightarrow \\
\frac{1}{\ln(2)}\ln(\frac{\kappa+1 + \xi}{\kappa+1 -\xi}) \leq \xi \Leftrightarrow \\
\frac{2}{\ln(2)} \text{artanh}\left(\frac{\xi}{\kappa+1}\right) \leq \xi
\end{eqnarray}

which is true for $\kappa \geq 3, \xi\leq \kappa$. This concludes our calculation of the bound $\Delta_{\kappa}$.

The rest of the proof is given in the main text.

\end{proof}

\section{\label{app:Thd}Calculation of FT Threshold for Protocol 2b}
\label{threshh}

Since  the logical graph is the brickwork state, its topological implementation will be on MBQC, as shown in Fig.~(\ref{3dgraph}), and will have the structure of Fig.~(\ref{bricktopo}).

Noise considered is local, unital and bounded. It applies after every elementary operation (preparation, entangling and measurement) $j$ and is expressed as a CPTP superoperator:

\begin{equation}
\mathcal{N}_j = (1-\epsilon) \mathcal{I} + \mathcal{E}_j
\label{noise_eq2}
\end{equation}

where $||\mathcal{E}_j||_{\diamond}=\epsilon$, where we set $\epsilon=\epsilon_V=\epsilon_P$ to calculate a common threshold for the verifier and the prover.

For the fault tolerant noisy, but honest, probability distribution post-selected for null syndrome measurement $q^{\text{nsy}}(\boldsymbol{x}|y=0)$,  and the exact one $ q^{\text{exc}}(\boldsymbol{x})$ we reproduce the derivation of Ref.~\cite{fujii2016noise}.

\begin{eqnarray}
\sum_{\boldsymbol{x}}  \left\lvert q^{\text{nsy}}(\boldsymbol{x}|y=0)-q^{\text{exc}}(\boldsymbol{x}) \right\rvert  &=& \sum_{\boldsymbol{x}} \left\lvert \frac{\text{Tr} (P_{\boldsymbol{x}} Q_y (\rho_{\text{sparse}} + \rho_{\text{faulty}}))}{\text{Tr}(Q_y(\rho_{\text{sparse}} + \rho_{\text{faulty}}))} - q^{\text{exc}}(\boldsymbol{x}) \right\rvert \nonumber \\
&=& \sum_{\boldsymbol{x}} \left\lvert \frac{\text{Tr} (P_{\boldsymbol{x}} Q_y  \rho_{\text{faulty}})}{\text{Tr}(Q_y(\rho_{\text{sparse}} + \rho_{\text{faulty}}))} - (1-b)q^{\text{exc}}(\boldsymbol{x})  \right\rvert  
\end{eqnarray}

Here, $P_{\boldsymbol{x}}$ is the projector to result $\boldsymbol{x}$ for the output register and $Q_y$ is the projector to null syndrome for the post-selection register. The  output quantum state of the sampler, just before the final measurements, can be written as a sum of two matrices:  $\rho_{\text{sparse}}$ that is the sum of the states on which apply the components of the noise operators $\mathcal{N}_j$ (i.e. either component $(1-\epsilon) \mathcal{I}$ or component $\mathcal{E}_j$ for each $j$) that, under postelection for $y=0$, do not produce a logical error in the output distribution, and $\rho_{\text{faulty}}$ which contains the sum of the states on which apply the rest of the noise components (for more detail see~\cite{fujii2016noise}). Thus, $\text{Tr}(P_{\boldsymbol{x}} Q_y \rho_{\text{sparse}}) \propto q^{\text{exc}}(\boldsymbol{x})$. Term $b$, which is defined by:
\begin{equation}
b =\frac{\text{Tr}(P_{\boldsymbol{x}} Q_y \rho_{\text{sparse}})}{\text{Tr}(Q_y (\rho_{\text{sparse}}+\rho_{\text{faulty}}))q^{\text{exc}}(\boldsymbol{x})}
\end{equation}

is therefore independent of $\boldsymbol{x}$.

By applying triangle inequality and by observing that the trace terms are positive

\begin{eqnarray}
\sum_{\boldsymbol{x}}  \left\lvert q^{\text{nsy}}(\boldsymbol{x}|y=0)-q^{\text{exc}}(\boldsymbol{x}) \right\rvert  \leq  \frac{\text{Tr} ( Q_y \rho_{\text{faulty}})}{\text{Tr}(Q_y(\rho_{\text{sparse}} + \rho_{\text{faulty}}))} + (1-b) \nonumber
\end{eqnarray}

Since $\sum_x \frac{\text{Tr} (P_x Q_y (\rho_{\text{sparse}} + \rho_{\text{faulty}}))}{\text{Tr}(Q_y(\rho_{\text{sparse}} + \rho_{\text{faulty}}))} =1$, we have $1-b = \frac{\text{Tr}(Q_y \rho_{\text{faulty}})}{\text{Tr}(Q_y(\rho_{\text{sparse}} + \rho_{\text{faulty}}))}$.

Also we have $\text{Tr}(Q_y(\rho_{\text{sparse}} + \rho_{\text{faulty}})) \geq (1-\epsilon)^N$. This comes from the fact there is at least one selection of components of the noise operators $\mathcal{N}_j$ that results in the null syndrome and this is components $(1-\epsilon) \mathcal{I}, \forall j$, giving a term with trace $(1-\epsilon)^N$. Thus

\begin{equation}
\sum_{\boldsymbol{x}}  \left\lvert q^{\text{nsy}}(\boldsymbol{x}|y=0)-q^{\text{exc}}(\boldsymbol{x}) \right\rvert   \leq 2 \text{Tr}( \rho_{\text{faulty}}) / (1-\epsilon)^N 
\end{equation}

In the case of the topological code the errors are created by error chains $\mathcal{L}$ of length greater than $L_d$, which is the minimum of the distance between two defects and the size of defects. Matrix $ \rho_{\text{faulty}}$ can be decomposed into terms $\rho_{\text{faulty}}^{\mathcal{L}}$ with respect to error chains $\mathcal{L}$ of length $L$ so that:

\begin{eqnarray}
\text{Tr} (\rho_{\text{faulty}} ) &\leq& \sum_{L=L_d}^{N} \sum_{\mathcal{L}:|\mathcal{L}|=L} \text{Tr} (\rho_{\text{faulty}}^{\mathcal{L}} )  \\
&\leq& \sum_{L=L_d}^{N} \sum_{\mathcal{L}:|\mathcal{L}|=L}  (1-\epsilon)^N \prod_{j=1}^{|\mathcal{L}|} \frac{|| \mathcal{E}_{j} ||_{\diamond}}{1-\epsilon}  \\
 &=& \sum_{L=L_d}^{N} \sum_{\mathcal{L}:|\mathcal{L}|=L}(1-\epsilon)^N \left(\frac{ \epsilon}{1- \epsilon} \right)^{|\mathcal{L}|} 
\end{eqnarray}

The number of error chains of length $|\mathcal{L}|$ in the 3D lattice of size $n$ is $\text{poly}(n) (6/5)5^{|\mathcal{L}|}$, which is the number of self avoiding walks~\cite{dennis2002}. Thus the gap $\sum_{\boldsymbol{x}}  \left\lvert q^{\text{nsy}}(\boldsymbol{x}|y=0)-q^{\text{exc}}(\boldsymbol{x}) \right\rvert $ is bounded by

\begin{equation}
 \leq 2   \sum_{L=d}^{N} \text{poly}(N) (6/5)5^{L} \left(\frac{\epsilon}{1- \epsilon} \right)^{L} 
\end{equation}

which converges to zero if $\epsilon / (1-\epsilon) < 1/5$. The threshold comes from the self-avoiding walks that affect the singular qubits and surpass the distillation threshold, where a more careful counting needs to be done~\cite{fujii2016noise} to get $\epsilon / (1-\epsilon) < 0.134$ or $\epsilon < 0.118$. This calculation of the threshold~\cite{fujii2016noise} is an underestimate because of the assumption that error correction is done on primal
and dual cubic lattices independently.

However, the above calculation is under the stochastic phenomenological noise model that does not account for the noise in the individual operations that compose a syndrome measurement. In our case, the topological code is implemented on the MBQC model where physical noise should be at least $6$ times less than phenomenological noise.
This is because there are typically $6$ operations involved in a syndrome measurement: $1$ syndrome qubit preparation, $4$ entangling operations with the surrounding qubits (less on boundaries) and $1$ syndrome measurement. This gives us threshold $\epsilon_{\text{thres}}=0.0196943$. 

The overhead of the error detection scheme comes by counting the number of error syndromes that are influencing one trap by catching potentially detectable errors (chains of size $\leq d$). This is the area of dimension $d$ around the defect qubits and their `past' in terms of MBQC flow (physical layer) which we choose in the smallest of the three dimensions of the topological code. In Fig.~(\ref{bricktopo})(c) we depict the logical qubits (made of prime/dual physical defects) that compose a `H' shaped component of the brickwork state. In our trappification scheme any topologically protected qubit can be a trap. For a worst case analysis we consider the biggest logical qubit, in terms of number of defects that make it, which is a prime qubit in our example (bottom middle blue loop in Fig.~(\ref{bricktopo})(c)).   

The number of  syndrome measurements (cubes) will depend on the distance parameter $d$. Since the counting of syndromes is involved we give an example for fixed values, $d=2$ and physical noise $\epsilon=(1/20)\epsilon_{\text{thres}}$. In this case the number of syndromes is at most $564$ and the number of repetitions is $M=1/(1-p_c)^{564}$, where $p_c$ is the probability of a cube syndrome failing. Probability $p_c$ is given by $(1-(1-2 (6\epsilon))^6)/2$, which is the probability of having an odd number of errors in the $6$ faces of a cube. This gives $M \approx 3 \times 10^8$. Overheads for other fractions of the threshold for noise appear in the main text.

\section{\label{app:Thm2}Proof of Theorem~\ref{thm2} soundness}

\begin{proof}

To establish soundness we need to show that a lower bound in the fidelity on the target computation round and the acceptance probability of the trap computation rounds are the same averaged over the random parameters.

The total variation distance between the experimental (noisy and potentially dishonest) distribution of the Ising sampler $q^{\mathrm{nsy}}(\boldsymbol{x}|y=0)$, where $y=0$ implies conditioning on the null syndrome, and the exact one $q^{\mathrm{exc}}(\boldsymbol{x}|y=0)=q^{\mathrm{exc}}(\boldsymbol{x})$  after the measurements is
\begin{eqnarray}
\text{var}^{\text{Post}} = \frac{1}{2} \sum_{\boldsymbol{x}} | q^{\mathrm{exc}}(\boldsymbol{x}|y=0) - q^{\mathrm{nsy}}(\boldsymbol{x}|y=0) |&=& \frac{1}{2} \sum_{\boldsymbol{x}} \left| q^{\mathrm{exc}}(\boldsymbol{x}) - \frac{q^{\mathrm{nsy}}(\boldsymbol{x},y=0)}{q^{\mathrm{nsy}}(y=0)} \right|   \\
&=&  D\left(\rho_c,\frac{\rho_c^{' \text{post}}}{q^{\mathrm{nsy}}(y=0)}\right)  \\
&\leq & \sqrt{1-F^2\left(\rho_c,\frac{\rho_c^{' \text{post}}}{q^{\mathrm{nsy}}(y=0)}\right)}  \\
&=&  \sqrt{1 - \text{Tr}^2 \left(\sqrt{ \frac{\rho_c \rho_c^{' \text{post}}}{q^{\mathrm{nsy}}(y=0)}} \right) },
\end{eqnarray} where $\rho_c$ is the correct state and $\rho^{' \text{post}}_c$ the experimental state, post-selected on the null syndrome measurements, after all measurements. For the rest of this section we denote $q^{\mathrm{nsy}}(y=0)$ as $q'_0$ for simplicity. 

For the target round, the average fidelity $F_c$ is calculated in the physical level of the computation as in the non-fault-tolerant case.  The qubits are pre-rotated by $\theta_i$, or flipped by $d_i$ in the case of dummies.

Noise can enter either during the state preparation from the verifier, or during the single round elementary MBQC operations (entangling and measurement) of the prover. We assume a noise model which is local, unital and bounded, so that standard fault tolerance techniques are applicable. Noise can always moved after every elementary operation on qubit $j$ and expressed as a CPTP superoperator applies only on the state of qubit $j$:

\begin{equation}
\mathcal{N}_j = (1-\epsilon) \mathcal{I} + \mathcal{E}_j
\end{equation}

where $||\mathcal{E}_j||_{\diamond}\leq \epsilon_{\text{thres}}$.

Crucially, we assume that the noise during the preparation does not have any dependence on the secret parameter $\theta_i$. 

Moving all the noise operators just before the measurement, results to a different set of local, unital and bounded operators $\mathcal{N'}_j$, collectively represented as $\mathcal{N}'$.

We apply the same twirling steps as in the proof of Theorem~\ref{thm1} to twirl the CPTP map that is the composition of the attack and the noise. Notice that the twirl on the post-selected qubits is trivial since there is no sum over $b'_i$. Thus,
\begin{eqnarray}
F_c^2 \left( \rho_c^{\text{post}},\frac{\rho_c^{' \text{post}}}{q'_0}\right) & \geq & \frac{1}{q'_0} \sum_{\boldsymbol{t},\boldsymbol{b'_{c(t)}}} p(\boldsymbol{t}) \sum_{u,k} |a_{u,k}|^2    \left| \bigotimes_{i \in \boldsymbol{c(t)}} \bra{0}_i \bigotimes_{i \in \boldsymbol{c(t)}} \bra{+}_i  E_G^{\dagger} \bigotimes_{i \in \boldsymbol{c(t)}} R_z(-\phi_i)  Z^{b'_i} \ket{+}_i
\bra{+}_i Z^{b'_i}     \right. \nonumber \\
&&  \left.  P_{k|i} R_z(\phi_i) E_G \bigotimes_{i \in \boldsymbol{c(t)}} \ket{+}_i \bigotimes_{i \in \boldsymbol{c(t)}} \ket{0}_i  \right|^2, 
\end{eqnarray}
 where $b'_i$'s take fixed values in the sum for the syndrome measurements such that the syndrome indicates null errors.

The only (noise and attack) Pauli operators that have an effect on the above quantity are tensor products of identity and Pauli $Z$. These operators flip the measurement outcome of the particular qubit. Detectable attacks disappear because of the projector to null syndromes. The undetected attacks that come from operators $ P_{k|i}$ can be written as logical bit flips on the subsequent measurements - since it will affect the classical post-processing.
 Also, the normalization factor vanishes when we trace over the syndrome systems.

Therefore, at the logical level  
\begin{eqnarray}
F_c^2 &\geq& \sum_{\boldsymbol{t},\boldsymbol{b''_{c(t)}}} p(\boldsymbol{t}) \sum_{u,k} |a_{u,k}|^2    \left| \bigotimes_{i \in \boldsymbol{c(t)} } \bra{0}_{i}^L \bigotimes_{i \in \boldsymbol{c(t)}} \bra{+}_{i}^L  E_{G}^{\dagger L} \bigotimes_{i \in \boldsymbol{c(t)}} R_{z}(-\phi_{i}^L)  Z^{b''_i L}  \ket{+}^{L}\bra{+}^{L} Z^{b''_i L}\right. \nonumber  \\
&& \left.    P_{k|i}^L  R_{z}(\phi_i^L) E_{G}^L \bigotimes_{i \in \boldsymbol{c(t)}} \ket{+}_{i}^{L}  \bigotimes_{i \in \boldsymbol{c(t)}} \ket{0}_{i}^{L} \right|^2, 
\end{eqnarray}
 We can now sum over the index $b''_{c(t)}$ to simplify the expression, by cancelling also the rotation and entangling operators. On the logical dummy system the logical Pauli $Z$ attacks have no effect, therefore it has trace $1$ and can be simplified to
\begin{equation}
F_c^2 \geq \sum_{\boldsymbol{t}} p(\boldsymbol{t}) \sum_{u,k} |a_{u,k}|^2  \prod_{i \in \boldsymbol{c(t)}}  | \bra{+}_{i}^L     P_{k|i}^{L}    \ket{+}_{i}^{L} |^2. \label{eq2} 
\end{equation}
The same technique can be employed for the trap rounds, with the difference that instead of  post-selection there is error correction for Protocol 2a and error detection for Protocol 2b that results in the same logical state for the same (noise and attack) Pauli operators.

From completeness we have set the limit of acceptance of the fidelity estimate to $(1- 2 \epsilon'')$. By repeating $N = \log(1/\beta)/(2 \epsilon''^2)$ times gets us $\sqrt{\epsilon''}$-close in our estimation with confidence $1 - \beta$. Thus, with this confidence we get  bound $\text{var}^{\text{Post}} \leq \sqrt{3 \epsilon'' + \Delta_{\kappa}}$.

\end{proof}

\section{\label{app:lemma}Channel Twirl Lemma}

The following lemma is used in the verifiability proofs.

\begin{lemma}
\begin{equation}
\sum_{i=1}^{4^n} P_i Q P_i \rho P_i Q' P_i = 0, \text{ if } Q \neq Q' \label{use_prop}
\end{equation}
where $\rho$ is a matrix of dimension $2^n \times 2^n$, $Q$, $Q'$ are two arbitrary $n$-fold tensor products of Pauli+identity operators $\{I,X,Y,Z\}$, and $\{P_i\}$ is the set of all $n$-fold tensor products of Pauli operators and the identity $\{I,X,Y,Z\}$.
\label{twirl1}
\end{lemma}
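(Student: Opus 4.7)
The plan is to exploit the elementary algebraic fact that any two Pauli tensor products either commute or anticommute, so that conjugation only introduces a sign. First I would observe that since each $P_i$ is a tensor product of elements from $\{I,X,Y,Z\}$, we have $P_i^2 = I$, and $P_i R = \epsilon_i(R) R P_i$ for any Pauli string $R$, where $\epsilon_i(R) = +1$ if $P_i$ commutes with $R$ and $\epsilon_i(R) = -1$ otherwise. Hence $P_i Q P_i = \epsilon_i(Q)\, Q$ and $P_i Q' P_i = \epsilon_i(Q')\, Q'$. Substituting this into the left-hand side gives
\begin{equation}
\sum_{i=1}^{4^n} P_i Q P_i \rho P_i Q' P_i \;=\; \Bigl(\sum_{i=1}^{4^n} \epsilon_i(Q)\,\epsilon_i(Q')\Bigr)\, Q \rho Q',
\end{equation}
so the problem reduces to showing that the scalar prefactor vanishes whenever $Q\neq Q'$.

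Next I would use the fact that the commutation sign is multiplicative: $\epsilon_i(Q)\epsilon_i(Q') = \epsilon_i(QQ')$, where $QQ'$ is an element of the Pauli group (possibly with an overall phase $\pm 1,\pm i$). Since $\epsilon_i(\cdot)$ is insensitive to scalar phases, one may equivalently work modulo the center of the Pauli group, where elements correspond bijectively to vectors in $\mathbb{F}_2^{2n}$ and commutation is given by the symplectic inner product. The condition $Q\neq Q'$ guarantees that $QQ'$ is not proportional to the identity, i.e.\ corresponds to a nonzero vector in $\mathbb{F}_2^{2n}$.

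The core step is then a standard character-sum / orthogonality argument: for any nonidentity Pauli string $R$, exactly half of the $4^n$ tensor-product Paulis commute with $R$ and half anticommute. I would prove this by picking a single qubit on which $R$ acts nontrivially and pairing each $P_i$ with the $P_i'$ obtained from it by left-multiplying that qubit's tensor factor by a single-qubit Pauli that anticommutes with $R$ on that qubit; this involution on $\{P_i\}$ flips the sign $\epsilon_i(R)$, so the terms cancel in pairs. Applied to $R = QQ'$ this yields $\sum_i \epsilon_i(QQ') = 0$, and the claim follows.

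The main obstacle, though a very mild one, will be bookkeeping around overall scalar phases that arise in products like $XY=iZ$, which can make $QQ'$ formally lie outside the set $\{I,X,Y,Z\}^{\otimes n}$. I would handle this cleanly by passing to the quotient Pauli group modulo phases from the outset, noting that $\epsilon_i(\lambda R) = \epsilon_i(R)$ for any scalar $\lambda\neq 0$, so all subsequent counting takes place in the finite abelian group $\mathbb{F}_2^{2n}$ with the nondegenerate symplectic form, where the pairing argument is immediate. The rest of the proof is essentially substitution and summation.
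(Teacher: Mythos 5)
Your proof is correct and is essentially the paper's argument in more abstract clothing: the paper writes $Q=Z_{\boldsymbol{a}}X_{\boldsymbol{a}'}$, $P_{\boldsymbol{k},\boldsymbol{k}'}=Z_{\boldsymbol{k}}X_{\boldsymbol{k}'}$ and explicitly extracts the signs, arriving at the character sum $\sum_{\boldsymbol{k}'}(-1)^{\boldsymbol{k}'\cdot(\boldsymbol{a}\oplus\boldsymbol{b})}\sum_{\boldsymbol{k}}(-1)^{\boldsymbol{k}\cdot(\boldsymbol{a}'\oplus\boldsymbol{b}')}$, which is precisely your $\sum_i\epsilon_i(QQ')$ expressed in symplectic coordinates. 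The paper's closing observation that half the summands are $+1$ and half are $-1$ is exactly your involution argument, so the two proofs coincide in substance.
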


A proof of this lemma is also provided in Ref.~\cite{dankert09operator_twirl}.

\begin{proof}

We can write $Q$ as $Z_{\boldsymbol{a}} X_{\boldsymbol{a}'} = Z^{a_1} \otimes \ldots \otimes Z^{a_n} X^{a'_1} \otimes \ldots \otimes X^{a'_n} $, for arbitrary binary vectors $\boldsymbol{a} = (a_1, \ldots, a_n)$, $\boldsymbol{a}' = (a'_1, \ldots, a'_n)$, and similarly $Q'=Z_{\boldsymbol{b}}X_{\boldsymbol{b}'}$. 
Assuming $Q \neq Q',$ either $\boldsymbol{a} \neq \boldsymbol{b}$ or $\boldsymbol{a}' \neq \boldsymbol{b}'$. Summing over all $P_{\boldsymbol{k},\boldsymbol{k}'}$'s which are the $n$-fold tensor products of the form $Z_{\boldsymbol{k}} X_{\boldsymbol{k}'}= Z^{k'_1} \otimes \ldots \otimes Z^{k'_n}X^{k_1} \otimes \ldots \otimes X^{k_n}$ for binary vectors $\boldsymbol{k} = (k_1, \ldots, k_n), \boldsymbol{k}' = (k'_1, \ldots, k'_n),$ we get

\begin{eqnarray}
&&\sum_{\boldsymbol{k}, \boldsymbol{k}'} P_{\boldsymbol{k}, \boldsymbol{k}'} Q P_{\boldsymbol{k}, \boldsymbol{k}'} \rho P_{\boldsymbol{k}, \boldsymbol{k}'} Q' P_{\boldsymbol{k}, \boldsymbol{k}'}  \nonumber \\
&&= \sum_{\boldsymbol{k}, \boldsymbol{k}'} Z_{\boldsymbol{k}} X_{\boldsymbol{k}'} Z_{\boldsymbol{a}} X_{\boldsymbol{a}'} Z_{\boldsymbol{k}} X_{\boldsymbol{k}'}  \rho 
  Z_{\boldsymbol{k}} X_{\boldsymbol{k}'} Z_{\boldsymbol{b}} X_{\boldsymbol{b}'} Z_{\boldsymbol{k}} X_{\boldsymbol{k}'}  \\
&&  = \sum_{\boldsymbol{k}, \boldsymbol{k}'} Z_{\boldsymbol{k}} (X_{\boldsymbol{k}'} Z_{\boldsymbol{a}} X_{\boldsymbol{k}'} ) X_{\boldsymbol{a}'} Z_{\boldsymbol{k}}   \rho 
  Z_{\boldsymbol{k}} (X_{\boldsymbol{k}'} Z_{\boldsymbol{b}} X_{\boldsymbol{k}'}) X_{\boldsymbol{b}'} Z_{\boldsymbol{k}}    \\
&& = \sum_{\boldsymbol{k},\boldsymbol{k}'} Z_{\boldsymbol{k}} ((-1)^{\boldsymbol{k}' \cdot \boldsymbol{a}} Z_{\boldsymbol{a}}) X_{\boldsymbol{a}'} Z_{\boldsymbol{k}} \rho Z_{\boldsymbol{k}}((-1)^{\boldsymbol{k}' \cdot \boldsymbol{b}} Z_{\boldsymbol{b}})X_{\boldsymbol{b}'} Z_{\boldsymbol{k}}  \\
&& = \sum_{\boldsymbol{k},\boldsymbol{k}'} (-1)^{\boldsymbol{k}' \cdot (\boldsymbol{a} \oplus \boldsymbol{b})}  Z_{\boldsymbol{a}} (Z_{\boldsymbol{k}} X_{\boldsymbol{a}'} Z_{\boldsymbol{k}}) \rho  Z_{\boldsymbol{b}} (Z_{\boldsymbol{k}} X_{\boldsymbol{b}'} Z_{\boldsymbol{k}})  \\
&& = \sum_{\boldsymbol{k}'} (-1)^{\boldsymbol{k}' \cdot (\boldsymbol{a} \oplus \boldsymbol{b})} \sum_{\boldsymbol{k}} (-1)^{\boldsymbol{k} \cdot (\boldsymbol{a}' \oplus \boldsymbol{b}')}  Z_{\boldsymbol{a}}  X_{\boldsymbol{a}'}  \rho  Z_{\boldsymbol{b}}  X_{\boldsymbol{b}'}  
\end{eqnarray} 

If either $\boldsymbol{a} \neq \boldsymbol{b}$ or $\boldsymbol{a}' \neq \boldsymbol{b}'$ the summation $\sum_{\boldsymbol{k}'} ((-1)^{\boldsymbol{k}' \cdot (\boldsymbol{a} \oplus \boldsymbol{b})})$ or $\sum_{\boldsymbol{k}} ((-1)^{\boldsymbol{k} \cdot (\boldsymbol{a}' \oplus \boldsymbol{b}')})$ is equal to zero respectively, because (in either case) exactly half of the elements of the summation will be $-1$ and half will be $1$. Therefore, since our assumption was that either $\boldsymbol{a} \neq \boldsymbol{b}$ or $\boldsymbol{a}' \neq \boldsymbol{b}'$ or both, the whole expression equals zero.

\end{proof}

\begin{corollary} \label{twirl2}
\begin{equation}
\sum_{i=1}^{2^n} P_i Q P_i \rho P_i Q' P_i = 0, \text{ if } Q \neq Q' \label{use_prop}
\end{equation}
where $\rho$ is a matrix of dimension $2^n \times 2^n$, $Q$, $Q'$ are two arbitrary $n$-fold tensor products of Pauli $X$ and identity operators $\{I,X\}$, and $\{P_i\}$ is the set of all $n$-fold tensor products of Pauli $Z$ and identity operators $\{I,Z\}.$
\end{corollary}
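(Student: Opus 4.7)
The plan is to specialise the argument of Lemma~\ref{twirl1} to the restricted setting where the $Q, Q'$ operators lie in the subgroup generated by $\{I, X\}^{\otimes n}$ and the twirling operators $P_i$ lie in the subgroup generated by $\{I, Z\}^{\otimes n}$. Since both subgroups are abelian and conjugation of an $X$-type Pauli by a $Z$-type Pauli produces at most a sign, the sum collapses to a scalar sum of signs that vanishes whenever the two $X$-type operators differ.

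Concretely, I would first write $Q = X_{\boldsymbol{a}'}$ and $Q' = X_{\boldsymbol{b}'}$ for binary vectors $\boldsymbol{a}', \boldsymbol{b}' \in \{0,1\}^n$, and $P_{\boldsymbol{k}} = Z_{\boldsymbol{k}}$ for $\boldsymbol{k} \in \{0,1\}^n$. By hypothesis $Q \neq Q'$, so $\boldsymbol{a}' \neq \boldsymbol{b}'$. Next I would use the elementary commutation identity $Z_{\boldsymbol{k}} X_{\boldsymbol{c}} Z_{\boldsymbol{k}} = (-1)^{\boldsymbol{k}\cdot \boldsymbol{c}} X_{\boldsymbol{c}}$ (which is exactly the step invoked in the proof of Lemma~\ref{twirl1}) to move every $Z_{\boldsymbol{k}}$ past the $X_{\boldsymbol{a}'}$ and $X_{\boldsymbol{b}'}$ factors at the cost of a sign. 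Then the pairs of $Z_{\boldsymbol{k}}$ operators around $\rho$ recombine to the identity, and the summand becomes $(-1)^{\boldsymbol{k}\cdot(\boldsymbol{a}'\oplus \boldsymbol{b}')} X_{\boldsymbol{a}'}\, \rho\, X_{\boldsymbol{b}'}$.

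Finally I would pull the $\boldsymbol{k}$-independent operator $X_{\boldsymbol{a}'} \rho X_{\boldsymbol{b}'}$ out of the sum, leaving
\begin{equation}
\sum_{\boldsymbol{k}} P_{\boldsymbol{k}} Q P_{\boldsymbol{k}} \rho P_{\boldsymbol{k}} Q' P_{\boldsymbol{k}} = \Bigl(\sum_{\boldsymbol{k} \in \{0,1\}^n} (-1)^{\boldsymbol{k}\cdot(\boldsymbol{a}' \oplus \boldsymbol{b}')}\Bigr) X_{\boldsymbol{a}'}\, \rho\, X_{\boldsymbol{b}'}.
\end{equation}
Because $\boldsymbol{a}' \oplus \boldsymbol{b}'$ is a nonzero binary vector, the character sum on the right-hand side vanishes: for any nonzero $\boldsymbol{c} \in \{0,1\}^n$, exactly half of the $\boldsymbol{k} \in \{0,1\}^n$ satisfy $\boldsymbol{k}\cdot\boldsymbol{c} = 0 \pmod 2$ and the other half satisfy $\boldsymbol{k}\cdot\boldsymbol{c} = 1 \pmod 2$, so the signs cancel in pairs.

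There is no real obstacle; the corollary is essentially the $\boldsymbol{a}=\boldsymbol{b}=0$ slice of Lemma~\ref{twirl1}, in which only the single character sum over $\boldsymbol{k}$ (as opposed to the double sum over $\boldsymbol{k}, \boldsymbol{k}'$) remains, and this sum is already the mechanism that drives the lemma. The only thing worth double-checking is that the smaller index set $\{P_i\}_{i=1}^{2^n}$ (rather than $4^n$) still provides enough operators to kill the relevant character, which it does precisely because the operators we need to twirl away have been restricted to the complementary subgroup.
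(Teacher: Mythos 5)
Your proof is correct, but it takes a different route from the paper's. The paper proves the corollary by reduction to Lemma~\ref{twirl1} itself: it starts from the vanishing $4^n$-term sum over the full Pauli group, writes each $P_i = Z_{\boldsymbol{k}}X_{\boldsymbol{k}'}$, and observes that because $Q,Q'$ are $X$-type the $X_{\boldsymbol{k}'}$ factors commute through and square to the identity (the two signs $(-1)^{\boldsymbol{k}\cdot\boldsymbol{k}'}$ from the two conjugations cancelling), so the full sum is just $2^n$ copies of the $Z$-only sum, which must therefore also vanish. You instead re-run the underlying computation directly: using $Z_{\boldsymbol{k}}X_{\boldsymbol{c}}Z_{\boldsymbol{k}}=(-1)^{\boldsymbol{k}\cdot\boldsymbol{c}}X_{\boldsymbol{c}}$ you reduce the summand to $(-1)^{\boldsymbol{k}\cdot(\boldsymbol{a}'\oplus\boldsymbol{b}')}X_{\boldsymbol{a}'}\rho X_{\boldsymbol{b}'}$ and kill it with the character sum $\sum_{\boldsymbol{k}}(-1)^{\boldsymbol{k}\cdot\boldsymbol{c}}=0$ for $\boldsymbol{c}\neq 0$. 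Both arguments are sound and rest on the same vanishing character sum; yours has the advantage of being self-contained (it needs only the elementary commutation identity, not the Lemma's statement), while the paper's is shorter and makes explicit that the corollary is a formal consequence of the Lemma rather than an independent fact. Your closing remark correctly identifies the one point worth checking, namely that the $Z$-only subgroup already separates distinct $X$-type operators.
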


\begin{proof}

Since $Q \neq Q',$ Lemma \ref{twirl1} gives

\begin{equation}
\sum_{i=1}^{4^n} P_i Q P_i \rho P_i Q' P_i = 0 
\label{use_prop}
\end{equation}

where $\{P_i\}$ is the set of all $n$-fold tensor products of the Pauli operators and the identity  $\{I,X,Y,Z\}$. But since $Q$ and $Q'$ have only identity and Pauli $X$ tensor elements the Pauli $X$ operators of $\{P_i\}$ commute with $Q$ and $Q'$ on each side and give identity. %
\end{proof}

\end{document}